\newif\iffull
\fulltrue

\documentclass[11pt]{article}

\usepackage{amsmath}
\usepackage{graphicx}
\usepackage[colorlinks=true, allcolors=blue]{hyperref}
\usepackage{amsfonts}
\usepackage{tcolorbox}
\usepackage{mathtools}
\usepackage{subfig}
\usepackage{adjustbox}
\usepackage{graphicx}
\usepackage{makecell}

\iffull
\usepackage{fullpage}
\usepackage{amsthm}
\newtheorem{theorem}{Theorem}
\newtheorem{definition}{Definition}
\newtheorem{lemma}{Lemma}
\newtheorem{corollary}{Corollary}
\renewcommand{\smallskip}{\medskip}
\else

\usepackage{amsthm}
\pagestyle{plain}
\fi

\iffull
\title{Cuckoo Hashing in Cryptography:\\ Optimal Parameters, Robustness and Applications}
\author{Kevin Yeo\thanks{Google and Columbia University. \texttt{kwlyeo@google.com}}}
\else
\title{Cuckoo Hashing in Cryptography: Optimal Parameters, Robustness and Applications}
\author{Kevin Yeo}
\institute{Google and Columbia University, \texttt{kwlyeo@cs.columbia.edu}}
\fi
\date{}

\newcommand{\negl}{\mathsf{negl}}
\newcommand{\poly}{\mathsf{poly}}

\newcommand{\ignore}[1]{}

\DeclarePairedDelimiter\ceil{\lceil}{\rceil}
\DeclarePairedDelimiter\floor{\lfloor}{\rfloor}

\newcommand{\calA}{\mathcal{A}}
\newcommand{\calC}{\mathcal{C}}
\newcommand{\calG}{\mathcal{G}}
\newcommand{\calT}{\mathcal{T}}
\newcommand{\calH}{\mathcal{H}}

\newcommand{\calD}{\mathcal{D}}
\newcommand{\calR}{\mathcal{R}}
\newcommand{\calO}{\mathcal{O}}

\newcommand{\polylog}{\mathsf{polylog}}

\newcommand{\st}{\mathsf{st}}

\newcommand{\E}{\mathbb{E}}

\newcommand{\res}{\mathsf{res}}
\newcommand{\indgame}{\mathsf{IndGame_{\calA}^{\eta}}}

\newcommand{\CH}{\mathsf{CH}}
\newcommand{\cnt}{\mathsf{cnt}}

\newcommand{\sample}{\mathsf{Sample}}
\newcommand{\construct}{\mathsf{Construct}}
\newcommand{\query}{\mathsf{Query}}
\newcommand{\add}{\mathsf{Insert}}
\newcommand{\delete}{\mathsf{Delete}}
\newcommand{\params}{\mathsf{prms}}
\newcommand{\encode}{\mathsf{Encode}}
\newcommand{\decode}{\mathsf{Decode}}
\newcommand{\init}{\mathsf{Init}}
\newcommand{\genschedule}{\mathsf{Schedule}}

\newcommand{\id}{\mathsf{id}}
\newcommand{\db}{\mathsf{DB}}

\newcommand{\bv}{\mathbf{v}}

\makeatletter
\def\blfootnote{\xdef\@thefnmark{}\@footnotetext}
\makeatother

\begin{document}
\maketitle

\begin{abstract}
Cuckoo hashing is a powerful primitive
that enables storing items using small space with efficient querying. At a high level,
cuckoo hashing maps $n$ items into $b$ entries
storing at most $\ell$ items such that each item
is placed into one of $k$ randomly chosen entries.
Additionally, there is an overflow stash
that can store at most $s$ items.
Many cryptographic primitives rely upon cuckoo hashing to privately embed and query data where it is integral to ensure small failure probability when constructing cuckoo hashing tables
as it directly relates to the privacy guarantees.

As our main result, we present a more query-efficient
cuckoo hashing construction using more hash functions.
For construction failure probability $\epsilon$,
the query overhead of our
scheme is $O(1 + \sqrt{\log(1/\epsilon)/\log n})$.
Our scheme has quadratically smaller query overhead than prior works for any
target failure probability $\epsilon$.
We also prove lower bounds matching our construction.
Our improvements come from a new understanding of
the locality of cuckoo hashing failures for
small sets of items.

We also initiate the study of robust cuckoo hashing where
the input set may be chosen with knowledge of the hash functions.
We present a cuckoo hashing scheme using more hash functions with query overhead
$\tilde{O}(\log \lambda)$
that is robust against $\poly(\lambda)$ adversaries.
Furthermore, we present lower bounds showing that this construction is tight and that extending previous approaches of large stashes or entries
cannot obtain robustness except with $\Omega(n)$ query overhead.

As applications of our results, we obtain improved constructions
for batch codes and PIR.
In particular, we present the most efficient
explicit batch code and blackbox reduction from single-query PIR to batch PIR.
\end{abstract}

\section{Introduction}
\label{sec:intro}
Cuckoo hashing, introduced by Pagh and Rodler~\cite{pagh2004cuckoo}, is
a powerful tool that enables embedding data from a very large
universe into memory whose size is linear in the total size
of the data while enabling very efficient retrieval. In more detail, the original cuckoo hashing
scheme enables
taking a set of $n$ items from a universe $U$ and stores them
into approximately $2n$ entries such that querying any item $x$
requires searching only two entries.
A huge advantage of cuckoo hashing is that the storage overhead is optimal up to a small constant factor and independent
of the universe size $|U|$ while query overhead
(the number of possible locations for any item) is $O(1)$. Due to the power of cuckoo hashing,
it has found usage in a wide range of applications such as high-performance hash tables~\cite{zhang2015mega,breslow2016horton}, databases~\cite{polychroniou2015rethinking}, caching~\cite{fan2013memc3}
and cuckoo filters~\cite{fan2014cuckoo}.
Furthermore, many follow-up works have studied further properties
and variants of cuckoo hashing including~\cite{devroye2003cuckoo,fotakis2005space,kutzelnigg2006bipartite,dietzfelbinger2007balanced,arbitman2009amortized,frieze2009analysis,kirsch2010more,fountoulakis2013insertion,aumuller2014explicit,minaud2020note}.

\smallskip\noindent{\bf Cuckoo Hashing in Cryptography.}
One important area where cuckoo hashing has found wide usage is
cryptography.
Cuckoo hashing is a core component of many cryptographic primitives such as
private information retrieval (PIR)~\cite{angel2018pir,demmler2018pir,ali2021communication},
private set intersection (PSI)~\cite{pinkas2015phasing,chen2017fast,chen2018labeled,pinkas2018efficient,pinkas2020psi}, symmetric
searchable encryption (SSE)~\cite{patel2019mitigating,bossuat2021sse}
and oblivious RAM (ORAM)~\cite{pinkas2010oblivious,goodrich2011privacy,patel2018panorama,asharov2020optorama,hemenway2021alibi}.

A common method used in cryptographic primitives is to leverage cuckoo hashing to {\em privately} embed data while enabling
efficient queries when necessary. For example, suppose
one party has a database of identifier-value pairs
$\{(\id_1,v_1),\ldots,(\id_n,v_n)\}$ that it wishes
to outsource to another potentially untrusted third party for storage. As the third party is untrusted, the data must
be outsourced in a private manner such that the third party cannot
see the data in plaintext. For utility, the data
owner should still be able to query and retrieve
certain values efficiently.
To do this, many works leverage cuckoo hashing with two
modifications. First, the underlying random hash functions
are replaced with cryptographic hash functions
(typically, pseudorandom functions).
Secondly, the contents of the resulting cuckoo hash tables
are encrypted in some manner such as standard IND-CPA encryption.
The keys for the cryptographic hash function and encryption
are typically kept by the data owner to ensure privacy.
To query, the data owner executes the
cuckoo hashing query algorithm using the private keys when necessary to retrieve all possible locations
for a queried item.
The above does not comprehensively cover all usages
of cuckoo hashing in cryptography, but was elaborated to provide readers with intuition on an example usage of cuckoo hashing
to privately embed and query data.

To our knowledge, all cryptographic applications use cuckoo hashing where all $n$ items are provided
ahead of time to construct a hash table that is not modified
in the future.
However, this does not preclude the usage of cuckoo hashing for
cryptographic primitives where data
may be updated frequently (such as oblivious RAM).
Throughout our work, we will focus
on the setting of constructing a static cuckoo hash table and ignore features that
enable inserting items.
See Section~\ref{sec:tables} for more discussion on our choice of abstraction.

\smallskip\noindent{\bf Failure Probability and Adversarial Advantage.}
Requirements of cuckoo hashing for usage in cryptography differs
significantly from other applications.
In cuckoo hashing, there is a non-zero
probability that it is impossible to allocate a set of $n$
items using the sampled random hash functions. We will
refer to this as the {\em construction failure probability}. For
standard cuckoo hashing, it was shown
that the failure probability is $1/\poly(n)$. Without privacy concerns,
one can simply
sample new random hash functions and repeat the construction
algorithm to handle the failure. As the failure probability is $1/\poly(n)$,
this would increase the expected running time
of construction by a minimal amount.

For cryptography, the failure probability has much larger
implications with respect to privacy and adversarial advantage.
Suppose we consider a cuckoo hashing instantiation
with failure construction probability $\epsilon$ over the randomly sampled hash function $\calH$. This means
that, if we choose a uniformly random set of $n$ items,
the set of $n$ items cannot be allocated
correctly according to $\calH$ with probability $\epsilon$.
In other words, an $\epsilon$-fraction of inputs
will behave differently from the remaining inputs. At a high
level, an adversary can leverage this property to compromise
privacy of the inputs when $\epsilon$ is too large.
Suppose the adversary picks two input sets $S_1$ and $S_2$.
An ideal cryptographic protocol would pick $\eta \in \{0,1\}$
at random and execute with $S_\eta$ as input. Given
the transcript of the protocol, the adversary
should not be able to guess $\eta$ with probability better than $1/2 + \negl(\lambda)$ probability.
If the adversary picks $S_1$ and $S_2$ at random,
it is not hard to see that exactly
one of $S_1$ or $S_2$ will fail to be allocated
by cuckoo hashing with probability approximately $O(\epsilon)$.
As the transcripts will be different when cuckoo hashing
fails to construct the hash table, the adversary
has advantage approximately $1/2 + \epsilon$ of guessing
$\eta$. So, it is essential that the construction failure probability $\epsilon$ is negligible to ensure privacy of the embedded data.
In fact, prior works have shown insecurity
of protocols when $\epsilon = 1/\poly(n)$ such as~\cite{kushilevitz2012security}.
Therefore,
standard cuckoo hashing with $1/\poly(n)$ failure cannot be used in most cryptographic
applications.
Instead, we must come up with new cuckoo hashing schemes with
negligible failure for usage in cryptography.

\smallskip\noindent{\bf Cuckoo Hashing with Negligible Failure.}
To systematically study cuckoo hashing, we will consider several
tunable parameters: number of hash functions $k$, number of entries $b$,
size of each entry $\ell$ and size of the overflow stash $s$.
The main table will consist of $b$ entries that can each store up to $\ell$ items.
Each item is randomly assigned to $k$ different entries using the hash functions where the item may be allocated.
Additionally, there is an overflow stash of size $s$ to store
any items that cannot fit in the main table.
We denote the {\em query overhead} as the total number of possible
locations that need to be checked when querying an item.
With these parameters, the query overhead is exactly $k\ell + s$
that checks all $\ell$ locations in each of the $k$ entries as well
as the stash.
To our knowledge, the above encapsulates
all variants of cuckoo hashing used in cryptography.

Obtaining negligible (or even zero) failure in cuckoo hashing is trivial.
For example, if we set the stash size to be $s = n$, no failures will
ever occur. The real challenge is obtaining negligible failure while
keeping small query overhead that directly relates to the efficiency
of the cryptographic application. While the above example
obtains zero failure, the resulting query overhead is
$k\ell + s = O(n)$.

It is an important question to study the query overhead of cuckoo hashing with negligible failure
due to its heavy usage in cryptographic primitives.
To date, there are two constructions that obtain
the smallest query overhead.
The first is cuckoo hashing with
a large stash introduced by Kirsch, Mitzenmacher and Wieder~\cite{kirsch2010more}. By picking stash size $s = O(1 + \log(1/\epsilon)/\log n)$, it can be proven that cuckoo hashing
will fail only with $\epsilon$ probability~\cite{aumuller2014explicit}.
Another approach considers entries
that can store a large number of items $\ell$. For $\ell = O(1 + \log(1/\epsilon)/ \log n)$, it has been
shown that failure
probability $\epsilon$ may be achieved by Minaud and Papamanthou~\cite{minaud2020note}.
A final approach
considers standard cuckoo hashing that utilizes large-scale experiments to estimate the failure probability~\cite{chen2017fast,angel2018pir} without
providing any provable failure bounds.
The above approaches all obtain negligible failure probabilities in different ways. It is unclear that the above approaches
are most efficient way to obtain negligible failure leading us to the following question:
\begin{center}
{\em What is the smallest query overhead achievable with provably\\ negligible failure probability for cuckoo hashing?}
\end{center}
\noindent

As our major result, we present
an approach with provable failure bounds that is quadratically smaller query overhead than previous constructions. We also prove lower bounds matching our construction.

\begin{figure*}[tb]
\centering
\begin{adjustbox}{width=1.25\linewidth, center}
\begin{tabular}{| l | c | c | c | c | c | c |}
        \hline
       & \makecell{\bf Hash Functions $k$} & \makecell{\bf Entry Size $\ell$} &  \makecell{\bf Entries $b$} & \makecell{\bf Stash Size $s$} & \makecell{\bf Failure $\epsilon$} & \makecell{\bf Query Overhead}
\\ \hline
    \makecell[l]{Cuckoo Hashing~\cite{pagh2004cuckoo}} & $2$ & $1$ & $O(n)$ & 0 & $1/n^{O(1)}$ & $O(1)$ \\ \hline
    \makecell[l]{Large-Sized Entries~\cite{dietzfelbinger2007balanced}} & $2$ & $O(1)$ & $(1+\alpha)n/\ell$ & $0$ & $1/n^{O(1)}$ & $O(1)$\\ \hline
    \makecell[l]{Large-Sized Entries~\cite{minaud2020note}} & $2$ & $O(1 + \log(1/\epsilon)/\log n)$ & $O(n/\ell)$ & $0$ & $\epsilon$ & $O(1 + \log(1/\epsilon)/\log n)$\\ \hline
    \makecell[l]{Constant-Sized Stash~\cite{kirsch2010more}} & $2$ & $1$ & $O(n)$ & $O(1)$ & $1/n^{O(s)}$ & $O(1)$ \\ \hline
    \makecell[l]{Large-Sized Stash~\cite{aumuller2014explicit}} & $2$ & $1$ & $O(n)$ & $O(1 + \log(1/\epsilon) / \log n)$ & $\epsilon$ & $O(1 + \log(1/\epsilon)/\log n)$\\ \hline
    \makecell[l]{More Hash Functions~\cite{fotakis2005space}} & $O(1 + \log(1/\epsilon)/\log n)$ & $1$ & $O(n)$ & $0$ & $\epsilon$ & $O(1 + \log(1/\epsilon)/\log n)$\\ \hline
    \hline
    
    \makecell[l]{Our Work} & $\boldsymbol{O(1 + \sqrt{\log(1/\epsilon)/\log n})}$ & $1$ & $O(n)$ & $0$& $\epsilon$ & $\boldsymbol{O(1 + \sqrt{\log(1/\epsilon)/\log n})}$\\ \hline
\end{tabular}
\end{adjustbox}
\caption{Comparison table of known cuckoo hashing instantiations.
The query overhead $k\ell + s$ is the number of locations to search when retrieving an item.
}
\label{table:ch_compare}
\iffull
\else
\vspace*{-6mm}
\fi
\end{figure*}

\smallskip\noindent{\bf Adversarial Robustness.}
In the prior discussions, we overlooked
a subtle, but important, assumption
used in cuckoo hashing.
The failure probabilities assumed that the chosen inputs
were independent of the sampled
random hash functions.
Instead, if we assume
that an adversary is given the
hash functions to choose the input
set for cuckoo hashing, the
previous
construction failure bounds
no longer apply.
In many settings, it is natural
that the adversary has knowledge
of the random hash functions.
Two such examples include
if the adversary may view or control
the randomness in the system or
if the hash functions need to be published publicly for use by multiple parties.
\ignore{
For example, the latter is a requirement
of re-usable batch PIR schemes (see Section~\ref{sec:reusable_batch_pir} for more details).}

We initiate the study of 
{\em robust cuckoo hashing}
that
provide negligible construction failure
probabilities for
inputs chosen adversarially with knowledge of the
hash functions.
This
leads to the following natural question:
\begin{center}
{\em What is the smallest query overhead achievable for robust cuckoo hashing?}
\end{center}
In our work, we present
constructions for robust cuckoo hashing
with optimal query overhead that match our lower bounds.

\subsection{Our Contributions}

\noindent{\bf Improved Query and Failure Trade-offs.}
As our major result, we present
a new cuckoo hashing construction
that achieves better trade-offs between query overhead
and failure probabilities.
To obtain failure probability
$\epsilon$, we prove it suffices to
use $k = O(1 + \sqrt{\log(1/\epsilon)/\log n})$
hash functions with $b = O(n)$ entries of size $\ell = 1$
and no stash, $s = 0$.
Therefore, the resulting query overhead is
$O(1 + \sqrt{\log(1/\epsilon)/\log n})$.
If we restrict to the case when $\epsilon \le 1/n$, we get that $\log(1/\epsilon) \ge \log n$.
Then, we can drop the case when $\epsilon$ is too large
and simply state that the number of hash functions and query overhead is $O(\sqrt{\log(1/\epsilon)/\log n})$.
We will assume $\epsilon \le 1/n$ through the rest of the section for convenience.

For any target failure probability $\epsilon \le 1/n$,
our construction requires quadratically smaller query overhead
than any prior known scheme.
To date, the best query overhead achievable was
$O({\log(1/\epsilon)/\log n})$
of cuckoo hashing schemes
instantiated with large stashes~\cite{aumuller2014explicit}
or larger entries~\cite{minaud2020note}.
We also prove matching lower bounds
showing that our construction is optimal and provides
the smallest query overhead across all possible
instantiations using the four considered parameters.
Our results
and comparisons to prior work
can be found in Figure~\ref{table:ch_compare}.

\begin{figure*}[tb]
\centering
\begin{adjustbox}{width=1.1\linewidth, center}
\begin{tabular}{| l | c | c | c | c | c | c |}
        \hline
       & \makecell{\bf Hash Functions $k$} & \makecell{\bf Entry Size $\ell$} &  \makecell{\bf Entries $b$} & \makecell{\bf Stash Size $s$} & \makecell{\bf Robustness} & \makecell{\bf Query Overhead}
\\ \hline
    \makecell[l]{Our Work} & $\boldsymbol{O(f(\lambda)\log \lambda), f(\lambda) = \omega(1)}$ & $1$ & $O(n)$ & $0$ & $(\lambda, \negl(\lambda))$ & $\boldsymbol{O(f(\lambda)\log \lambda)}$ \\ \hline\hline

   \makecell[l]{Our Work} &
   $\boldsymbol{\omega(\log \lambda)}$ & $o(n)$ & $O(n/\ell)$ & $o(n)$ & $(\lambda, 1/2)$ & $\boldsymbol{\omega(\log \lambda)}$\\ \hline
   \makecell[l]{Our Work} &
   $O(\log \lambda)$ & $\boldsymbol{\Omega(n)}$ & $O(n/\ell)$ & $o(n)$ & $(\lambda, 1/2)$ & $\boldsymbol{\Omega(n)}$\\ \hline
    \makecell[l]{Our Work} &
   $O(\log \lambda)$ & $o(n)$ & $O(n/\ell)$ & $\boldsymbol{\Omega(n)}$& $(\lambda, 1/2)$ & $\boldsymbol{\Omega(n)}$\\ \hline
\end{tabular}
\end{adjustbox}
\caption{A table of bounds for cuckoo hashing parameters where $(\lambda,\epsilon)$-robustness
means that any adversary running in probabilistic $\poly(\lambda)$ time cannot cause a construction failure with probability greater than $\epsilon$.}
\label{table:robust_ch_compare}
\iffull
\else
\vspace*{-6mm}
\fi
\end{figure*}

\smallskip\noindent{\bf Robust Cuckoo Hashing.}
We also study the best query overhead
achievable for robust cuckoo hashing.
We say cuckoo hashing is $(\lambda,\epsilon)$-robust if any
adversary running in $\poly(\lambda)$ time cannot find
an input set that will incur a construction failure
with probability greater than $\epsilon$.
We present a $(\lambda, \negl(\lambda))$-robust construction with
$k = O(f(\lambda) \log \lambda)$ hash functions, 
for any super-constant function $f(\lambda) = \omega(1)$, with $b = O(n)$
entries of size $\ell = 1$ and no stash, $s = 0$.
Therefore, our construction has query overhead
$O(f(\lambda) \log\lambda)$.
For polynomial time adversaries with $\lambda = n$,
the resulting query overhead is essentially $\tilde{O}(\log n)$.

We also prove matching lower bounds showing our construction
is optimal. If we restrict ourselves to sub-linear query overhead constructions with sub-linear sized entries and stash, $s = o(n)$ and
$\ell = o(n)$, we show that the number of hash functions must be
$\omega(\log \lambda)$.
The above also shows that the only way to obtain sub-linear query overhead
is using a large number of hash functions. For example, any
instantiation with $k = O(\log \lambda)$ requires that $s + \ell = \Omega(n)$ meaning that query overhead is linear.
All our results are summarized
in Figure~\ref{table:robust_ch_compare}.

\smallskip\noindent{\bf Applications.}
Using our new cuckoo
hashing constructions with large number of hash functions, we present improved
constructions for several primitives:
\begin{itemize}
    \item {\em Probabilistic Batch Codes (PBC).} PBCs are a primitive that
    aim to encode a database $D$ of $n$ entries into $m$ buckets such that any subset
    of $q$ entries may be retrieved efficiently by accessing at most
    $t$ codewords from each bucket.
    The total number of codewords is denoted by $N$ and the goal is to maximize the rate $n/N$ and to keep
    the buckets $m$ as close to $q$ as possible. Utilizing our efficient
    cuckoo hashing scheme, we obtain a PBC with rate $O(\sqrt{\lambda/\log\log n})$, $m = O(q)$ and $t=1$
    with error probability $2^{-\lambda}$. Our construction has quadratically better rate than prior provable PBCs.
    \item {\em Robust PBCs.} We also initiate the study of {\em robust PBCs}
    where the subset of $q$ entries may be chosen by an adversary with
    knowledge of the system's randomness. By leveraging our robust cuckoo
    hashing constructions, we obtain robust PBCs with rate $O(\log\lambda)$
    while ensuring that a PPT adversary cannot find an erring input
    except with probability $\negl(\lambda)$. To our knowledge, this is the most efficient explicit robust PBC to date.
    All prior perfect PBCs with zero error are either less efficient
    or are non-explicit.
    \item {\em Single-Query to Batch PIR.}
    Private information retrieval (PIR) considers the setting where a client wishes to privately retrieve an entry
    from an array stored on a server. Batch PIR is the extension where the
    client wishes to retrieve a subset of $q$ entries at once. A standard
    way to build batch PIR is to compose a single-query PIR with a (probabilistic) batch code (see~\cite{IKO04,angel2018pir}). To our knowledge, these compositions are the most concretely
    efficient ways to build batch PIRs to date. Using our new PBCs, we obtain
    a more asymptotically efficient blackbox reduction from single-query PIR to batch PIR.
    \item {\em Re-usable Batch PIR.}
    We also consider re-usable batch PIR where the server must efficiently
    handle multiple sequential queries. The standard reduction from single-query to batch PIR requires the server to encode a database
    using a PBC where the construction failure probability becomes
    the error probability. In the case of multiple queries (i.e. re-usable protocols),
    the hash functions must be made public to all parties. As a result,
    adversarially chosen queries can be made to fail with very high probability.
    To solve this problem, we utilize robust PBCs to construct a re-usable batch PIR that guarantees
    a PPT adversary will be unable to find erring query subsets.
    \item {\em Other Applications.}
    We also show that our new cuckoo hashing schemes
    may also be applied to improve other primitives including
    private set intersection, encrypted search, vector oblivious linear evaluation and batch PIR with private preprocessing.
\end{itemize}

\subsection{Related Works}

\noindent{\bf Cuckoo Hashing Variants.}
Following the seminal work of Pagh and Rodler~\cite{pagh2004cuckoo},
there have been many follow-up works studying cuckoo hashing. We will focus on
the variants that enable negligible failure probability.
Cuckoo hashing where entries can store $\ell > 1$ items was studied
in~\cite{dietzfelbinger2007balanced,kirsch2010more,minaud2020note}.
A variant of cuckoo hashing with an overflow stash
that may store $s \ge 1$ items was studied in~\cite{kirsch2010more,aumuller2014explicit,minaud2020note,cryptoeprint:2021/447}.
Finally, cuckoo hashing with $k > 2$ hash functions
was studied in~\cite{fotakis2005space,dietzfelbinger2010tight}.

\smallskip\noindent{\bf Cuckoo Hashing in Cryptography.}
Many prior works have relied heavily upon
cuckoo hashing to build many primitives including
private information retrieval (PIR)~\cite{angel2018pir,demmler2018pir,ali2021communication},
private set intersection (PSI)~\cite{pinkas2015phasing,chen2017fast,chen2018labeled,pinkas2018efficient,mohassel2020fast,pinkas2020psi,duong2020catalic}, symmetric
searchable encryption (SSE)~\cite{patel2019mitigating,bossuat2021sse}, oblivious RAMs~\cite{goodrich2011privacy,patel2018panorama,asharov2020optorama,hemenway2021alibi},
history-independent data structures~\cite{naor2008history}
and hardness-preserving reductions~\cite{berman2019hardness}.
We also point readers to references therein for more prior works.

\smallskip\noindent{\bf Adversarial Robustness.}
Similar notions of
adversarial robustness
has been studied in prior works
outside of cuckoo hashing
where it is assumed the adversary has knowledge
of the system's randomness.
Some examples of prior works include sketching~\cite{mironov2011sketching,hardt2013robust},
streaming~\cite{ben2020framework},
probabilitic data structures~\cite{naor2015bloom,clayton2019probabilistic,filic2022adversarial}
and property-preserving hash functions~\cite{boyle2019adversarially,fleischhacker2021property,holmgren2022nearly}.

\smallskip\noindent{\bf Other Hashing Schemes.}
We note that there are other schemes beyond cuckoo hashing
to obtain efficient allocations. For example, the
``power-of-two'' choice paradigm inserts items into the
least loaded of two bins~\cite{azar1994balanced,richa2001power}.
A variant with a stash to obtain negligible failure
was presented in~\cite{patel2019storage}.
Examples of other schemes include simple tabulation~\cite{patracscu2012power} and
multi-dimensional balanced allocations~\cite{asharov2016searchable}.

\section{Technical Overview}
\label{sec:overview}
In this section, we present a technical overview
for our improved cuckoo hashing constructions.
To start, we consider a failed modification 
that provides insights into cuckoo hashing failures.
Equipped with these insights, we derive both our improved constructions and matching lower bounds.

\smallskip\noindent{\bf Insights from a Failed Construction.}
A standard instantiation of cuckoo hashing used in cryptography
including oblivious RAM~\cite{patel2018panorama,asharov2020optorama} and encrypted search~\cite{patel2019mitigating} uses a stash of size $s = O(1 + \log(1/\epsilon)/\log n)$, $k = 2$ hash functions and $b = O(n)$ entries of size $\ell = 1$
to obtain $\epsilon$ failure probability.
A straightforward modification may be to try and recursively apply cuckoo hashing
on the overflow items in the stash of logarithmic size. In particular, we can try
to build a second cuckoo hashing table with $b = O(n)$ entries of size $\ell = 1$ and $k = 2$ hash functions for all items in the stash. This seems like
a simpler task as there are only a logarithmic number of items to
place in a table with $O(n)$ entries. If the resulting stash becomes zero, we would make the query overhead to be $O(1)$ as only $2k$ locations need to be checked. Even if the resulting stash was smaller, we could apply this technique recursively to obtain smaller query overhead. Unfortunately, this approach is unsuccessful as the second
table also requires the stash size to be $s = O(1 + \log(1/\epsilon)/\log n)$
if we re-do the failure analysis in~\cite{aumuller2014explicit}.

However, we develop the following insights from the failed construction.
The failure of cuckoo hashing seems to be fairly localized and only depend
on a small sets of items. We observe that handling
the $O(1 + \log(1/\epsilon)/\log n)$ overflow items in the second table
required
preparing for a similar number of overflows as the first table with $n$ items. On the positive side,
once a cuckoo hashing construction can handle overflows for a small set of items, it seems
that this will scale to a large set of items easily.
In other words, if one designs
a cuckoo hashing scheme that can handle small sets of items, that scheme should also be able
to handle larger sets of items.
We see this as the first table
requires the same asymptotic stash size for $n$ items as the second table for a logarithmic number of items.
We rely on these insights for our constructions and lower bounds.

\smallskip\noindent{\bf Our Improved Construction.}
To obtain our improved construction, we will heavily utilize our observation
that cuckoo hashing failures are very local to sets of small sizes.
We need to find a way to ensure that small sets of items will not cause significant
overflows requiring large entries or stashes. One way to do this is to
split the table of $b$ entries into $k$ disjoint sub-table of $b/k$ entries.
Each of the $k$ hash functions will be used to pick a random entry from each
of the $k$ sub-tables. Using the approach, we deterministically
guarantee that any set of $k\ell$ items will
never cause an overflow. In the worst case, all $k\ell$ items will hash to the same
$k$ entries. However, there are $k\ell$ locations to allocate all $k\ell$ items.

By considering larger values of $k$ and/or $\ell$, we could increase the size of
the sets that are handled deterministically by disjoint sub-tables.
To figure out which is the better choice,
we consider one more item that needs to be allocated even if the prior $k$ items were allocated to the same $k$ entries. This item would cause an overflow
only if it was also allocated to the same $k$ entries that occurs with probability at most
$(1/(b/k))^k = (k/b)^k$ that decreases exponentially
for larger values of $k$. Therefore, a promising approach seems to be use
$k$ disjoint sub-tables along with a larger number of hash functions $k$.

To analyze the failure probability of our new construction, we utilize prior
connections (used in~\cite{fotakis2005space,aumuller2014explicit} for example) between the cuckoo hashing and matchings in biparite graphs.
At a high level, we construct a bipartite graph with $n$ left nodes representing
the $n$ inserted items and $b\ell + s$ right nodes representing table locations
in the $b$ entries of size $\ell$ and the stash. There is an edge if and only
if an item may be allocated to the potential location as denoted by the $k$ hash functions.
We note that each left node has degree exactly $k\ell + s$ corresponding to the
$k$ assigned entries by the $k$ hash functions and the overflow stash.
Finally, we note that there exists an allocation of the $n$ items as long
as there exist a perfect left matching in the bipartite graph.
To analyze the existence of a perfect left matching, we utilize
Hall's Theorem that states such a perfect left matching exists if and only if
every the neighborhood of every subset of left nodes contains at least as many
right nodes. Using this analysis, we are able to show that $k = O(1 + \sqrt{\log(1/\epsilon)/\log n})$
hash functions, $b = O(n)$ entries of size $\ell = 1$ and no stash $s = 0$ is
sufficient to obtain $\epsilon$ failure probability.
As a result, this construction has quadratically smaller query overhead
than prior instantiations.

We also note that prior work by Fotakis {\em et al.}~\cite{fotakis2005space}
had studied cuckoo hashing with many hash functions.
However, they considered a single shared table where each of the $k$
hash functions picks any of the $b$ entries at random.
Their analysis showed that $k = O(1 + \log(1/\epsilon)/\log n)$ hash functions were
required that we will later
show is necessary for a single shared table.

\smallskip\noindent{\bf Deriving a Lower Bound.}
Next, we attempt to derive a lower bound using the observation that it seems equally challenging
to handle a small number of items as it is to handle a large number of items.
By looking at a small number of items, we can prove a very simple lower bound. Consider any instantiation
with $k \ge 1$ hash functions, $\ell \ge 1$ entry sizes, $b = O(n)$ entries and
a overflow stash of $s \ge 0$ items. First, we consider the case where we consider
$k$ disjoint sub-tables as done in our construction.
We analyze the probability that $k\ell + s + 1$ items
are allocated to the same $k$ entries. This would mean that $s + 1$ items are assigned
to the stash that would incur a construction failure. After doing exercises in probability,
one can obtain the following lower bound of $k^2 \ell + ks = \Omega(\log(1/\epsilon)/\log n)$.
Immediately, we see that our above construction with $k = O(\sqrt{\log(1/\epsilon)/\log n})$ hash functions and
$\ell = 1$ and $s = 0$ is optimal.

Finally, we can also see that it is critical to utilize disjoint sub-tables to
systematically handle smaller sets of items. By re-doing the above analysis
for a single shared table
where the $k$ hash functions pick any of the $b$ entries at random, we can immediately
obtain the lower bound $k = \Omega(\log(1/\epsilon)/\log n)$ when $b = O(n)$ and $s = \ell = O(1)$. In other words, the analysis in~\cite{fotakis2005space} for the same setting of a single, shared table
is optimal.

\smallskip\noindent{\bf Extending to Robustness.}
Lastly, we extend our results to the case of robust cuckoo hashing where
an adversary may have knowledge of the underlying randomness and hash functions
used in the cuckoo hashing scheme. In particular,
we want cuckoo hashing instantiations with small failure even if the adversary
chooses bad input sets using the hash functions. 

For our construction, we use the same insights to use a large number of hash functions $k$
along with $k$ disjoint sub-tables. We modify the analysis for robustness. We start
with an adversary that is limited to at most $Q$ evaluations of the underlying hash functions.
Without loss of generality, we also assume the $n$ submitted items by the adversary
were also evaluated meaning at most $Q + n$ hash evaluations are performed.
Afterwards, we consider the same bipartite graph where the left nodes consists of
the at most $Q + n$ items for which the adversary knows the hash evaluations.
Using a similar analysis, we analyze the probability that any subset of left nodes
would violate the requirements of Hall's theorem. We get
that $k = O(\log(Q/\epsilon))$ hash functions is sufficient to obtain
$\epsilon$ failure probability when $b = O(n)$, $\ell = 1$ and $s = 0$.
For $\poly(\lambda)$ time adversaries, we extend this to show that $k = O(f(\lambda)\log\lambda + \log(1/\epsilon))$, for any $f(n) = \omega(1)$, suffices
to obtain $\epsilon$ failure probability.
In other words, there exists a robust cuckoo hashing instantiation with roughly
logarithmic query overhead.

One may notice that the above analysis may seem too pessimistic. It computes the probability whether
there exists a subset amongst the $Q + n$ items that would violate Hall's theorem. However,
finding such a violating set is non-trivial and may not be efficiently computable by an adversary.
Nevertheless, we show that this construction is optimal by presenting a matching query overhead lower bound $k = \omega(\log \lambda)$ for robust cuckoo hashing against $\poly(\lambda)$ adversaries.
To do this, we show it suffices to consider an adversary
that simply attempts to find a set of $n$ items that all allocate into the first half
of each of the $k$ disjoint sub-tables.
Restricting to sub-linear query overhead with sub-linear sized entries
and stashes, $\ell = o(n)$ and $s = o(n)$,
we show that $k = \omega(\log \lambda)$ hash functions is necessary. In other words,
if the number of hash functions is even slightly smaller at $k = O(\log n)$, then it must be that $\ell + s = \Omega(n)$ meaning that query
overhead must be linear.
Therefore, the only way to obtain efficient query overhead
for robust cuckoo hashing is a large number of hash functions.

\section{Definitions}
\label{sec:def}
\subsection{Random Hash Functions}
\label{sec:random_functions}

We start by presenting definitions of the
various random hash functions that may
be utilized by our cuckoo hashing instantiations.

\begin{definition}[$t$-Wise Independent Hash Functions]
Let $\calH$ be a family of hash functions where
$H: \calD \rightarrow \calR$ for every hash function $H \in \calH$. $\calH$ is a $t$-wise independent hash
family if, for any distinct $x_1,\ldots,x_t \in \calD$ and any $y_1,\ldots,y_t \in \calR$, the following is true:
$$
\Pr[H(x_1) = y_1, \ldots, H(x_t) = y_t] = {|\calR|^{-t}}.
$$
\end{definition}

Next, we define a variant of hash families
that behave like a $t$-wise independent hash family
on every set of $t$ inputs except with probability
$\epsilon$. In other words, we can treat such hash families
as $t$-wise independence except with probability $\epsilon$
over the choice of the hash function (not the input set).

\begin{definition}[$(t,\epsilon)$-Wise Independent Hash Functions]
Let $\calH$ be a family of hash functions where
$H: \calD \rightarrow \calR$ for every hash function $H \in \calH$. $\calH$ is a $(t,\epsilon)$-wise independent hash
family if, for any distinct $S = \{x_1,\ldots,x_t\} \in \calD$
and any $y_1,\ldots,y_t \in \calR$, the following is true:
\begin{itemize}
    \item There exists an event $E_S$ such that $\Pr[\overline{E_S}] \le \epsilon$.
    \item $\Pr[H(x_1) = y_1, \ldots, H(x_t) = y_t : E_S] = {|\calR|^{-t}}$.
\end{itemize}
\end{definition}

These weaker variants of hash functions
have been studied in the past (such as~\cite{pagh2008uniform,aumuller2014explicit}). It has been
shown they can be constructed using lower independence
hash functions and faster evaluation.
Regardless for both $t$-wise and $(t,\epsilon)$-wise, for usable regimes of $\epsilon$, independent hash
functions require $\Omega(t)$ storage
to represent a random hash function from the
hash family.

The cost of storage for explicit random
hash functions is large. Instead, we can resort
to cryptographic assumptions to
obtain random functions.
The above definition of $(t,\epsilon)$-wise independent
hash functions considers the failure event $E_S$
from a statistical viewpoint. Instead, we could also consider
computational asssumptions
where we can use pseudorandom functions (PRFs)
that are indistinguishable from random
functions and may be represented succinctly.

\begin{definition}[Pseudorandom Functions]
For a security parameter $\lambda$, a deterministic function $F: \{0,1\}^s \times \{0,1\}^n \rightarrow \{0,1\}^m$ is a $\lambda$-PRF if:
\begin{itemize}
    \item Given any $k \in \{0,1\}^s$ and $x \in \{0,1\}^n$, there exists a polynomial time algorithm $A$ to compute $F(k, x)$.
    \item For any PPT adversary $\calA$,
    $
    \left|\Pr\left[\calA^{F(k, \cdot)}(1^\lambda) = 1 \right] - \Pr\left[\calA^{R(\cdot)}(1^\lambda) = 1 \right] \right| \le \negl(\lambda)
    $
    where $k$ is drawn randomly from $\{0,1\}^s$
    and $R$ is a random function from $\{0,1\}^n \rightarrow \{0,1\}^m$.
\end{itemize}
\end{definition}

We note that
PRFs can be viewed as $(\poly(\lambda),\negl(\lambda))$-wise independent
hash functions against all
PPT adversaries. For
$t = \poly(\lambda)$, a $\lambda$-PRF
is a $(t, \negl(\lambda))$-wise independent hash function
in the view of a PPT adversary.
This is stronger as it applies to any
$t = \poly(\lambda)$ as opposed
to a fixed $t$.

Finally, we can also consider the ideal random oracle model (ROM) where there is an oracle
$\calO$ that always output random values $\calO(x)$
for each input $x$. For any repeated inputs, $\calO$
returns the same consistent output.

\begin{definition}[Random Oracle Model]
In the random oracle model (ROM), there exists an
oracle $\calO$ such that for new input $x$, $\calO(x)$ is uniformly random. For any repeated input, the same output
$\calO(x)$ is returned.
\end{definition}

\noindent{\bf Choosing the Hash Function.} Throughout our work, we will assume that the hash function outputs
are random.
To do this,
we can
choose to instantiate our hash functions
using any of the above options.
The main differences are that
$t$-wise and $(t,\epsilon)$-wise independence
require larger storage while
PRFs and ROM require stronger underlying assumptions. In any of our results, we can
switch between the above choices
using different assumptions and storage costs.

\subsection{Hashing Schemes}
We start by defining the notion of hashing 
schemes. At a high level, the goal of a hashing
scheme is to take $n$ identifier-value pairs
$\{(\id_1,v_1),\ldots,(\id_n, v_n)\}$ with $n$
distinct identifiers from a potentially large identifier
universe $U^\id$ and allocate them into
a hash table $T$ whose size
depends only on $n$ and not the universe size $|U^\id|$.
Throughout the rest of our work, we will refer
to an identifier-value pair $(\id, v)$ as an {\em item}.
Furthermore, the hash table $T$ should enable
efficient queries for any identifier $\id \in U^\id$.
We will focus
on the setting of constructing
hash tables from an input data set
as this closely corresponds to the
usage of cuckoo hashing in cryptography.

\begin{definition}[Hashing Schemes]
A hashing scheme for size $n$, identifier universe $U^\id$ and value universe $U^V$ consists
of the following efficient algorithms:
\begin{itemize}
    \item $\calH \leftarrow \sample(1^\lambda)$: A sampling algorithm that is given a security parameter $\lambda$ as input and returns
    a set of one or more hash functions $H$.
    \item $T \leftarrow \construct(H, X)$: A construct algorithm that is given the set of hash functions $\calH$ and a set $X = \{(\id_1,v_1), \ldots, (\id_n,v_n)\} \subseteq U^\id \times U^V$ of items such that $\id_i \ne \id_j$ for all $i \ne j \in [n]$ and returns
    a hash table $T$ allocating $X$ or $\perp$ otherwise.
    \item $v \leftarrow \query(H, T, \id)$: A query algorithm that is given the set of hash functions $\calH$, the hash table $T$ and an identifier $\id$ and returns a value $v$ if $(\id, v) \in X$
    or $\perp$ otherwise.
\end{itemize}

\ignore{
Additionally, hashing schemes may optionally
enable the following efficient algorithms:
\begin{itemize}
    \item $T' \leftarrow \add(\calH, T, (\id, v))$: An insert algorithm that is given the set of hash functions $\calH$, the hash table $T$ and an item $(\id, v) \in U^\id \times U^L$ as input and returns an updated hash table $T'$ with $(\id, v)$ added or
    $\perp$ otherwise.
    \item $T \leftarrow \delete(\calH, T, \id)$: A delete algorithm that is given
    the set of hash functions $\calH$, the hash table $T$ and an identifier $\id \in U^\id$ as input and returns an updated hash table $T'$ with $(\id, v)$ removed or
    $\perp$ otherwise.
\end{itemize}}
\end{definition}

As mentioned earlier, we 
consider $\construct$ that enables a hashing
scheme to get all items $X$ that need to be
allocated at once. We choose to focus on this setting as it more closely aligns to the usage
of cuckoo hashing in cryptographic applications
where one party encodes
their entire input using cuckoo hashing.

Next, we move onto the definition of error probabilities for
hashing schemes. We will focus on the
notion of {\em construction error probabilities} that measures
the probability that a set $X$ of identifier-value pairs cannot be constructed into a hash table
according to the public parameters and the sampled set of hash functions
over the randomness of the sampling and construct algorithms.
We emphasize that this definition assumes that the input set
$X$ is chosen independent of the sampled hash functions
(see Section~\ref{sec:robust_def}
for stronger definitions).

\begin{definition}[Construction Error Probability]
A hashing scheme
for size $n$, identifier universe $U^\id$ and value universe $U^V$ has construction
error probability $\epsilon$ if, for any set of items $X = \{(\id_1,v_1),\ldots,(\id_n,v_n)\} \subseteq U^\id \times U^V$ such that $\id_i \ne \id_j$ for all $i \ne j \in [n]$,
the following holds:
$$
\Pr[\construct(H, X) =\ \perp : H \leftarrow \sample(1^\lambda)] \le \epsilon. 
$$
\end{definition}

In cryptography, the typical requirement for $\epsilon$ would be to be negligible in
the input size $n$. However, in many constructions, cuckoo hashing
is used as a sub-system on a smaller subset of the input size (for example,
subsets of $\log n$ size). For these settings, $\epsilon$ is still required
to be negligible in $n$ even though the cuckoo hashing scheme considers
significantly less than $n$ items. Therefore, the above definition considers generic error probability $\epsilon$ as
there are various settings where the
error probability may need to be much smaller than negligible.

Throughout our paper, we will consider {\em perfect construction algorithms}.
A perfect construction algorithm will always be able to find a
successful allocation for the input set $X$ if at least one such allocation exists.
The main benefit of perfect construction algorithms is that construction
failures only occur if the set of sampled hash functions $\calH$ does not emit a proper
allocation for the input set $X$.
Even with this restriction, we obtain asymptotically optimal
query overhead that match our lower bounds.
We formally define these algorithms below:

\begin{definition}[Perfect Construction Algorithms]
A hashing scheme
for size $n$, identifier universe $U^\id$ and value universe $U^V$ has a perfect construction algorithm $\construct$
if, for any set of items $X = \{(\id_1,v_1),$ $\ldots,$ $(\id_n,v_n)\} \subseteq U^\id \times U^V$ such that $\id_i \ne \id_j$ for all $i \ne j \in [n]$, the following holds:
\[
\Pr\left[\construct(\calH, X) \ne\ \perp : 
\begin{array}{l}\calH \leftarrow \sample(1^\lambda)\\
\exists T \text{ a successful allocation of $X$ according to $\calH$}\\
\end{array}\right] = 1.
\]
\end{definition}

Next, we also
consider query error probability.
Throughout our work, we will
only consider hashing schemes
with zero query error
probability. In other words, if the construction algorithm succeeds, every query will always be correct.

\begin{definition}[Query Error Probability]
A hashing scheme
for size $n$, identifier universe $U^\id$ and value universe $U^V$ has query
error probability $\epsilon_q$ if, for any set of items $X = \{(\id_1,v_1),\ldots,(\id_n,v_n)\} \subseteq U^\id \times U^V$ such that $\id_i \ne \id_j$ for all $i \ne j \in [n]$ and for any query $\id \in U^\id$,
    \[
\Pr\left[\query(\calH, T, \id) \ne v_{\id}: 
\begin{array}{l}\calH \leftarrow \sample(1^{\lambda})\\
T \leftarrow \construct(\calH, X), 
T \ne \perp\\
\end{array}\right] \le \epsilon_q
\]
where $v_\id = v_q$ if $\id = \id_q$ or $v_\id =\ \perp$ otherwise.
\end{definition}

\subsection{Robust Hashing Schemes}
\label{sec:robust_def}

In the prior section, we defined the construction error probability
with respect to input sets $X$ of identifier-value pairs that are
chosen independently of the sampled hash functions. We define
the notion of adversarially robust hashing schemes where an adversary
is given the sampled hash functions $\calH$ and aims to produce
a set $X$ of identifier-value pairs that will fail to allocate.

\begin{definition}[$(Q, \epsilon)$-Robust Hashing Schemes]
A hashing scheme
for size $n$, security parameter $\lambda$, identifier universe $U^\id$ and value universe $U^V$ is $(Q, \epsilon)$-robust if, for any adversary $\calA$ with running time $O(Q)$,
the following holds:
\[
\Pr\left[\construct(\calH, X) =\ \perp : 
\begin{array}{l}\calH \leftarrow \sample(1^{\lambda})\\
X = \{(\id_1,v_1),\ldots,(\id_n,v_n)\} \leftarrow \calA(\calH)\\
\id_i \ne \id_j, \forall i \ne j \in [n]\\
\end{array}\right] \le \epsilon.
\]
\end{definition}

Once again, we define robustness in a more fine-grained manner
for adversaries running in expected time in $O(Q)$ and arbitrary probabilities
$\epsilon$. Typically, we would use $Q = \poly(n)$ to consider efficient adversaries
and $\epsilon$ to be negligible in $n$. As mentioned earlier, cuckoo hashing
may be used as a sub-system for smaller inputs where we have to consider adversaries
with running time larger than polynomial in the cuckoo hashing size and
probabilities smaller than negligible in the cuckoo hashing size.

Finally, we define
strongly robust to consider all polynomial time adversaries.

\begin{definition}[Strongly Robust Hashing Schemes]
A hashing scheme
is $(\lambda, \epsilon)$-strongly robust if, for any polynomial $t(n, \lambda)$,
it is $(t, \epsilon)$-robust.
\ignore{
the following holds:
\[
\Pr\left[\construct(\calH, X) =\ \perp : 
\begin{array}{l}\calH \leftarrow \sample(1^{\lambda+n})\\
X = \{(\id_1,v_1),\ldots,(\id_n,v_n)\} \leftarrow \calA(1^\lambda, \calH)\\
\id_i \ne \id_j, \forall i \ne j \in [n]\\
\end{array}\right] \le \epsilon.
\]}
\end{definition}

\section{Cuckoo Hashing}
\label{sec:cuckoo_hashing}
In this section, we re-visit cuckoo hashing. We will consider a generic version of
cuckoo hashing that considers arbitrary
numbers of hash functions $k$, number of entries $b$, entry sizes $\ell$
and overflow stash sizes $s$.
We will exclusively consider the variant of cuckoo hashing
with $k$ disjoint sub-tables of size $b/k$ such that each item is assigned
to one entry in each sub-table according to the $k$ hash functions.

\subsection{Description}
\label{sec:tables}

Cuckoo hashing aims to allocate a set $X$ of $n$ identifier-value 
pairs into $k \ge 2$
disjoint
sub-tables $T_1,\ldots,T_k$ with $b/k$ entries
in each table.
For convenience, we will assume that $k$ divides $b$ evenly.
Each entry is able to store at most
$\ell \ge 1$ items.
The hash table may also consist
of an overflow stash that may be able to store at most
$s \ge 0$ items that were not allocated into
any of the $k$ tables.

Each identifier-value pair is mapped to a random
entry in each of the $k$ tables using $k$ random hash functions,
$(H_1,\ldots,H_k)$ such that $H_i: \{0,1\}^* \rightarrow [b/k]$.
For convenience, we can use a single hash function $H$
that can simulate $k$ hash functions
by setting $H_i(\cdot) = H(i \mid\mid \cdot)$.
Therefore, the $\sample$ algorithm for cuckoo hashing
simply samples a hash function $H$.
See Section~\ref{sec:random_functions} for
various choices of these
random hash functions.
Any identifier-value pair $(\id, v)$ is mapped to the $H_1(\id)$-th entry of $T_1$, the $H_2(\id)$-th entry of $T_2$ and so forth. The pair $(\id, v)$
is guaranteed to be stored in any of the $k$ entries specified
by $H_1,\ldots,H_k$ or the overflow stash.
The $\query$ algorithm checks all possible locations
for the queried item including
the $k$ entries specified by $H_1,\ldots,H_k$
as well as the overflow stash.
So, the query overhead is exactly $k\ell + s$.
Furthermore, assuming
the construction is successful,
the query algorithm will never fail to provide the correct answer (i.e., the query error will always be $0$).
For the $\construct$ algorithm,
there are several options that we will outline later in Section~\ref{sec:algs} once 
we have defined the necessary graph terminology.

We also define the storage overhead describing the size
of the resulting table. For our parameters, the
storage overhead is $b\ell + s$ for the $b$ entries and the stash.

\begin{definition}
The cuckoo hashing scheme $\CH(k, b, \ell, s)$ refers to the algorithm using $k$ hash functions, $b$ entries across $k$ disjoint tables storing at most $\ell$ items and an overflow
stash storing at most $s$ items
with $\sample$ and $\query$ as described
above and any perfect construction
algorithm $\construct$ from Section~\ref{sec:algs}.
The query overhead is $k\ell + s$ and the storage overhead
is $b\ell + s$.
\end{definition}

\noindent{\bf Discussion about Static Tables.}
Dynamic variants of cuckoo hashing enable inserting items into a table. We chose to study
the static variant as insertions are not used
in cryptographic applications.This is due to the fact that
insertions are highly dependent on whether certain entries are populated or empty that is detrimental for privacy
\iffull
(see Appendix~\ref{app:discuss} for further discussion).
\else
(see full version for further discussion).
\fi

\smallskip\noindent{\bf Discussion about Non-Adaptive Querying.}
In our abstraction, we define the query overhead
to be $k\ell + s$ that is total number of possible locations
for any item.
This is due to the fact that we assume non-adaptive querying where all possible locations for
the queried item in one round.
Instead, one could consider an adaptive approach
where locations that are more likely
to contain the queried item are retrieved first (such as non-stash locations). If the item is not found, the query algorithm can proceed with the remaining locations (such as stash locations). This approach has two downsides. First, the querier and cuckoo table storage provider are different parties. Therefore, the
above adaptive query would require multiple roundtrips.
Secondly, the querying algorithm reveals whether certain
entries are populated or empty that is detrimental for privacy (similar to insertions). For these
reasons, to our knowledge, all usages of cuckoo hashing in cryptography
rely on non-adaptive querying with overhead
$k\ell + s$.
See
\iffull
Appendix~\ref{app:discuss}
\else
the full version
\fi
for more details about the
problems with adaptive queries.

\subsection{Cuckoo Bipartite Graphs}
\label{sec:graph}

To be able to analyze cuckoo hashing, we define the notion
of cuckoo bipartite graphs to accurately
model the behavior of cuckoo hashing.
The left vertex set will consist
of the $n$ items that should be allocated.
The right vertex set represents all potential slots
that an item can be stored. If a cuckoo hashing scheme
has $b$ entries each storing at most $\ell$ items and an stash
storing at most $s$ items, there will be $b\ell + s$ total
right vertices.
An edge between an item vertex $v$ and an entry vertex $v'$
means that the item may be allocated to the corresponding entry.
As each item can be stored in at most $k$ entries
or one of the $s$ slots in the stash, each left vertex
will have degree exactly $k\ell + s$.
Each of the $k$ hash functions will be a random function,
so
the above description can be modelled as drawing
randomly from a distribution of bipartite
graphs that we define as follows.

\begin{definition}
Let $\calG(n, k, b, \ell, s)$ denote
the distribution of random bipartite cuckoo graph generated in
the following way:
\begin{itemize}
    \item The left vertex set contains $n$ vertices
    representing the $n$ items to be allocated.
    \item The right vertex set contains $b\ell + s$ vertices that are partitioned into $b+1$ sub-groups in the following way.
    The first $b$ sub-groups contain $\ell$ vertices while
    the last sub-group $S$ contains $s$ vertices corresponding to $b$ entries and the stash respectively.
    We further group together the first $b$ sub-groups (i.e., $b$ entries) as follows. The group $B_1$ consists of the first $b/k$ sub-groups, the group $B_2$ consists of the second
    $b/k$ sub-groups and so forth
    to obtain $k$ groups $B_1,\ldots,B_k$ that each
    correspond to a disjoint table
    of $b/k$ entries each.
    \item Each left vertex is connected to $k\ell + s$ vertices by choosing
    a uniformly random sub-group from each group $B_1,\ldots,B_k$ and adding an edge to all $\ell$ vertices in each of the $k$ chosen sub-groups corresponding to picking a random entry of size $\ell$ from each of the $k$ tables. Finally, each vertex is assigned to all $s$ vertices in $S$ corresponding to the stash.
\end{itemize}
\end{definition}

The major benefit of modelling cuckoo hashing in this manner is that we can
directly map item allocations to matchings in the
bipartite graph. In particular, we can choose an edge between an item and entry slot if and only if that item was allocated
into that entry's slot. Therefore, we can see that any
successful allocation directly corresponds to a {\em left perfect matching} meaning that there exists a set of edges where
each left vertex is adjacent to exactly one edge and each
right vertex is adjacent to at most one edge.
Throughout our paper, our proofs will consist of analyzing the probability
of the existence of left perfect matchings
for various parameter settings
for random graphs drawn from the distribution $\calG(n, k, b, \ell, s)$.

As a note, the entire bipartite graph may be quite large but
does not need to be represented explicitly. In particular,
the graph can be fully re-created using only the parameters $(n, k, b, \ell, s)$, the hash functions
$H_1,\ldots,H_k$ and the set of items to be allocated.
Additionally, we note that any allocation of $n$ items
can be stored using $O(n)$ storage of the corresponding
edges.

\subsection{Perfect Construction Algorithms}
\label{sec:algs}

Finally, we present perfect construction algorithms.
We note that one can rephrase a construction
algorithm as finding a perfect left matching. This amounts
to finding an alternating path from each node to a free right vertex (i.e., empty entry). To do this, we could perform breadth first search (BFS)
starting from each of the $n$ left vertices that guarantees a perfect
construction algorithm. One can also use the more popular random walk
algorithm. However, random walks are not guaranteed to terminate. To
make it a perfect construction algorithm, one can bound the random
walk to $O(n)$ length before running BFS.
Finally, one can also use the local search
allocation algorithm of~\cite{khosla2013balls} that runs
in $O(nk)$ time with high probability.
In
\iffull
Appendix~\ref{sec:running_times},
\else
the full version~\cite{cryptoeprint:2022/1455},
\fi
we describe the
construction algorithms in detail
and analyze the running times.
As most prior works consider constant $k$, we modify their proofs to obtain bounds
for super-constant values of $k$.

\section{Cuckoo Hashing with Negligible Failure}
\label{sec:suff}
We will systematically study
cuckoo hashing across all four parameters
of the number of hash functions $k$, the number
of entries $b$, entry size $\ell$ and
stash size $s$. As our major contribution,
we show that using large $k$ with $k$ disjoint sub-tables
obtains the smallest query overhead for any failure probability $\epsilon$.
We present our construction with large $k$ below:

\begin{theorem}\label{thm:k}
\label{thm:main}
If $H$ is a $(nk)$-wise independent hash function, then
the cuckoo hashing scheme $\CH(k, b, \ell, s)$ has construction failure probability at most $\epsilon$ when
$k = O(1 + \sqrt{\log(1/\epsilon)/\log n})$, $\ell = 1$,
    $s = 0$ and $b = O(n)$.
    The query overhead is $O(1 + \sqrt{\log(1/\epsilon)/\log n})$
    and storage overhead is $O(n)$.
\end{theorem}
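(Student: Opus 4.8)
The plan is to analyze $\CH(k,b,\ell,s)$ with $\ell=1$, $s=0$, and $b = cn$ for a sufficiently large constant $c$ through the cuckoo bipartite graph $\calG(n,k,b,1,0)$. Since a perfect construction algorithm fails exactly when the sampled graph has no left-perfect matching, the construction failure probability equals $\Pr[\,\exists A \subseteq \text{(left side)} : |N(A)| < |A|\,]$ over $\calG(n,k,b,1,0)$ by Hall's theorem. The first thing I would record is a structural fact specific to disjoint sub-tables: every nonempty left set $A$ has $|N(A)| \ge k$, because each item has exactly one neighbor in each of the $k$ sub-tables, so $N(A)$ meets all $k$ sub-tables. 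Consequently any Hall violator has $|A| \ge k+1$, and passing to an inclusion-minimal violator (which satisfies $|N(A)| = |A|-1$: if $|N(A)| \le |A|-2$ then removing any vertex from $A$ still yields a Hall violator, contradicting minimality) we may restrict attention to such minimal sets.

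Next I would set up a union bound over minimal violators. For a minimal violator $A$ with $|A| = a$, let $t_j = |N(A) \cap B_j|$ for $j \in [k]$; then $t_j \ge 1$ and $\sum_j t_j = a-1$, so in particular $a \ge k+1$. Fixing $a$, the profile $(t_1,\dots,t_k)$, the choice of $A$ (at most $\binom{n}{a}$ options), and the neighbor sets $T_j \subseteq B_j$ with $|T_j| = t_j$ (at most $\prod_j \binom{b/k}{t_j}$ options), the probability that every neighbor of every item of $A$ lies in $\bigcup_j T_j$ is exactly $\prod_j (t_j k / b)^{a}$. This is where $(nk)$-wise independence enters: writing $H_j(\cdot) = H(j \,\|\, \cdot)$, the $nk$ inputs $\{\, j \,\|\, \id_i : j \in [k],\, i \in [n]\,\}$ are pairwise distinct, so the corresponding $nk$ hash values are fully independent and uniform on $[b/k]$, making the membership events independent with probability $t_j/(b/k) = t_jk/b$ each. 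Altogether,
\[
\Pr[\text{failure}] \;\le\; \sum_{a=k+1}^{n}\ \sum_{\substack{t_1,\dots,t_k \ge 1\\ t_1+\dots+t_k=a-1}} \binom{n}{a}\,\prod_{j=1}^{k}\binom{b/k}{t_j}\Bigl(\frac{t_j k}{b}\Bigr)^{\!a}.
\]

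To finish I would estimate this sum. Using $\binom{b/k}{t_j} \le (eb/(kt_j))^{t_j}$ collapses $\binom{b/k}{t_j}(t_jk/b)^{t_j} \le e^{t_j}$, leaving a factor $\prod_j (t_jk/b)^{a-t_j}$ with $\sum_j (a-t_j) = (k-1)a+1$, and then $\binom{n}{a} \le (en/a)^a$. The dominant contribution is the smallest size $a = k+1$, which forces every $t_j = 1$ and contributes $\binom{n}{k+1}(k/b)^{k(k+1)} = n^{-\Omega(k^2)}$ once $c$ is large enough; setting $k = \Theta\!\bigl(1 + \sqrt{\log(1/\epsilon)/\log n}\bigr)$ with a suitable constant makes this at most $\epsilon$. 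The main obstacle is controlling the remaining terms uniformly: for moderate $a$ I would need geometric decay in $a$, and for $a$ close to $n$ the factor $\binom{n}{a}$ is exponentially large and must be dominated by the collision probability — this is precisely where choosing the load factor $b/n$ to be a large enough constant is essential, so that the per-term bound contracts in $a$ even after paying the $e^{a-1}$ factor and the composition count $\binom{a-2}{k-1}$ for the profiles. Once the sum is bounded by $\epsilon$, the stated overheads are immediate: the query overhead is $k\ell + s = k = O(1+\sqrt{\log(1/\epsilon)/\log n})$ and the storage overhead is $b\ell + s = b = O(n)$.
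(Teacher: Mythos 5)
Your proposal is correct and takes essentially the same route as the paper: Hall's theorem on the cuckoo bipartite graph, a union bound over Hall-violating left sets (which have size at least $k+1$ thanks to the $k$ disjoint sub-tables), a dominant contribution of order $n^{-\Theta(k^2)}$ from sets of size $k+1$, and a sufficiently large constant load factor $b/n$ to make the remaining terms decay, with the paper's Lemma~\ref{lem:core} being just a coarser version of your per-table profile count (it bounds the bin choices by $\binom{b}{t-1}$ and maximizes the product over profiles, then finishes with the case split $t \le n^{0.75}$ versus $t > n^{0.75}$ that your sketch defers). The only nit is that your $a=k+1$ term should carry the factor $(b/k)^k$ for choosing the one bin per table, giving $\binom{n}{k+1}(k/b)^{k^2}$ rather than $\binom{n}{k+1}(k/b)^{k(k+1)}$, which is still $n^{-\Omega(k^2)}$ and does not affect the conclusion.
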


We can compare with the best query overhead achievable by prior schemes. Cuckoo hashing with a stash requires
$s = O(\log(1/\epsilon)/\log n)$~\cite{aumuller2014explicit}
and cuckoo hashing with larger entries requires
$\ell = O(\log(1/\epsilon/\log n)$~\cite{minaud2020note}. 
The resulting query overheads of these
instantiations is $O(\log(1/\epsilon)/\log n)$ that is quadratically larger
than the our construction for any $\epsilon \le 1/n$. This includes
negligible failure probability $\epsilon = \negl(n)$ that is typically required in cryptography.
For convenience, we will consider $\epsilon \le 1/n$ for the remainder
of this section so that we can write $k = O(1 + \sqrt{\log(1/\epsilon)/\log n}) = O(\sqrt{\log(1/\epsilon)/\log n})$.

We will also show that all these parameter dependencies are asymptotically
optimal by proving a matching lower bound in Section~\ref{sec:negl_lb}.
In other words, we show that the gap in efficiency is inherent and
cuckoo hashing with more hash functions and disjoint sub-tables is the most
efficient approach.

\smallskip\noindent{\bf Different Parameter Sets.}
In our construction, we considered
extreme parameter regime of large $k$.
One could also consider
parameters that aim to balance between various parameter choices using our techniques. However, it turns out
that the best choice remains using large values of $k$.
We refer to Section~\ref{sec:negl_lb} for further discussions
using the lower bound.

\smallskip\noindent{\bf Different Values for $b$.}
Throughout our work, we considered fixed values of $b = O(n/\ell)$.
We did this to ensure that we restricted to constructions
with $O(n)$ storage overhead.
For completeness, we present tight bounds for parameters with large $b$ in
\iffull
Appendix~\ref{sec:num_entries}.
\else
the full version~\cite{cryptoeprint:2022/1455}.
\fi
We show one must have $b = \Omega(1/\epsilon)$ when considering
parameters with small number of hash functions $k$, entry sizes $\ell$
and stashes $s$. For small $\epsilon < 1/n$, this would result
in super-linear storage overhead.

\smallskip\noindent{\bf Choice of Random Hash Function.} In our above results,
we assumed that $H$ is a $(nk)$-wise independent
hash function. One could,
instead, plug in a $(nk, \epsilon_H)$-wise
independent hash function and obtain 
similar results with
construction failure probability increased by an
additive $\epsilon_H$ factor.
One could also use PRFs or random oracles for $H$
requiring stronger assumptions.

\subsection{Technical Lemmas}

We start with a technical lemma
that relates the existence of an allocation
of $n$ items to perfect left matchings
in bipartite graphs. From there,
we can utilize Hall's Theorem~\cite{hall1987representatives}
to get a very simple characterization
of when an allocation for any $n$
items via cuckoo hashing exists.
We abstract out these lemmas
as we will re-use them when constructing
robust cuckoo hashing in Section~\ref{sec:robust_constructions}.
We note similar analytical tools were used
in the past (such as~\cite{fotakis2005space,aumuller2014explicit}).

For any subset of left nodes $X$, we denote the neighborhood $N(X)$
as the subset of all right nodes that are directly connected
a left node in the set $X$.
Using neighborhoods, we get the following characterization:

\begin{lemma}
\label{lem:hall}
Consider any cuckoo hashing scheme $\CH(k, b, \ell, s)$ with a perfect insertion algorithm
where $H$ is a $(qk)$-wise independent hash function. Then, for any set $S$ of $q$ items,
the construction failure probability
is equal to the probability
that there exists a subset of left vertices $X$ such that $|N(X)| < |X|$
for 
a random graph drawn from the distribution $\calG(q, k, b, \ell, s)$.
\end{lemma}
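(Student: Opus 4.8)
The plan is to reduce the statement to a purely graph-theoretic fact in two steps, and then check that the graph arising from a $(qk)$-wise independent hash function is distributed exactly as $\calG(q,k,b,\ell,s)$. First I would fix a hash function $H$ and an item set $S$ of size $q$ and observe that a successful allocation of $S$ into the table is in bijection with a left-perfect matching of the associated cuckoo bipartite graph on $q$ left vertices: an allocated item is matched to the slot it occupies (one of the $\ell$ cells of one of its $k$ chosen entries, or one of the $s$ stash cells), and conversely any left-perfect matching yields a valid allocation since each right vertex is used by at most one item. Hence a successful allocation of $S$ according to $H$ exists if and only if this graph admits a left-perfect matching. Because $\construct$ is a \emph{perfect} construction algorithm, it outputs $\perp$ only when no valid allocation exists and, by the perfectness property, always succeeds when one does; thus the construction failure probability (over $H \leftarrow \sample(1^\lambda)$) equals the probability that the graph has no left-perfect matching.

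Second, I would invoke Hall's Theorem: a bipartite graph with left side $L$ has a left-perfect matching if and only if $|N(X)| \ge |X|$ for every $X \subseteq L$. Taking the contrapositive, the graph fails to have a left-perfect matching if and only if there exists some $X \subseteq L$ with $|N(X)| < |X|$. Combined with the previous paragraph, the construction failure probability equals the probability, over the randomly generated graph, that some subset of left vertices violates Hall's condition.

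Third, I would verify that the graph induced by $\CH(k,b,\ell,s)$ with a $(qk)$-wise independent $H$ is distributed as $\calG(q,k,b,\ell,s)$. The graph is fully determined by the entry choices $H_i(\id_j) = H(i \mid\mid \id_j)$ for $i \in [k]$ and $j \in [q]$ (the stash edges are deterministic and carry no randomness). Since the identifiers are distinct, the $qk$ inputs $i \mid\mid \id_j$ are distinct, so $(qk)$-wise independence guarantees these $qk$ values are jointly uniform over $[b/k]^{qk}$ — precisely the distribution used to define $\calG(q,k,b,\ell,s)$, in which each left vertex independently picks a uniformly random sub-group from each of the groups $B_1,\dots,B_k$. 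Hence the two graph distributions coincide and the lemma follows.

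The step I expect to require the most care is the first equivalence: pinning down that "$\construct$ outputs $\perp$" is \emph{exactly} "no valid allocation exists" — which is where perfectness of the construction algorithm is essential — together with the bookkeeping that translates an allocation (into entries of size $\ell$ grouped into $k$ sub-tables plus a stash of size $s$) into a matching of the right shape, so that the match with the $(qk)$ distinct hash inputs is clean. The Hall's-theorem step itself is standard.
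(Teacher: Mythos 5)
Your proposal is correct and follows essentially the same route as the paper's proof: perfectness of the construction algorithm reduces failure to the nonexistence of any valid allocation, allocations are identified with left-perfect matchings in the cuckoo bipartite graph, and Hall's Theorem converts nonexistence of such a matching into the existence of a set $X$ with $|N(X)| < |X|$. Your third step, explicitly checking that $(qk)$-wise independence makes the induced graph distributed exactly as $\calG(q,k,b,\ell,s)$, is handled only implicitly in the paper and is a welcome bit of extra care, but it does not change the argument.
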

\begin{proof}
As we consider cuckoo hashing schemes with perfect insertion
algorithms, we know that if there exists a proper
allocation that successfully allocates all
$q$ items, then the construction will be successful.
Therefore, the cuckoo hashing scheme fails to insert
if and only if no allocation fitting all the $q$ items exists.

We can directly map a successful allocation of the $q$
items to a perfect left matching in the corresponding cuckoo graph. Consider any cuckoo hashing
scheme given $q$ items to construct after
fixing the hash function. Then, there exists a corresponding
bipartite graph $G$ in the support of the distribution
$\calG(q, k, b, \ell, s)$. There exists a correct allocation
if and only if each of the $q$ items is assigned uniquely to
a bin or stash location. In other words, a correct allocation
exists if and only if there exists a left perfect matching in $G$
where the allocation of an item to a bin corresponds to an edge in the matching.
By Hall's Theorem~\cite{hall1987representatives}, no left perfect matching exists in a bipartite
graph if and only if there exists some subset $X$ of left vertices
such that its neighbor vertex set is strictly smaller than $X$,
that is, $|N(X)| < |X|$.
\end{proof}

\begin{lemma}
\label{lem:core}
Let $q, k, b, \ell \ge 1$ and $s \ge 0$ and consider the
distribution of random bipartite cuckoo graphs $\calG(q, k, b, \ell, s)$.
The probability that there exists a subset $X$ of left vertices of size $t$
such that it has less than $t$ neighbors, $|N(X)| < t$,
for any $k\ell + s + 1 \le t \le \min\{q, b/2\}$
is at most
$$
\Pr[\exists X: |X| = t, |N(X)| < t] \le {q \choose t} \cdot {b \choose \floor{(t - s - 1)/\ell}} \cdot \left(\frac{2(t- s - 1)}{b\ell}\right)^{k \cdot t}.
$$
For $t \le k\ell + s$, the above probability is $0$.
\end{lemma}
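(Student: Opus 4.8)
The plan is to analyse $N(X)$ by separating its stash part from its entry part. In a graph drawn from $\calG(q,k,b,\ell,s)$ every left vertex is joined to all $s$ stash vertices and, in each of the $k$ pairwise disjoint groups $B_1,\dots,B_k$, to the $\ell$ vertices of one randomly chosen entry; since the $B_j$ are disjoint from each other and from the stash, each left vertex has exactly $k\ell+s$ distinct neighbours. Hence if $t\le k\ell+s$ then any $X$ with $|X|=t\ge 1$ already satisfies $|N(X)|\ge k\ell+s\ge t$, so the event $|N(X)|<t$ cannot occur and its probability is $0$; this settles the last sentence. For the main estimate I would fix a set $X$ of $t$ left vertices and let $m_j$ denote the number of distinct entries of $B_j$ touched by the vertices of $X$. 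Because $S\subseteq N(X)$ and the $B_j$ are disjoint, $|N(X)|=s+\ell\sum_{j=1}^k m_j$, so $|N(X)|<t$ is \emph{exactly} the event $\sum_{j=1}^k m_j\le r$, where $r:=\floor{(t-s-1)/\ell}$ (and $r\ge 1$ since $t\ge k\ell+s+1$).

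Next I would run a union bound over ``witness sets''. If $\sum_j m_j\le r$, the set of entries actually hit by $X$ has size at most $r$, so it extends to a set $W$ of exactly $r$ entries — legitimate because $r\le (t-s-1)/\ell\le t\le b/2\le b$ — with the property that every vertex of $X$, in every group, lands inside $W$. Thus $\Pr[|N(X)|<t]\le\binom{b}{r}\cdot\max_W\Pr[\text{all }t\text{ vertices of }X\text{ land in }W]$. For a fixed $W$, writing $w_j=|W\cap B_j|$ so that $\sum_j w_j=r$, independence of the $kt$ entry choices gives that this probability equals $\bigl(\prod_{j=1}^k(kw_j/b)\bigr)^{t}$, and AM--GM ($\prod_j w_j\le (r/k)^k$, using $\sum_j w_j=r$) yields $\prod_j(kw_j/b)\le (r/b)^k$. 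Combining and using $r\le(t-s-1)/\ell$ gives $\Pr[|N(X)|<t]\le\binom{b}{\floor{(t-s-1)/\ell}}\bigl(2(t-s-1)/(b\ell)\bigr)^{kt}$ — the factor $2$ being pure slack that absorbs the floor — and a final union bound over the $\binom{q}{t}$ choices of $X$ completes the proof.

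The only genuinely delicate step is the witness-set reduction together with the AM--GM application: one must phrase ``$X$ hits at most $r$ distinct entries in total'' so that the resulting event factorises over the $t$ left vertices, and apply AM--GM to the per-group sizes $w_j$ (which sum to $r$) rather than to the counts $m_j$. Everything else — the stash bookkeeping, the trivial range $t\le k\ell+s$, and the two union bounds — is routine. I would also note that the hypothesis $t\le\min\{q,b/2\}$ is used only to ensure $r\le b$ (so that $\binom{b}{r}$ is an honest binomial coefficient) and to make the base of the final power at most $1$, rendering the bound non-vacuous; beyond that it is not needed in the derivation.
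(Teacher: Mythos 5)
Your proof is correct and takes essentially the same route as the paper's: handle $t\le k\ell+s$ by the degree argument, reduce $|N(X)|<t$ to hitting at most $\floor{(t-s-1)/\ell}$ entries, union-bound over witness sets of entries (the ${b \choose \floor{(t-s-1)/\ell}}$ factor), write the per-vertex probability as a product over the $k$ disjoint groups, balance the group counts, and finish with a union bound over the ${q \choose t}$ choices of $X$. Your two refinements --- taking witness sets of exactly $r$ entries (which sidesteps the paper's implicit reliance on $t\le b/2$ for binomial monotonicity) and applying AM--GM directly instead of the paper's equal-integer choice $a_i=\ceil{(t-s-1)/(\ell k)}$ with the factor-$2$ slack --- only tighten the same argument.
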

\begin{proof}
Fix any vertex set $X$ of size $t$. For
the case of $t \le k\ell + s$, it is impossible to find a subset $X$ since every left vertex has degree $k\ell + s$.
Consider $t > k\ell + s$.
We know that $X$ immediately has $s$
neighbors in the stash vertices.
Therefore, $X$ must have
at most $t - s - 1$ neighbors outside
of the stash vertices. So,
$X$ must connect to at most
$\floor{(t - s- 1)/\ell}$ bins that consist
of $\ell$ vertices each.
Suppose that these bins are chosen
such that $a_1$ come from the first table,
$a_2$ come from the second table and so forth
such that $a_1+\ldots+a_k \le \floor{(t-s-1)/\ell}$.
We note that the total number of ways
to choose these bins is at most
${b \choose \floor{(t - s - 1)/\ell}}$.
For any vertex $x \in X$, the probability that the $k$ random hash functions pick edges in these bins
is at most
$
(a_1 k/b) \cdots (a_k k / b)
$
as there are $b/k$ bins in each of the tables. Therefore, we get
the probability upper bound of
$$
\Pr[|N(X)| < t] \le {b \choose a_1 + \ldots + a_k} \cdot \left(\prod\limits_{i = 1}^k \frac{a_i k}{b}\right)^{t} \le {b \choose \floor{(t-s-1)/\ell}} \cdot \left(\prod\limits_{i = 1}^k \frac{a_i k}{b}\right)^{t}
$$
since $t \le b/2$.
The right side of the equation is maximized when the product $a_1 \cdots a_k$ is maximized. Therefore,
we get an upper bound by setting
each $a_i = \ceil{(t-s-1)/(\ell k)} \le 2(t-s-1)/(\ell k)$. Plugging this in as well as taking a final Union Bound over all ${q \choose t}$ choices of $X$ we get the following upper bound
$$
\Pr[\exists X: |X| = t, |N(X)| < t] \le {q \choose t} \cdot {b \choose \floor{(t - s - 1)/\ell}} \cdot \left(\frac{2(t- s - 1)}{b\ell}\right)^{k \cdot t}
$$
to complete the proof.
\end{proof}

\subsection{Our Construction}
\label{sec:failure_k}

Next, we prove that our
cuckoo hashing construction
with more hash functions (i.e., larger $k$) results
in quadratically smaller
query overhead.
Recall our construction from Theorem~\ref{thm:k}
uses $k = O(\sqrt{\log(1/\epsilon)/\log n})$ hash functions,
$b = O(n)$ entries of size $\ell = 1$ and no stash, $s = 0$.

\begin{proof}[Proof of Theorem~\ref{thm:k}]
To prove this, we will leverage Lemma~\ref{lem:hall}
that provides
a tight characterization between a successful insertion
in cuckoo hashing and left perfect matchings in random
bipartite graphs.
We start from
the probability upper bound from
Lemma~\ref{lem:core} and plug in
our values of $k, \ell, b$ and $s$ to get the following
for values of $k + 1 \le t \le n$
assuming $b \ge 2n$:
\begin{align*}
\Pr[\exists X: |X| = t, |N(X)| < |X|] &\le {n \choose t} \cdot {b \choose t - 1} \cdot \left(\frac{2(t-1)}{b}\right)^{k \cdot t}\\
&\le \left(\frac{en}{t}\right)^t \cdot \left(\frac{eb}{t - 1}\right)^{t-1} \cdot \left(\frac{2(t-1)}{b} \right)^{k \cdot t}\\
&\le \left(\frac{eb}{t-1}\right)^{2(t-1)} \cdot \left(\frac{eb}{t - 1}\right)^{t-1} \cdot \left(\frac{2(t-1)}{b} \right)^{k \cdot t}\\
&\le (2e)^{kt} \cdot \left(\frac{t-1}{b} \right)^{(k-3)t}.
\end{align*}
Note that we used Stirling's approximation that ${x \choose y} \le (ex/y)^y$ in the second inequality.
For the second and third inequality, we use
that $n \le b$
and $t \ge k \ge 3$.
Note, if we set 
$b \ge (2e)^{5} \cdot n$ appropriately and assume that $k \ge 4$, we get
the following inequality:
$$
\Pr[\exists X: |X| = t, |N(X)| < |X|] \le \left(\frac{t-1}{2n} \right)^{(k-3)(t-1)}.
$$

We break the analysis into two parts
depending on the value of $t$. We start
with the case that for the smaller range $k+1 \le t \le n^{0.75}$.
As a result, we can upper bound
the probability by
$(1/n^{0.25})^{(k-3)(k-1)}$.
For the other case, assume that $n^{0.75} < t \le n$.
Therefore, we
can upper bound the probability by
$(1/2)^{(k-3) \cdot (n^{0.75})}
\le (1/n)^{(k-3)(n^{0.75}/\log n)}$.
Finally, we know that
$$\max\{(1/n^{0.25})^{(k-3)(k-1)}, (1/n)^{(k-3)(n^{0.75}/\log n)}\} \le (1/n^{0.25})^{(k-3)(k - 1)}$$
for sufficiently large $n$.
Applying a Union Bound for all values of $k + 1 \le t \le n$,
we get that
$$
\Pr[\exists X: |N(X)| < |X|] \le (n - k) \cdot (1/n^{0.25})^{(k-3)(k-1)}
= \left(1/n\right)^{\Theta(k^2)}.
$$
We solve the following inequality
to get a lower bound on $k$ based on $\epsilon$
$$
(1/n)^{\Theta(k^2)} \le \epsilon \implies k = O( \sqrt{\log(1/\epsilon)/\log n}).
$$
As $k$ must be at least $4$, we get that $k = O(1 + \sqrt{\log(1/\epsilon)/\log n})$.
\end{proof}

\noindent{\bf Necessity of $k$ Disjoint Tables.} Prior work~\cite{fotakis2005space}
aimed to analyze the failure probabilities for cuckoo hashing with arbitrary $k$ hash functions when $b = O(n)$, $\ell = 1$ and $s = 0$. Rephrasing
their results, the prior result required $k = O(\log(1/\epsilon)/\log n)$ that is quadratically higher
than our result. The core difference between the two results
is that our work analyzes the setting where there are $k$
disjoint sub-tables while the
prior result~\cite{fotakis2005space} considered a single shared table. With $k$ disjoint sub-tables, we guarantee that any set of at most $k$ left vertices will have $k$ distinct neighbors.
Therefore, our analysis only needs
to consider left vertex sets of larger size. For the setting where each of the $k$ hash functions may choose
any of the $b$ entries in a single shared table, we note a result similar to ours is impossible.
Consider the setting of left vertex sets of size 2, $|X| = 2$. The probability
that all $2k$ hash function evaluations resulting in the same
entry is already $(1/b)^{2k}$. If $b = \Theta(n)$, this immediately
implies that $k = \Omega(\log(1/\epsilon)/\log n)$ for failure probability $\epsilon$. By
avoiding this case using disjoint tables, we
are able to obtain the same failure probability
with a quadratically smaller number of hash functions.

\subsection{Lower Bounds}
\label{sec:negl_lb}

Next, we prove
lower bounds on the best possible
parameters obtainable
in cuckoo hashing.
In particular, we will show
that the chosen parameters in Theorem~\ref{thm:k} are asymptotically
optimal for
failure probabilities $\epsilon$.

Our lower bounds do
not make any assumptions on
the construction algorithm used.
In other words, the results apply regardless of the construction algorithm (such as whether they are perfect or whether they are efficient). Additionally, we only make the assumption that the underlying hash
function $H$ is a $(k\ell + s + 1)$-wise
independent hash function to mimic
standard cuckoo hashing constructions.

At a high level, we will present
a simple attack relying on the insight
that cuckoo hashing failure is hard for small sets.
By the structure of our cuckoo hashing scheme
using $k$ disjoint sub-tables and a stash,
we know that any set of $k\ell + s$
will be allocated correctly.
Our goal is to simply pick a random
set of $k\ell + s + 1$ items and
lower bound the probability
that all these items will hash into
the exact same $k$ entries. In this case,
the construction algorithm
would fail.

\begin{theorem}
\label{lem:err_lb}
Let $k \ge 1$, $\ell \ge 1$, $1 \le b \le n^{O(1)}$, $s \ge 0$ such that $k\ell + s + 1 \le n$.
The failure probability of $\CH(k, b, \ell, s)$ cuckoo hashing scheme where $H$ is a $(k\ell + s + 1)$-wise independent hash function
satisfies the following:
\begin{align}
k^2 \ell + ks = \Omega(\log(1/\epsilon)/\log(n)).
\end{align}
\end{theorem}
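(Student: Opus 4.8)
The proof is a lower bound via an explicit attack, so I would not need anything sophisticated — just a careful calculation of the probability that a uniformly random set of $k\ell+s+1$ items all collide into the same $k$ entries. Fix a set $S$ of $t := k\ell+s+1$ items. Since $H$ is $(k\ell+s+1)$-wise independent, the $k$ hash values of each item in $S$ are jointly uniform and independent across all $t$ items. For each of the $k$ sub-tables (each of size $b/k$), the event that all $t$ items land in a single common entry of that sub-table has probability exactly $(b/k)^{-(t-1)}$ once we fix which entry (and summing over the $b/k$ choices of entry gives $(b/k)^{-(t-1)} \cdot (b/k) \cdot \ldots$ — more precisely the probability that all $t$ agree in sub-table $i$ is $(k/b)^{t-1}$). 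Multiplying over the $k$ independent sub-tables, the probability that all $t$ items hash to the same $k$-tuple of entries is $(k/b)^{k(t-1)}$. When this event occurs, all $t = k\ell+s+1$ items must be packed into $k$ entries of capacity $\ell$ plus a stash of capacity $s$, i.e.\ into $k\ell+s$ slots — impossible — so $\construct$ must output $\perp$. Hence $\epsilon \ge (k/b)^{k(t-1)} = (k/b)^{k(k\ell+s)}$.

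**Turning the collision bound into the stated inequality.** Taking logs, $\log(1/\epsilon) \le k(k\ell+s) \cdot \log(b/k) \le k(k\ell+s)\cdot\log b = O\big(k(k\ell+s)\cdot \log n\big)$, using $b \le n^{O(1)}$ so $\log b = O(\log n)$. Rearranging, $k(k\ell+s) = k^2\ell + ks = \Omega(\log(1/\epsilon)/\log n)$, which is exactly the claim. I should double-check the low-probability edge cases: if $k/b \ge$ some constant (e.g.\ $b$ is small compared to $k$), the bound degenerates, but then $b = O(k)$ forces $k = \Omega(b)$ and the conclusion is trivially dominated; and if $k\ell+s+1 > n$ the statement's hypothesis excludes it. I would also note the calculation is cleanest when phrased directly in terms of a fixed set $S$ of exactly $k\ell+s+1$ distinct identifiers (values are irrelevant), so no union bound or averaging over random sets is actually required — a single bad event for one fixed set already lower-bounds the worst-case failure probability $\epsilon$, matching the definition of construction error probability which quantifies over all input sets $X$.

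**Main obstacle.** There is no real obstacle here; this is a clean first-principles argument. The only place to be careful is the exact exponent in the collision probability: it is $k(t-1)$, not $kt$, because within each sub-table fixing the first item's location is "free" and only the remaining $t-1$ items must match it — and getting this right is what makes the exponent $k(k\ell+s)$ rather than $k(k\ell+s+1)$, though asymptotically (for the $\Omega(\cdot)$ statement) it does not matter. A secondary subtlety is that the $k$ sub-tables are genuinely disjoint in the $\CH(k,b,\ell,s)$ model (by construction each hash function indexes its own table of $b/k$ entries), so the per-sub-table collision events are independent and the probabilities multiply cleanly; this is precisely the feature being exploited, and it is worth stating explicitly that the bound would look different for a single shared table.
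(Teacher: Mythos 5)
Your proposal is correct and is essentially the paper's own argument: both fix a set of $t = k\ell+s+1$ items, lower-bound the probability that the last $k\ell+s$ of them hash to the same $k$ entries as the first by $(k/b)^{k(k\ell+s)}$, observe that this event forces $k\ell+s+1$ items into $k\ell+s$ slots, and take logarithms using $\log(b/k)=O(\log n)$. Your handling of the exponent ($k(t-1)$ rather than $kt$) and the degenerate small-$b$ case matches or slightly sharpens the paper's presentation, but the route is the same.
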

\begin{proof}
We know that the first item will be successfully inserted.
Consider the $k$ distinct locations that
were chosen for the first item denoted by $S$. Suppose that another $k\ell + s$
different items were also assigned to the same $k$ locations
or a subset of the $k$ locations. In this case, there
are $k\ell + s +1$ items that must be assigned to $k\ell + s$ locations 
in the $k$ entries and the stash that is impossible and will result in an insertion failure regardless of the choice of the insertion algorithm.
To obtain our lower bound, we simply lower bound this probability.
Consider the other $k\ell + s$ to be
inserted. Each of these $k\ell + s$ items will pick $k$ locations
uniformly at random from each of the $k$ disjoint tables. Therefore, the probability
that the $k$ choices will be a subset of $S$ is $(k/b)^k$.
As all choices are independent, the probability that
this is true for all $k\ell$ items is $(k/b)^{k(k\ell + s)}$. Therefore,
the probability of an insertion failure is at least
$\epsilon \ge (k/b)^{k^2\ell + ks}$.
Applying logs to both sides gets that $k^2\ell + ks \ge \log(1/\epsilon)/\log(b/k)$.
Using
the fact that $b \le n^{O(1)}$ and $k \ge 1$, we get that $\log(b/k) = O(\log n)$.
Therefore, we get the inequality $k^2 \ell + ks = \Omega(\log (1/\epsilon)/\log n)$ completing the proof.
\end{proof}

\begin{theorem}
\label{thm:general_lb}
Let $1 \le k \le n^{2/5}$ and $1 \le \ell \le n^{2/5}$.
Suppose that the cuckoo hashing scheme $\CH(k, b, \ell, s)$ where $H$ is a $(k\ell+s+1)$-wise independent
hash function. Then, the following are true:
\begin{itemize}
    \item If $\ell = O(1), b = n^{O(1)}$ and $s = O(1)$, then $k = \Omega(\sqrt{\log(1/\epsilon)/\log n})$.
    \item If $k = O(1), b = n^{O(1)}$ and $s = O(1)$, then $\ell = \Omega(\log(1/\epsilon)/\log n)$.
    \item If $k = O(1), b = n^{O(1)}$ and $\ell = O(1)$, then $s = \Omega(\log(1/\epsilon)/\log n)$.
\end{itemize}
\end{theorem}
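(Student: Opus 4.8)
The plan is to derive Theorem~\ref{thm:general_lb} as a direct corollary of the lower bound $k^2\ell + ks = \Omega(\log(1/\epsilon)/\log n)$ established in Theorem~\ref{lem:err_lb}, by specializing the relation to each of the three stated parameter regimes. Note that Theorem~\ref{lem:err_lb} requires only that $H$ be a $(k\ell+s+1)$-wise independent hash function and makes no assumption on the construction algorithm, so its hypotheses are exactly those of Theorem~\ref{thm:general_lb}; the extra constraints $k,\ell \le n^{2/5}$ are there only to keep the attack from degenerating (we need $k\ell + s + 1 \le n$, and having $k\ell$ polynomially bounded away from $n$ ensures the $\log(b/k) = O(\log n)$ step and the probability computation remain valid). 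So the first step is simply to invoke Theorem~\ref{lem:err_lb} to obtain $k^2\ell + ks = \Omega(\log(1/\epsilon)/\log n)$ under the stated hypotheses.

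For the first bullet, assume $\ell = O(1)$, $s = O(1)$ and $b = n^{O(1)}$. Then $k^2\ell + ks = O(k^2) + O(k) = O(k^2)$, so the relation $k^2\ell + ks = \Omega(\log(1/\epsilon)/\log n)$ forces $k^2 = \Omega(\log(1/\epsilon)/\log n)$, i.e.\ $k = \Omega(\sqrt{\log(1/\epsilon)/\log n})$. For the second bullet, assume $k = O(1)$, $s = O(1)$, $b = n^{O(1)}$. Then $k^2\ell + ks = O(\ell) + O(1)$; for $\log(1/\epsilon)/\log n$ larger than a constant (the only interesting regime) the $O(1)$ term is absorbed, giving $\ell = \Omega(\log(1/\epsilon)/\log n)$. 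The third bullet is symmetric: with $k = O(1)$, $\ell = O(1)$, $b = n^{O(1)}$ we get $k^2\ell + ks = O(1) + O(s)$, hence $s = \Omega(\log(1/\epsilon)/\log n)$.

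There is essentially no hard technical obstacle here, since all the probabilistic work is already done inside Theorem~\ref{lem:err_lb}; the only care needed is bookkeeping about the edge case where $\log(1/\epsilon)/\log n$ is itself $O(1)$, in which case each asymptotic lower bound is trivially true (the parameters are always $\ge 1$ except $s \ge 0$, and the hidden constants can be chosen accordingly). I would also double check that the regime restrictions $k \le n^{2/5}$ and $\ell \le n^{2/5}$ suffice to guarantee $k\ell + s + 1 \le n$ in each bullet — in the first bullet $\ell,s = O(1)$ so $k\ell + s + 1 = O(k) = O(n^{2/5})$; in the second $k,s = O(1)$ so it is $O(\ell) = O(n^{2/5})$; in the third $k,\ell = O(1)$ so we additionally need $s \le n - O(1)$, which follows because otherwise the claimed bound $s = \Omega(\log(1/\epsilon)/\log n)$ would be vacuous anyway once $s$ is that large (one can simply restrict attention to $s \le n/2$, say, since larger stashes only help and the bound is an "at least" statement). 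Thus the proof is a short deduction: state Theorem~\ref{lem:err_lb}, then read off the three specializations.
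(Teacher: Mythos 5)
Your proposal is correct and follows exactly the paper's own route: the paper proves Theorem~\ref{thm:general_lb} simply by plugging each parameter regime into Theorem~\ref{lem:err_lb}'s bound $k^2\ell + ks = \Omega(\log(1/\epsilon)/\log n)$, which is precisely your argument (with some extra, harmless bookkeeping about edge cases).
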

\begin{proof}
Plug in the values for each parameter
regime into Theorem~\ref{lem:err_lb}.
\end{proof}

We note that the above corollary shows that
our construction in Theorem~\ref{thm:k} is asymptotically
optimal. Furthermore, we also show that the constructions
of large stashes~\cite{aumuller2014explicit} and
large entries~\cite{minaud2020note} are also tight.

\smallskip\noindent{\bf Balancing Parameters.}
For our constructions, we only consider the extreme regime
of large $k$ while the other parameters $s$ and $\ell$ remain
very small. Instead, we could consider trying to balance the
parameters to obtain more efficient constructions. Using Theorem~\ref{lem:err_lb}, we can see why the choice of large $k$
is the most efficient approach. Recall that the query overhead
is $O(k\ell + s)$. However, it must be that
$k^2\ell + ks = \Omega(\log(1/\epsilon)/\log n)$. In other words,
we want to minimize $k\ell + s$ while satisfying the above condition.
It is not hard to see that the optimal approach
is to set $k = O(\sqrt{\log(1/\epsilon)/\log n})$
as we do in our constructions.

\smallskip\noindent{\bf Implications to Other Primitives.} As
our lower bounds do not make any assumptions on
the construction algorithm,
our results can apply to other hashing
schemes other than cuckoo hashing.
For example, we can consider multiple-choice allocation schemes~\cite{azar1994balanced,richa2001power} where items
are allocated to multiple entries and placed
into the entry with the current smallest load.
In general, one can apply our lower
bounds to any scheme that limits the
candidate entries for any item to at most $k$
locations and can set an upper limit on
the number of items per entry to $\ell$.

Our lower bounds
apply to other primitives that heavily
rely on cuckoo hashing techniques such as cuckoo filters~\cite{fan2014cuckoo}
and oblivious key-value stores~\cite{garimella2021oblivious} that
encode data using cuckoo hashing (see
\iffull
Appendix~\ref{app:encoding}
\else
the full version~\cite{cryptoeprint:2022/1455}
\fi
for more details).

\smallskip\noindent{\bf Assumptions in Our
Lower Bounds.}
In the above theorem, we made the assumptions that $k \le n^{2/5}$ and $\ell \le n^{2/5}$.
We do not believe these limit the applicability of our lower
bounds as, otherwise, the query overhead of cuckoo hashing
will be too large. As query overhead is $O(k\ell + s)$, either condition being false immediately implies
$\Omega(n^{2/5})$ query overhead that is impractical.

\smallskip\noindent{\bf One or Two Hash Functions.}
These lower bounds match constructions for cuckoo hashing
with large stashes~\cite{aumuller2014explicit} and entries~\cite{minaud2020note} with $k = 2$. Our lower
bound does not preclude obtaining the same results with $k = 1$.
However, it turns out that $k \ge 2$ is necessary as one can
use analysis from ``balls-into-bins'' analysis to show that $k = 1$
is impossible. For completeness, we include impossibility results for $k = 1$ in
\iffull
Appendix~\ref{sec:single_hash_function}.
\else
the full version of this paper~\cite{cryptoeprint:2022/1455}.
\fi

\smallskip\noindent{\bf Comparison with
Prior Lower Bounds.}
We note that prior work~\cite{minaud2020note}
also proved lower bounds for constant values of $s$ and $\ell$ and fixed $k = 2$. In particular,
for $k = 2$, constant entry size
$\ell \ge 1$, constant stash size $s \ge 0$, and $b = O(n/\ell)$, they proved
that $\epsilon = \Omega(n^{-s-\ell})$.
Our lower bounds improve upon this as we can
consider arbitrary $s$ and $\ell$.

\section{Robust Cuckoo Hashing}
\label{sec:adv}
In this section, we study robust cuckoo hashing where the input set can be chosen by
an adversary that is also given input
to the hash function $H$.
To model this, we consider the adversary having access
to an oracle $\calO$ for hash evaluations.
We will present variants of cuckoo hashing
that can still guarantee smaller construction
failures even when the input set is chosen
adversarially by efficient adversaries
with knowledge of the randomness (that is, the hash function $H$).
In particular, we study $(\lambda,\epsilon)$-strong robustness
where $\poly(\lambda)$ adversaries cannot find a failing input set
except with probability $\epsilon$.

\subsection{Robustness Constructions}
\label{sec:robust_constructions}

Recall
that in the prior section, we required
that $k = O(\sqrt{\log(1/\epsilon)/\log n})$ to
obtain $\epsilon$ construction failure probabilities. We show that increasing the number of hash functions
by a small amount
suffices
to obtain robustness.
To analyze robust cuckoo hashing, we will consider
a hash oracle $\calO$ that may be queried by the adversary.
We consider an adversary with running time $Q$ that may
query at most $Q$ hash evaluations. Afterwards, we analyze
the probability there exists a subset of $n$ items
amongst the $Q$ queried items that would incur a
failure under the hash functions.
We will also show our choice of $k$ is optimal in Section~\ref{sec:robust_lb}.

\begin{lemma}
\label{lem:robust_fine}
\label{lem:robust_k}
For any $0 < \epsilon < 1$,
let $k = O(\log(Q/\epsilon))$, $s = 0$, $\ell = 1$, $b = \alpha n$ for some constant $\alpha \ge 1$.
If $H$ is a random hash function and for any $Q \ge n$,
the cuckoo hashing scheme $\CH(k, b, \ell, s)$ is $(Q,\epsilon)$-robust.
\end{lemma}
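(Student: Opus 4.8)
The plan is to reduce the robust setting to the non-robust analysis already developed in Lemma~\ref{lem:hall} and Lemma~\ref{lem:core}, by unioning over the set of items the adversary could possibly submit. First I would observe that an adversary running in time $Q$ can evaluate the hash oracle $\calO$ on at most $Q$ distinct inputs; call this set of queried identifiers $\calQ$, with $|\calQ| \le Q$. Since the construction failure of $\CH(k,b,1,0)$ on an input set $X$ depends only on the hash values $\{H_1(\id),\dots,H_k(\id)\}_{\id \in X}$, and since an input the adversary has never queried is as good as random, without loss of generality the adversary's output $X \subseteq \cal Q$ (we may assume the $n$ output items were among the queried ones, absorbing the $n$ extra queries into the $Q \ge n$ bound, exactly as in the technical overview). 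So the adversary succeeds only if \emph{there exists} some $n$-subset $X \subseteq \calQ$ whose induced cuckoo bipartite graph — drawn from $\calG(n,k,b,1,0)$ restricted to these vertices — has no left-perfect matching. The whole randomness is over $\calO$, i.e.\ over the $k$ hash values attached to each of the (at most) $Q$ queried identifiers.

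Next I would bound this existential probability. By Hall's theorem (as in Lemma~\ref{lem:hall}), a failure on some $n$-subset $X$ forces the existence of a subset $Y$ of left vertices (drawn from the full pool of $\le Q$ queried items, not just from $X$, since any violating set inside $X$ is also a violating set in the larger graph) with $|N(Y)| < |Y|$. Crucially the number of left vertices to union over is now $Q$ rather than $n$, so I would re-run the computation of Lemma~\ref{lem:core} with the pool size $q$ replaced by $Q$: for a set $Y$ of size $t$, with $\ell=1$, $s=0$, $b = \alpha n$, and $t \ge k+1$, the probability that $|N(Y)| < t$ is at most $\binom{Q}{t}\binom{b}{t-1}\left(\tfrac{2(t-1)}{b}\right)^{kt}$. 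Using $\binom{Q}{t} \le (eQ/t)^t$ and $\binom{b}{t-1} \le (eb/(t-1))^{t-1}$, and the constraint $t \le n \le b$, this collapses (after choosing $\alpha$ a large enough constant and $k \ge 4$) to roughly $\left(\tfrac{Q}{n}\right)^{O(t)} \cdot \left(\tfrac{t-1}{n}\right)^{(k-O(1))(t-1)}$; summing over $k+1 \le t \le n$ and being slightly careful about the small-$t$ versus large-$t$ regimes (the dominant term is $t = \Theta(k)$, just as in the proof of Theorem~\ref{thm:k}) yields a bound of the form $\left(\tfrac{Q}{n}\right)^{O(k)} \cdot n^{-\Omega(k^2)}$, which since $Q \ge n$ is at most $Q^{O(k)} \cdot n^{-\Omega(k^2)} \le 2^{O(k \log Q) - \Omega(k^2 \log n)}$.

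Finally I would solve for $k$: to make this at most $\epsilon$ it suffices that $k \log Q + \log(1/\epsilon) = O(k^2 \log n)$, which (since $Q \ge n$ gives $\log Q \ge \log n$) holds as soon as $k = O(\log Q + \log(1/\epsilon)/\log Q)$, and since $\log(1/\epsilon) \le \log Q \cdot \log(1/\epsilon)$ this is subsumed by $k = O(\log(Q/\epsilon))$, giving the claimed bound. I expect the main obstacle to be the bookkeeping in the union bound over subset sizes: one must verify that replacing $q$ by $Q = \poly$ (potentially much larger than $n$) in Lemma~\ref{lem:core} still leaves the $n^{-\Omega(k^2)}$ factor dominant over the new $Q^{O(k)}$ factor — this is precisely where the extra $\log Q$ (rather than $\log n$) in the formula for $k$ comes from, and it requires checking that the $t = \Theta(k)$ term, not a larger-$t$ term, governs the sum. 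A secondary subtlety is the reduction step itself: arguing rigorously that items never queried to $\calO$ can be assumed absent from $X$, which I would handle by noting that conditioned on the adversary's view (its $Q$ query answers), the hash values of any unqueried identifier are independent and uniform, so including such an identifier only re-randomizes its incident edges and cannot increase the success probability beyond what the reduction already accounts for.
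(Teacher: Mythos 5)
Your proposal is correct and follows essentially the same route as the paper's proof: it reduces to the pool of at most $Q$ (there, $2Q$) queried items, invokes the Hall's-condition bound of Lemma~\ref{lem:core} with the pool size in place of $n$ while only unioning over violating-set sizes $k+1 \le t \le n$, and solves the resulting inequality to get $k = O(\log(Q/\epsilon))$. The only differences are cosmetic bookkeeping in the binomial estimates and in how the final inequality for $k$ is organized.
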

\begin{proof}
We will consider hash oracle $\calO$ that
returns $\calO(x) = (H_1(x), \ldots, H_k(x))$ on input $x$.
That is, a single oracle query
will return the outputs of all
$k$ hash functions.
Without loss of generality, we will assume
that if the adversary returns a set $S$ of
$n$ items, then the adversary has
computed $H_1(s), \ldots, H_k(s)$ for
all $s \in S$.
This only
increases the number
of queries of the adversary by
at most $n \le Q$ for a total
of at most $2Q$ hash queries.
Let $U$ be the set of all items
that the adversary has queried
to the hash functions.
That is, if $u \in U$, then
the adversary knows
the values $\calO(u) = (H_1(u),\ldots,H_k(u))$.
We know that $|U| \le 2Q$.

To show that the scheme
is robust, we will show that a random
graph drawn from the distribution
$\calG(|U|, k, b, \ell, s)$ does not contain
any set of left vertices $X$ such
that $|X| \le n$ and $|N(X)| < |X|$.
By proving no such set of left vertices $X$ exists,
it will be impossible
for the adversary to identify
any input set that would
cause the cuckoo hashing scheme $\CH(k, b,\ell,s)$ to fail.

By applying Lemma~\ref{lem:core} with our parameters of $s = 0$, $\ell = 1$ and $b = \Theta(n)$, we get the following
for probability upper bounds for the values of $k + 1 \le t \le n$:
\begin{align*}
\Pr[\exists X: |X| = t, |N(X)| < |X|] &\le {|U| \choose t} \cdot {b \choose t - 1} \cdot \left(\frac{2(t-1)}{b}\right)^{k \cdot t}\\
&\le \left(\frac{2eQ}{t}\right)^t \cdot \left(\frac{eb}{t - 1}\right)^{t-1} \cdot \left(\frac{2(t-1)}{b} \right)^{k \cdot t}\\
&\le (2Q)^{t} \cdot (2e)^{k \cdot } \cdot \left(\frac{t-1}{b} \right)^{(k-3)t}\\
&\le (2Q)^{t} \cdot \left(\frac{t-1}{2n}\right)^{(k-3)t}\\
&\le \left(\frac{t^{k-3} \cdot (2Q)}{(2n)^{k-3}} \right)^t.
\end{align*}
Since $k + 1 \le t \le n$, we can upper bound
the above probability by
$$
\Pr[\exists X: |X| = t, |N(X)| < |X|] \le \left(\frac{2Q}{2^{k-3}}\right)^{k+1}.
$$
As this needs to be at most $\epsilon$,
we can derive the following
inequalities
\begin{align*}
\left(\frac{2Q}{2^{k-3}}\right)^{k+1} \le \epsilon \implies (k+1)(k-3-\log(2Q)) \ge \log(1/\epsilon).
\end{align*}
If we set
$k = O(\log(Q/\epsilon))$,
we get the desired bound
that completes the proof.
\end{proof}

Next, we show that this may be extended to the notion
of strongly robust that applies to any polynomial time adversaries.

\begin{theorem}
\label{thm:robust_k}
For security parameter $\lambda$ and error $0 < \epsilon < 1$, let $k = O(f(\lambda) + \log(1/\epsilon))$ for some function $f(\lambda) = \omega(\log \lambda)$, $s = 0$, $\ell = 1$ and $b = \alpha n$ for some constant $\alpha \ge 1$. If $H$ is a random hash function, then
the cuckoo hashing scheme $\CH(k, b, \ell, s)$
 is $(\lambda, \epsilon)$-strongly robust.
\end{theorem}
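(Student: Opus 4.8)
The plan is to derive the strongly-robust statement from Lemma~\ref{lem:robust_k} by handling one polynomial at a time. By definition it suffices to show that for \emph{every} polynomial $t(n,\lambda)$ the scheme $\CH(k,b,\ell,s)$ is $(t,\epsilon)$-robust, so fix such a $t$ and an arbitrary adversary $\calA$ running in time $O(t(n,\lambda)) = \poly(\lambda)$ (recall that $n = \poly(\lambda)$ in all settings of interest: $\lambda \ge n$ in the standalone case and $n$ even smaller when cuckoo hashing is used as a subsystem). Every hash evaluation of $\calA$ is routed through the oracle $\calO$, and $\calA$ must write down the $n$ output items, so the number $Q$ of oracle queries it makes satisfies $n \le Q = \poly(\lambda)$; in particular $\log Q = O(\log\lambda)$, where the hidden constant is allowed to depend on $t$.

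Next I would invoke Lemma~\ref{lem:robust_k} with this $Q$: with $s=0$, $\ell=1$ and $b=\alpha n$, there is an absolute constant $\gamma$ such that choosing the number of hash functions to be at least $\gamma\log(Q/\epsilon)$ already makes the scheme $(Q,\epsilon)$-robust, so in particular $\calA$ cannot force a construction failure with probability exceeding $\epsilon$. It therefore remains only to check that the value $k = \gamma\bigl(f(\lambda)+\log(1/\epsilon)\bigr) = O\bigl(f(\lambda)+\log(1/\epsilon)\bigr)$ of the theorem satisfies $k \ge \gamma\log(Q/\epsilon) = \gamma\bigl(\log Q + \log(1/\epsilon)\bigr)$; cancelling the common terms, this is exactly the inequality $f(\lambda) \ge \log Q$. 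All side conditions needed inside Lemma~\ref{lem:robust_k} (e.g.\ $k$ being a sufficiently large integer) hold automatically since $k \to \infty$.

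The one real obstacle is this last inequality, because $k$ must be fixed \emph{independently} of the polynomial $t$, whereas $\log Q$ grows with the degree of $t$. This is precisely where the hypothesis $f(\lambda) = \omega(\log\lambda)$ enters: having fixed $t$, we have $\log Q \le c\log\lambda$ for some constant $c = c(t)$ and all large $\lambda$, and since $f(\lambda)/\log\lambda \to \infty$ there is a threshold $\lambda_0 = \lambda_0(t)$ with $f(\lambda) \ge c\log\lambda \ge \log Q$ for all $\lambda \ge \lambda_0$. For such $\lambda$ the bound $k \ge \gamma\log(Q/\epsilon)$ holds, so $\CH(k,b,\ell,s)$ is $(t,\epsilon)$-robust; as $t$ was an arbitrary polynomial, the scheme is $(\lambda,\epsilon)$-strongly robust. (If $H$ is instantiated by a pseudorandom function rather than a true random oracle, the same reduction goes through after a standard hybrid replacing the PRF by a random function, at the cost of an extra additive $\negl(\lambda)$ term that can be absorbed into $\epsilon$.)
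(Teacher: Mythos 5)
Your proposal is correct and takes essentially the same route as the paper: both reduce to Lemma~\ref{lem:robust_k} and use the hypothesis $f(\lambda)=\omega(\log\lambda)$ to ensure the chosen $k$ dominates $\gamma\log(Q/\epsilon)$ for any polynomially bounded number of oracle queries. The only (cosmetic) difference is that the paper fixes a single super-polynomial bound $Q = 2^{\omega(\log\lambda)}$ up front and applies the lemma once, whereas you fix each polynomial $t$ separately and absorb its $\log Q = O(\log\lambda)$ into $f(\lambda)$ for sufficiently large $\lambda$.
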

\begin{proof}
The adversary
runs in polynomial time and, more importantly, makes at most $\poly(\lambda)$
queries to the hash oracle.
We fix $Q = 2^{\omega(\log \lambda)} = \lambda^{\omega(1)}$ to be any
function super-polynomial in $\lambda$ and, thus, $Q$ is larger
than the running time of any $\poly(\lambda)$ time algorithm.
By applying
Lemma~\ref{lem:robust_k}
with
$Q = 2^{\omega(\log \lambda)}$ to obtain robustness
except with probability $\epsilon$,
we get that $k = O(\log(Q/\epsilon)) = \omega(\log \lambda) + O(\log(1/\epsilon))$ suffices to complete the proof.
\end{proof}

Finally, we can apply our above theorem
for standard values of $\lambda = n$ and
$\epsilon = n^{f(n)}$ for any $f(n) = \omega(1)$ that is negligible
in $n$ to get the following corollary.

\begin{corollary}
\label{cor:robust}
Let $\epsilon = n^{f(n)}$ for any function $f(n) = \omega(1)$.
Let $k = O(f(n) \log n)$, $s = 0$, $\ell = 1$ and $b = \alpha n$ for some constant $\alpha \ge 1$.
Then,
the cuckoo hashing scheme $\CH(k, b, \ell, s)$ is $(n, \negl(n))$-strongly robust.
\end{corollary}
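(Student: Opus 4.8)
The plan is to obtain this as an immediate consequence of Theorem~\ref{thm:robust_k} by substituting the stated parameters. First I would set the security parameter to $\lambda = n$ and read $\epsilon = n^{-f(n)}$ (the intended reading, since $f(n) = \omega(1)$ is what makes $\epsilon$ small), so that $\log(1/\epsilon) = f(n)\log n$. All the structural parameters match those of Theorem~\ref{thm:robust_k} already: $s = 0$, $\ell = 1$, $b = \alpha n$ for a constant $\alpha \ge 1$, and $H$ a random hash function.

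The key point is that Theorem~\ref{thm:robust_k} permits \emph{any} super-logarithmic slack term: it yields $(\lambda,\epsilon)$-strong robustness whenever $k = O(g(\lambda) + \log(1/\epsilon))$ for some $g(\lambda) = \omega(\log\lambda)$. I would instantiate it with the choice $g(n) := f(n)\log n$. Since $f(n) = \omega(1)$, this $g$ is indeed $\omega(\log n)$, so it is a legal choice, and then $k = O\big(g(n) + \log(1/\epsilon)\big) = O\big(f(n)\log n + f(n)\log n\big) = O(f(n)\log n)$, which is exactly the number of hash functions claimed in the corollary. Hence $\CH(k,\alpha n,1,0)$ with $k = O(f(n)\log n)$ is $(n,\epsilon)$-strongly robust.

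Finally I would verify $\epsilon = n^{-f(n)} = \negl(n)$: because $f(n) = \omega(1)$, for every constant $c > 0$ we have $f(n) > c$ for all sufficiently large $n$, so $n^{-f(n)} < n^{-c}$ eventually, i.e.\ $\epsilon$ is smaller than every inverse polynomial and therefore negligible in $n$. Combining, $\CH(k,\alpha n,1,0)$ is $(n,\negl(n))$-strongly robust, as claimed. There is no genuinely hard step here; the only subtleties are checking that the slack function fed to Theorem~\ref{thm:robust_k} is honestly $\omega(\log\lambda)$ — which is precisely why the hypothesis $f(n) = \omega(1)$ (rather than a constant) is needed — and noting that $n^{-\omega(1)}$ is negligible, which is immediate from the definition.
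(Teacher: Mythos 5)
Your proposal is correct and matches the paper's own (implicit) argument: the corollary is obtained exactly by instantiating Theorem~\ref{thm:robust_k} with $\lambda = n$ and $\epsilon = n^{-f(n)}$, choosing the slack function to be $f(n)\log n = \omega(\log n)$ so that $k = O(f(n)\log n)$, and observing $n^{-\omega(1)} = \negl(n)$. Your handling of the sign typo in $\epsilon$ and the clash of the symbol $f$ between the theorem and the corollary is the right reading.
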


\noindent{\bf Instantiation of Hash Function.}
In this section, we made
the assumption that the hash function
$H$ is indistinguishable from random
and that $H$
may also be queried by the adversary.
We leave it as an
open problem to consider hash functions
from other assumptions.

\subsection{Lower Bounds for Robustness}
\label{sec:robust_lb}

In this section, we prove lower bounds
for the required parameters to ensure
that cuckoo hashing is robust. 
We will show that our construction
in Theorem~\ref{thm:robust_k} is
asymptotically optimal in the regime
of sub-linear stash and entry sizes.
First, we will
assume that the stash and entry size are sub-linear,
$s = o(n)$ and $\ell = o(n)$, and that
the number of entries is $b = O(n/\ell)$.
We prove our lower bound with respect to $(\lambda, 1/2)$-robustness.
As we are proving lower bounds,
our result also applies to smaller, more reasonable, failure
probabilities such as negligible failure.

At a high level, our lower bound consists of a simple adversary.
The goal of the adversary is to find a set of $n$ items
that are allocated into the first half of each of the $k$
disjoint sub-tables. By analyzing the probability of finding
such an input set, we obtain a matching lower bound.

\begin{theorem}
\label{thm:lb_robust_k}
Suppose that $k, \ell \ge 1$ and $s \ge 0$ such that $\ell = o(n)$, $s = o(n)$ and
$b = O(n/\ell)$.
If $H$ is a random hash function and the cuckoo hashing scheme $\CH(k, b, \ell, s)$ is $(\lambda, 1/2)$-robust
for $\lambda \ge n$, then it must be that $k = \omega(\log \lambda)$.
\end{theorem}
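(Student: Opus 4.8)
The plan is to exhibit an explicit polynomial-time adversary $\calA$ whose goal is to find $n$ items that all hash into the ``first half'' of each of the $k$ disjoint sub-tables, and to show that if $k = O(\log \lambda)$ then $\calA$ succeeds with probability at least $1/2$ (in fact overwhelmingly), contradicting $(\lambda, 1/2)$-robustness. Concretely, for each of the $k$ sub-tables $B_1,\ldots,B_k$ of $b/k$ entries, designate the first $\lfloor b/(2k)\rfloor$ entries as ``good.'' Call an item $x$ \emph{good} if $H_i(x)$ lands in the good half of $B_i$ for every $i \in [k]$; since $H$ is a random function, each item is good independently with probability roughly $2^{-k}$. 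The adversary queries the oracle $\calO$ on $x = 1, 2, 3, \ldots$ until it collects $n$ good items, then outputs that set $X$.

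The first key step is to argue that a set $X$ of $n$ good items is genuinely a failing input, i.e.\ $\construct(\calH, X) = \perp$. This is where the sub-linearity hypotheses enter. All $n$ good items have their $k$ neighborhoods confined to the $k$ good halves, which together with the stash form a right-vertex set of size at most $k \cdot \lfloor b/(2k)\rfloor \cdot \ell + s \le b\ell/2 + s$. Using $b = O(n/\ell)$ this is $O(n)$, but I need it to be strictly less than $n$ for large $n$: taking $b \le cn/\ell$ for the appropriate constant (the $b = O(n/\ell)$ in the hypothesis), the good right-vertices number at most $cn/2 + s = cn/2 + o(n)$; as long as the construction's constant satisfies $c < 2$ this is $< n$ eventually, so by Hall's theorem (Lemma~\ref{lem:hall}) no left-perfect matching exists and construction must fail. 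If $c \ge 2$ I can instead carve out a $1/c'$-fraction good region for a small enough $c'$ so that $(b/c')\ell + s < n$; the argument is identical. The point is only that $\ell = o(n)$ and $s = o(n)$ let the stash and the per-entry slack be absorbed.

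The second key step is the running-time / success-probability analysis. Each queried item is good independently with probability $p \ge 2^{-k} \cdot (1 - o(1))$ (the $(1-o(1))$ accounts for the floor in $\lfloor b/(2k)\rfloor$, negligible once $b/k$ is large). If $k \le C\log\lambda$ then $p \ge \lambda^{-C}/2$, so by a Chernoff bound the adversary finds $n$ good items among its first, say, $4n/p = O(n\lambda^{C})$ queries except with probability $\le 2^{-\Omega(n/p)} \ll 1/2$. Since $\lambda \ge n$, this query count and the adversary's total running time are $\poly(\lambda)$, so $\calA$ is an admissible adversary for $(\lambda, 1/2)$-robustness, yet it causes a construction failure with probability $> 1/2$ — contradiction. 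Hence $k$ cannot be $O(\log\lambda)$, i.e.\ $k = \omega(\log\lambda)$, completing the proof.

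The main obstacle I anticipate is the bookkeeping in the first step: making sure the ``good region + stash'' right-vertex count is provably below $n$ for \emph{every} admissible constant hidden in $b = O(n/\ell)$, which forces choosing the good-fraction parameter as a function of that constant rather than fixing it at $1/2$. Everything else — independence of goodness across items, the Chernoff concentration, and the conclusion that $\poly(\lambda)$ queries suffice when $k = O(\log\lambda)$ — is routine. One minor subtlety worth a sentence in the write-up: the adversary must output $n$ \emph{distinct} identifiers, which is automatic since it queries distinct inputs $1,2,3,\ldots$.
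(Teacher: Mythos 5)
Your proposal is correct and follows essentially the same route as the paper's proof: an explicit $\poly(\lambda)$-time adversary that enumerates inputs until it finds $n$ items all hashing into a restricted region of each of the $k$ sub-tables (the paper uses the first $n/(2\ell k)$ entries per table, which is exactly your ``shrink the good fraction'' fix for the constant in $b = O(n/\ell)$), a counting/Hall argument that such a set cannot be allocated since the region plus stash holds fewer than $n$ items, and a Chernoff bound showing $\poly(\lambda)$ queries suffice when $k = O(\log\lambda)$. The only nit is that your tail bound should read $2^{-\Omega(n)}$ (i.e., exponential in the expected number of successes) rather than $2^{-\Omega(n/p)}$, which does not affect the conclusion.
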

\begin{proof}
To prove this, we assume a contradiction
that $k = O(\log \lambda)$ and we will show that there
exists a $\poly(\lambda)$ time adversary that outputs
a set $S$ of $n$ items to insert
such that all $n$ items
hash to the same $n/(2\ell)$ bins
except with at most $1/2$ probability.
In this case, we note that the $n/(2\ell)$ bins and the stash can
store at most
$n/(2\ell) \cdot \ell + s = n/2 + o(n) \le 3n/4$ that cannot store all $n$
items in the set $S$ that would complete the proof.

To construct our adversary, we will
leverage that $k = O(\log \lambda)$. Assuming that
$H$ is a random hash function, we know
that for any input $x$,
$$\Pr[H_1(x) \le n/(2\ell k) \land \ldots \land H_k(x) \le n/(2\ell k)] = \left(\frac{n/(2\ell k)}{b/k}\right)^k = \left(\frac{n}{2\ell b}\right)^k.$$
As $b = O(n/\ell)$, there exists some
constant $\alpha > 0$ such that
$b \le \alpha n/(2\ell)$ meaning that
$(n/(2\ell b))^k \le (1/\alpha)^k$.
Suppose that $k \le c \log_\alpha n$ for
some constant $c > 0$ since $k = O(\log n)$ and $\alpha$ is a positive constant,
then we know that
$\Pr[H_1(x) \le n/(2\ell k) \land \ldots \land H_k(x) \le n/(2\ell k)] \le (1/\alpha)^k \le 1/\lambda^c$.
Next, we construct the following
adversary using the above probability
that aims to find a set $S$ of $n$ items
such that all items satisfy the above
property in the following way:

\smallskip\noindent{\bf Adversary $\calA(H_1,\ldots,H_k)$:}
\begin{enumerate}
    \item Let $S \leftarrow \emptyset$.
    \item Let $\cnt \leftarrow 0$.
    \item While $|S| < n$ and $\cnt \le \lambda^{c+2}$:
    \begin{enumerate}
        \item If $H_1(\cnt) \le n/(2\ell k) \land \ldots \land H_k(\cnt) \le n/(2\ell k)$, set $S \leftarrow S \cup \{\cnt\}$.
        \item Set $\cnt \leftarrow \cnt + 1$.
    \end{enumerate}
    \item If $|S| < n$, return $\perp$.
    \item Return $S$ as the $n$ items to insert.
\end{enumerate}
We analyze a slight modification of the adversary that executes all $\lambda^{c+2}$ iterations before returning. Let $X_i = 1$ if and only if the $i$-th iteration
(when $\cnt = i$) succeeds in being placed into $S$. Let $X = X_1 + \ldots + X_{n^{c+2}}$. We know that the adversary
outputs $\perp$ if and only if 
$X < n$.
Note that $\Pr[X_i = 1] = 1/\lambda^c$, so
$\mu = \E[X] = \lambda^{c+2}/\lambda^c = \lambda^2$.
As each $X_i$ is independent
due to the random hash functions and $\lambda \ge n$, we can apply Chernoff's Bound to get that
$
\Pr[X < n] \le \Pr[X < \lambda^2/2] = \Pr[X < \mu/2] \le 2^{-\Theta(\lambda^2)}
$.
In other words, the adversary
outputs the desired set $S$ with
probability at least $1-2^{-\Theta(\lambda^2)} > 1/2$ as required. Note the
adversary is polynomial time as each of the $\lambda^{c+2}$ iteration requires $O(k) = O(\log \lambda)$ time
by assumption. Therefore, the adversary's running time is $O(\lambda^{c+2} \log \lambda)$ that is polynomial in $\lambda$
as $c$ is a positive constant.
\end{proof}

The above shows that if we consider sub-linear $s$ and $t$, then it must be that $k = \omega(\log \lambda)$. We can consider
the contrapositive of the above theorem. Suppose
that $k = O(\log \lambda)$ that is slightly smaller
than the lower bound above. Assuming that $b = O(n/\ell)$ to ensure storage of the hash table remains linear,
this immediately implies that either $s = \Omega(n)$
or $\ell = \Omega(n)$.
In other words, either the stash or entry must be able to store
almost all $n$ inserted items and the resulting query overhead
is $O(n)$.
These parameter sets are essentially trivial 
as it is equivalent to retrieving the entire cuckoo hash table.

\begin{theorem}
\label{cor:sl_robust_lb}
Suppose that $k, \ell \ge 1$ and $s \ge 0$ such that $k = O(\log \lambda)$ and
$b = O(n/\ell)$.
If the cuckoo hashing scheme $\CH(k, b, \ell, s)$ is $(\lambda, 1/2)$-robust for $\lambda \ge n$, then it must be that $s + \ell = \Omega(n)$ with query overhead $\Omega(n)$.
\end{theorem}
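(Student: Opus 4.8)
The plan is to derive Theorem~\ref{cor:sl_robust_lb} as the contrapositive of Theorem~\ref{thm:lb_robust_k}. That theorem establishes that whenever $\ell = o(n)$, $s = o(n)$ and $b = O(n/\ell)$, any $(\lambda,1/2)$-robust instantiation of $\CH(k,b,\ell,s)$ with $\lambda \ge n$ must use $k = \omega(\log\lambda)$ hash functions. So I would argue as follows. Assume the scheme $\CH(k,b,\ell,s)$ is $(\lambda,1/2)$-robust with $k = O(\log\lambda)$ and $b = O(n/\ell)$, and suppose toward a contradiction that $s + \ell = o(n)$. Then in particular both $s = o(n)$ and $\ell = o(n)$, so all hypotheses of Theorem~\ref{thm:lb_robust_k} are satisfied, and that theorem forces $k = \omega(\log\lambda)$, contradicting $k = O(\log\lambda)$. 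Hence $s + \ell = \Omega(n)$.

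It remains to convert this into the query-overhead bound. Recall that the query overhead of $\CH(k,b,\ell,s)$ is exactly $k\ell + s$, and since $k \ge 1$ we get $k\ell + s \ge \ell + s = \Omega(n)$, which is the claimed bound. Conceptually, this says that once the number of hash functions drops even slightly below the $\omega(\log\lambda)$ threshold achieved by our construction in Theorem~\ref{thm:robust_k}, robustness against $\poly(\lambda)$ adversaries can only be recovered by using an entry size or stash that stores a constant fraction of all $n$ inserted items, i.e., by making the table essentially trivial to query.

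The one delicate point is the quantifier-level passage from ``$s+\ell$ is not $\Omega(n)$'' to ``$s+\ell = o(n)$'': strictly, the former only yields a subsequence of input sizes $n$ along which $s+\ell = o(n)$, and one applies Theorem~\ref{thm:lb_robust_k} along that subsequence; this is routine asymptotic bookkeeping. If one prefers a self-contained route that sidesteps this, the alternative is to re-run the adversary $\calA$ from the proof of Theorem~\ref{thm:lb_robust_k} essentially verbatim (taking $\ell < n/2$ without loss of generality, as otherwise $\ell = \Omega(n)$ already): for $k = O(\log\lambda)$ it is still a $\poly(\lambda)$-time adversary, and it outputs $n$ items whose $k$ hash values all land in the first $n/(2\ell)$ entries, which together with the stash hold only $n/2 + s$ items; whenever $s < n/2$ this is a construction failure with probability $> 1/2$, so $(\lambda,1/2)$-robustness already forces $s = \Omega(n)$ and a fortiori $s + \ell = \Omega(n)$. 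I do not expect any genuine obstacle here — the content is entirely inherited from Theorem~\ref{thm:lb_robust_k}; the only care needed is in stating the asymptotic quantifiers correctly.
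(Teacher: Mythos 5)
Your proposal is correct and matches the paper's intent exactly: the paper presents this theorem as the contrapositive of Theorem~\ref{thm:lb_robust_k} (see the discussion paragraph immediately preceding it) and gives no further proof. Your handling of the quantifier subtlety and the observation that $k\ell + s \ge \ell + s$ are sound refinements of the same argument.
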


\section{Batch Codes}
\label{sec:pbc_pir}
\subsection{Probabilistic Batch Codes}

The notion of batch codes was introduced by
Ishai, Kushilevitz, Ostrovsky and Sahai~\cite{IKO04}.
At a high level, the goal of a batch code
is to distribute a database of $n$ entries into
$m$ buckets such that any subset of $q$ entries
may be retrieved by querying at most $t$
codewords from each of the $m$ buckets.
The size parameter $N$ denotes the total codewords across all $m$ buckets
and we denote the rate of
the batch code by $n/N$.
When constructing batch codes, the goal is to maximize the rate $n/N$ while keeping the number of buckets
$m$ as close to $q$ as possible, ideally $m = O(q)$, and minimizing $t$, ideally $t = 1$.

Leveraging our results in cuckoo hashing, we will present improved constructions for batch codes. In
particular, we will present a probabilistic batch code (PBC)
with quadratically smaller rate compared to prior works (see Figure~\ref{table:pbc_compare}).

\begin{figure*}[tb]
\centering
\small
\begin{tabular}{| l | c | c | c | c | c |}
        \hline
      \makecell{\bf PBC} & \makecell{\bf Size ($N$)} & \makecell{\bf Buckets ($b$)} &  \makecell{\bf Explicit?} & \makecell{\bf Error}
\\ \hline
    \makecell[l]{Subset~\cite{IKO04}} & $O(n)$ & $q^{O(1)}$ & $\checkmark$ & 0  \\ \hline
    \makecell[l]{Expander Graphs~\cite{IKO04}} & $O(n \log n)$ & $O(q)$ & $\times$ & 0  \\ \hline
        \makecell[l]{Balbuena Graphs~\cite{rawat2016batch}} & $O(n)$ & $O(q^3)$ & $\checkmark$ & 0  \\ \hline
    \makecell[l]{Pung~\cite{angel2016unobservable}} & $4.5n$ & $9q$ & $\checkmark$ & $2^{-20*}$ \\ \hline
    \makecell[l]{3-way Cuckoo Hashing~\cite{angel2018pir}} & $3n$ & $1.5q$ & $\checkmark$ & $2^{-40*}$  \\ \hline
    \hline
    
    \makecell[l]{Our Work} & $O(n \cdot \sqrt{\lambda/\log\log n})$ & $O(q)$ & $\checkmark$ & $2^{-\lambda}$ \\ \hline
\end{tabular}
\caption{A comparison table of prior PBC constructions with $t = 1$. Constructions with only experimental evaluations are marked with asterisks(*).}
\label{table:pbc_compare}
\iffull
\else
\vspace*{-5mm}
\fi
\end{figure*}

Probabilistic
batch codes (PBCs) were introduced by Angel, Chen, Laine and Setty~\cite{angel2018pir}.
Unlike batch codes, PBCs are able to err on a subset
of potential queries with the goal of obtaining more efficient
parameters. To date, state-of-the-art PBCs are built from either
directly adapting batch codes with zero error or constructions
whose error probabilities have only been experimentally evaluated.
By adapting our analysis of cuckoo hashing, we are able to construct
a PBC with provable error probabilities that have better parameters
than all prior works. To our knowledge, our batch code either has quadratically better rate or cubically smaller number of buckets than the
best prior construction (including non-explicit ones).
We point readers to Figure~\ref{table:pbc_compare} for more details.

Before we present our constructions, we formally define the notion of PBCs.
Note, we will construct {\em systematic} or {\em replication} batch codes where each
codeword must be one of the $n$ entries in the database.

\begin{definition}[Probabilistic Batch Codes]
A $(n, N, q, m, t)$-systematic PBC consists of the following four efficient algorithms:
\begin{itemize}
    \item $\params \leftarrow \init(1^\lambda)$: The initialization algorithm takes the security parameter and
    outputs parameters.
    \item $(C_1,\ldots,C_m) \leftarrow \encode(\params,\db)$: The encode algorithm
    takes a database $\db$ of $n$ entries as input
    and outputs $m$ buckets such that the total number of codewords
    is at most $N$. Furthermore,
    each $(C_i)_j$ must be one of the $n$ database entries in the set $\{\db_i\}_{i \in [n]}$.
    \item $(S_1,\ldots,S_m) \leftarrow \genschedule(\params,Q)$: The schedule algorithm
    takes as input a query $Q$ of $q$ distinct elements and outputs a schedule of the indices of each bucket to read such that
    each $|S_i| \le t$.
    \item $A \leftarrow \decode(\params,Q, (C_1)_{i \in S_1},\ldots, (C_m)_{i \in S_m})$: The decode algorithm
    takes as input a query $Q$ of $q$ distinct elements in $[n]$
    and the scheduled indices of each code and outputs the
    queried database entries.
\end{itemize}
Furthermore, the PBC has error at most $\epsilon$ if, for
all database $\db$ and queries $Q$ of $q$ distinct elements, the following holds:
\[
\Pr\left[R \ne (\db_i)_{i \in Q} : 
\begin{array}{l}
\params \leftarrow \init(1^\lambda)\\
(C_1,\ldots,C_m) \leftarrow \encode(\params, \db)\\
(S_1,\ldots,S_m) \leftarrow \genschedule(\params, Q)\\
R \leftarrow \decode(\params, Q, (C_1)_{i \in S_1},\ldots, (C_m)_{i \in S_m})\\
\end{array}\right] \le \epsilon.
\]
\end{definition}

As a note, we only consider PBCs whose queries do not contain
duplicate entries (i.e., each query is to a unique entry).
In most practical applications, the querier can handle duplicate entries
in the scheduling algorithm by removing duplicates and then duplicating
them in the decode algorithm.
This assumption is not limiting in most practical applications such as PIR (see Section~\ref{sec:app_pir}).

Next, we show that a cuckoo hashing scheme may
be used to construct PBCs with similar parameters.
We note that a similar reduction was informally shown in~\cite{angel2018pir} previously. 
\iffull
\else
We present the proof in the full version~\cite{cryptoeprint:2022/1455}.
\fi

\begin{lemma}
\label{lem:ch_pbc}
If there exists a cuckoo hashing $\CH(k, b, \ell, s)$ for $q$ items
that has failure probability at most $\epsilon$,
then there exists a $(n, (k + \ceil{s/\ell})n, q, b + \ceil{s/\ell}, \ell)$-systematic PBC
for a universe $U$ of size $n$ with error at most $\epsilon$.
\end{lemma}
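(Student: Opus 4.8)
The plan is to turn the cuckoo hashing scheme directly into a systematic PBC, treating the $b$ table entries together with a partition of the $s$-slot stash as the buckets, and using the cuckoo construction algorithm as the scheduling algorithm. Concretely, $\init(1^\lambda)$ runs $\sample(1^\lambda)$ to obtain the hash functions $H_1,\dots,H_k$, which for each $i \in [n]$ determine its $k$ candidate entries $e_1(i),\dots,e_k(i) \in [b]$, one per disjoint sub-table. The encoding $\encode(\params, \db)$ creates $m = b + \ceil{s/\ell}$ buckets: the first $b$ correspond to the $b$ table entries, and the remaining $\ceil{s/\ell}$ correspond to the $s$ stash slots partitioned into consecutive blocks of at most $\ell$ slots each. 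For every database index $i$ we place a copy of $\db_i$ into buckets $e_1(i),\dots,e_k(i)$ and into all $\ceil{s/\ell}$ stash buckets, so $\db_i$ is replicated $k + \ceil{s/\ell}$ times and the total number of codewords is $N = (k+\ceil{s/\ell})n$ as claimed; every codeword is a database entry, so the code is systematic. (Note the bucket contents depend only on $H_1,\dots,H_k$, not on the values $\db_i$.)

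For $\genschedule(\params, Q)$ with a query $Q$ of $q$ distinct elements, I would run the perfect cuckoo construction algorithm on the $q$ items of $Q$ under $H_1,\dots,H_k$. On success it returns an allocation assigning each $i \in Q$ either to a slot of one of its entries $e_c(i)$ or to one of the $s$ stash slots. For an item routed to entry $j \in [b]$, schedule the copy of $\db_i$ that lies in bucket $j$; for an item routed to stash slot $r$, schedule the copy of $\db_i$ in stash bucket $\ceil{r/\ell}$. Since a valid allocation places at most $\ell$ items into each entry, and by construction at most $\ell$ stash items fall in each stash block, each bucket is read at most $t = \ell$ times. The decoder $\decode$ then simply reads back the scheduled copies and outputs $(\db_i)_{i \in Q}$; if the construction failed it outputs $\perp$.

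It remains to bound the error. The scheme errs only when the cuckoo construction on the fixed set $Q$ of $q$ distinct universe elements fails. Since the PBC error definition fixes $\db$ and $Q$ before $\params$ is sampled — so $Q$ is independent of the hash functions — the construction failure probability is at most $\epsilon$ by hypothesis, hence the PBC has error at most $\epsilon$. Collecting the parameters $N = (k+\ceil{s/\ell})n$, $m = b+\ceil{s/\ell}$, $t = \ell$ yields the stated $(n,(k+\ceil{s/\ell})n,q,b+\ceil{s/\ell},\ell)$-systematic PBC.

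The only delicate point — and essentially the sole obstacle — is the stash bookkeeping: each $\db_i$ must be replicated into \emph{every} stash bucket (not just one), because the construction may route $i$ to any stash slot; and one must verify that partitioning $s$ slots into $\ceil{s/\ell}$ blocks of size $\le \ell$ keeps the per-bucket read count at exactly $t = \ell$ even when $\ell \nmid s$. Everything else follows immediately from the allocation-to-matching correspondence already established, in particular the invariant that any successful allocation puts at most $\ell$ items in each entry.
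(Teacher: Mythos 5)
Your construction is correct and matches the paper's own proof essentially step for step: buckets are the $b$ entries plus $\ceil{s/\ell}$ stash buckets each replicating the whole database, encoding places each entry in its $k$ hashed buckets for $N = (k+\ceil{s/\ell})n$ codewords, scheduling runs the cuckoo construction on $Q$, and the error is exactly the construction failure probability $\epsilon$ since $Q$ is fixed before the hash functions are sampled. The only (immaterial) difference is that you route a stash slot $r$ deterministically to stash bucket $\ceil{r/\ell}$, whereas the paper just observes that $\ceil{s/\ell}$ full-replica buckets suffice to serve at most $s$ stash items at $\ell$ reads each.
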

\iffull
\begin{proof}
We convert a cuckoo hashing scheme into a PBC in the following manner. The $\init$ algorithm executes the $\sample$ algorithm
of cuckoo hashing to obtain hash functions $H_1,\ldots,H_k$.
The $\encode$ algorithm creates $b + \ceil{s/\ell}$ buckets representing
the $b$ entries and the stash. For the $\ceil{s/\ell}$ stash buckets, all
$n$ database entries are added to each bucket using $\ceil{s/\ell} \cdot n$ codewords. As at most $s$ items needs to be
retrieved from the stash and we can retrieve $\ell$
codewords from each bucket, only $\ceil{s/\ell}$ buckets are needed.
For the remaining $b$ buckets, each of the $n$ items in $U$
are added to $k$ buckets according to $H_1,\ldots,H_k$ using $kn$ codewords meaning $N = (k+\ceil{s/\ell})n$.

In $\genschedule$, the querier allocates
the $q$ elements in $Q$ using cuckoo hashing. As a result, the
querier can determine the correct indices in each of the $b+s$ buckets needed to decode the $q$ database entries.
Finally, we note that the cuckoo hashing algorithm
fails with probability at most $\epsilon$ for any set
of $q$ items. Therefore, this immediately implies
that $\decode$ has error probability at most $\epsilon$.
\end{proof}
\fi

As an immediate consequence of this lemma, one
can immediately construct a PBC with negligible error
using prior cuckoo hashing with large stash results~\cite{kirsch2010more,aumuller2014explicit,minaud2020note}.
For example, one can obtain a
$(n, O(n\lambda), q, O(q), 1)$-PBC
with error $2^{-\lambda}$.
We omit the proof as it was
already known to exist in folklore.

Using the above reduction, we can construct an efficient PBC
with rate $1/O(\sqrt{\lambda/\log \log n})$
from Theorem~\ref{thm:main}
with quadratically better rate than prior
constructions with $b = O(q)$.
We point to
Figure~\ref{table:pbc_compare} for further comparisons.
\iffull
\else
The proof may be found in the full version~\cite{cryptoeprint:2022/1455}.
\fi

\begin{theorem}
\label{thm:pbc}
For all $1 \le q \le n$,
there exists a $(n, N, q, b, \ell)$-systematic
PBC with
at most
$2^{-\lambda}$ error
where $b = O(q)$, $\ell = 1$ and
$N = O(n \cdot \sqrt{\lambda/\log \log n})$.
\end{theorem}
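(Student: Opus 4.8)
The plan is to combine the generic reduction of Lemma~\ref{lem:ch_pbc} with the improved cuckoo hashing scheme of Theorem~\ref{thm:main}. Concretely, I would instantiate $\CH(k, b, \ell, s)$ on the $q$ queried items with entry size $\ell = 1$, no stash $s = 0$, $b = O(q)$ entries, and target construction failure probability $\epsilon = 2^{-\lambda}$. Theorem~\ref{thm:main}, applied with its ``$n$'' set to the number $q$ of items being allocated, then says that $k = O\!\left(1 + \sqrt{\lambda/\log q}\right)$ hash functions suffice (using a $(qk)$-wise independent, or pseudorandom, $H$). Plugging $s=0$, $\ell=1$ into Lemma~\ref{lem:ch_pbc} yields an $\left(n,\ kn,\ q,\ b,\ 1\right)$-systematic PBC of error at most $2^{-\lambda}$ with $b = O(q)$ and $t = \ell = 1$, so that $N = kn = O\!\left(n\left(1 + \sqrt{\lambda/\log q}\right)\right)$. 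A scheduling failure of this PBC occurs exactly when the cuckoo allocation of the query set fails, which happens with probability at most $2^{-\lambda}$ over the choice of $H$, so correctness is inherited verbatim.

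The remaining work is to replace $\log q$ by $\log\log n$ uniformly over $1 \le q \le n$. The key observation is that $\genschedule$ may always pad the query $Q$ up to $q' = \max(q, \lceil\log n\rceil)$ distinct elements of $[n]$ (the extra dummy indices are simply ignored by $\decode$), and run the cuckoo allocation on these $q'$ items, with $\encode$ laying out $b = O(q')$ entries. Since $\log q' \ge \log\log n$, Theorem~\ref{thm:main} now gives $k = O\!\left(1 + \sqrt{\lambda/\log\log n}\right)$ and hence $N = kn = O\!\left(n\sqrt{\lambda/\log\log n}\right)$, while the number of buckets is $O(q')$, which is $O(q)$ in the regime $q \ge \log n$ that batch codes target; for the degenerate range $q < \log n$ one falls back to the trivial replication code, which has exactly $m = q$ buckets, $t=1$, and zero error. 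Crucially, padding to $\Theta(\log n)$ items does not hurt the size bound, because $N = kn$ depends only on $k$ and not on how many items are allocated, and it does not change $t$, which remains $1$.

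I expect the main obstacle to be exactly this bookkeeping across the full range of $q$: for moderate-to-large $q$ the bound is an immediate substitution, but one needs the padding trick (or the case split to a trivial code) to prevent a spurious $\sqrt{\lambda/\log q}$ blow-up when $q$ is small, and then to check that neither $N$ nor the per-bucket read count $t$ degrades under padding. A secondary point, worth stating explicitly, is the hash-function assumption: the $(q'k)$-wise independence demanded by Theorem~\ref{thm:main} can be realized by an explicit $(q'k)$-wise independent family or, at lower storage cost, by a PRF or random oracle, and this choice is what $\init$ samples in the resulting PBC.
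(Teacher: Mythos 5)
Your overall route is the paper's: instantiate Theorem~\ref{thm:main} on the $q$ queried items with $\ell=1$, $s=0$, $b=O(q)$ and $\epsilon=2^{-\lambda}$, push it through Lemma~\ref{lem:ch_pbc}, and fall back to a trivial replication code for small $q$. The gap is in where you place the case split, together with a feasibility constraint you never check. The scheme $\CH(k,b,\ell,s)$ uses $k$ \emph{disjoint} sub-tables of $b/k$ entries each, so it only exists when $k\le b=O(q)$; since achieving error $2^{-\lambda}$ requires $k=\Theta(\sqrt{\lambda/\log q})$ (and Theorem~\ref{thm:general_lb} shows this many hash functions are necessary when $\ell,s=O(1)$), one needs $q^2\log q=\Omega(\lambda)$. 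Your threshold $q\ge\log n$ does not guarantee this: for $\lambda\gg\log^2 n\cdot\log\log n$ (e.g.\ $\lambda=n$) and $q=\log n$, the required $k$ vastly exceeds the $O(\log n)$ available entries, so the construction simply does not exist in that range. Symmetrically, on the other side of your split, the replication code for $q<\log n$ has $N=qn$, which can be as large as $\Theta(n\log n)$ and exceeds the claimed $O(n\sqrt{\lambda/\log\log n})$ whenever $\lambda=o(\log^2 n\cdot\log\log n)$ --- i.e.\ precisely the standard regime of fixed or slowly growing error exponent.

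Both problems disappear if the split is placed where the paper puts it, at $q=\Theta(\sqrt{\lambda/\log\log n})$: below the threshold the replication code has $N=qn=O(n\sqrt{\lambda/\log\log n})$ by definition of the threshold, and above it one verifies $q^2\log q=\Omega(\lambda)$, so $k\le q\le b$ is feasible and $\log q=\Omega(\log\log n)$, giving $k=O(\sqrt{\lambda/\log q})=O(\sqrt{\lambda/\log\log n})$ with no padding needed. Note also that your padding device does no work as written: for $q\ge\log n$ it is the identity, and for $q<\log n$ you abandon it for the trivial code anyway --- and had you kept it there, laying out $b=O(q')=O(\log n)$ buckets would violate the requirement $b=O(q)$.
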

\iffull
\begin{proof}
We break the construction into
two different regimes for values of $q$.
First, we consider larger values
of $q = \Omega(\sqrt{\lambda/\log \log n})$.
For this case, we utilize the cuckoo hashing with
negligible failure of Theorem~\ref{thm:main}
for a set of $q$ items.
In particular,
we use the setting of
large $k$ with $b = O(q)$, $\ell = 1$ and $s = 0$.
We set $\epsilon = 2^{-\lambda}$
to get that
$k = O(\sqrt{\lambda /\log q})$
where $k \le b$.
Note, this requirement comes from the fact
that we use $k$ disjoint tables that is only
possible when $k \le b$.
To guarantee that $k \le b$, we can also
ensure that $k \le q$ as $b \ge q/\ell = q$
since $\ell = 1$.
From this, we derive the following requirement
for $q$:
$$
O(\sqrt{\lambda /\log q}) = k \le q \implies q^2\log q = \Omega(\lambda).
$$
Plugging in values $q = \Omega(\sqrt{\lambda/\log \log n})$ ensures that the inequality is true.
By applying Lemma~\ref{lem:ch_pbc}, we get the PBC with the desired parameters for
the setting of $q = \Omega(\sqrt{\lambda/\log \log n})$.

For the case of smaller $q = O(\sqrt{\lambda/\log\log n})$, we can use a simple batch code of
$b = q$ bins where each bin stores all $q$ items.
Clearly, this is a batch code with zero error
while
still satisfying the desired parameters to complete the proof.
\end{proof}
\fi

\subsection{Robust Probabilistic Batch Codes}
We introduce the notion of adversarially robust PBCs
that lies in between batch codes with zero error and PBCs with
negligble error. Robust PBCs guarantee that even, if there
does exist an input that would err, no PPT adversary will be able
to find the erring input with non-negligible probability.
In other words, robust PBCs provide stronger guarantees compared
to normal PBCs. However, we note that batch codes with zero error
are robust PBCs as no erring input exists.
We show this relaxation enables more efficient explicit
constructions.
We will also show later in Section~\ref{sec:reusable_batch_pir} that
robust PBCs may be useful for batch PIR schemes
where hash functions must be made public.

\begin{figure*}[tb]
\centering
\small
\begin{tabular}{| l | c | c | c | c | c |}
        \hline
      \makecell{\bf PBC} & \makecell{\bf Size ($N$)} & \makecell{\bf Buckets ($b$)} &  \makecell{\bf Explicit?}
\\ \hline
    \makecell[l]{Subset~\cite{IKO04}} & $O(n)$ & $q^{O(1)}$ & $\checkmark$ \\ \hline
    \makecell[l]{Expander Graphs~\cite{IKO04}} & $O(n \log n)$ & $O(q)$ & $\times$ \\ \hline
        \makecell[l]{Balbuena Graphs~\cite{rawat2016batch}} & $O(n)$ & $O(q^3)$ & $\checkmark$ \\ \hline\hline
    \makecell[l]{Our Work} & $O(n \cdot (f(n) + \lambda)), f(n) = \omega(\log n)$ & $O(q)$ & $\checkmark$ \\ \hline
\end{tabular}
\caption{A comparison table of $(2^{-\lambda})$-robust PBC constructions with $t = 1$.
\iffull A scheme is $(2^{-\lambda})$-robust if any $\poly(n)$ time adversary can find an erring input with probability at most $2^{-\lambda}$.
\fi}

\label{table:robust_pbc_compare}
\iffull
\else
\vspace*{-5mm}
\fi
\end{figure*}

We present robust PBC constructions
from robust cuckoo hashing. Our robust PBC is the
best explicit construction with $O(q)$ buckets. To our knowledge, all
other robust PBCs come directly from zero-error batch codes.
Furthermore, the most efficient zero-error schemes from expander graphs
are non-explicit.
See Figure~\ref{table:robust_pbc_compare} for more comparison.
At a high level, the PPT adversary is given
the parameters of the scheme (including the hash functions)
and the database. The goal
of the adversary is to produce a subset $Q$ that
cannot be correctly decoded by the scheme.
\iffull
\begin{definition}[Robust Probabilistic Batch Codes]
A $(n, N, q, m, t)$-systematic PBC is $\epsilon$-robust, if for any polynomial time adversary $\calA$, the following holds:
\[
\Pr\left[
R \ne (\db_i)_{i \in Q}: 
\begin{array}{l}
\params \leftarrow \init(1^n)\\
(\db, Q) \leftarrow \calA(1^n, \params)\\
(C_1,\ldots,C_m) \leftarrow \encode(\params, \db)\\
(S_1,\ldots,S_m) \leftarrow \genschedule(\params, Q)\\
R \leftarrow \decode(\params, Q, \{(C_j)_{j \in S_i}\}_{i \in [m]})
\end{array}\right] \le \epsilon.
\]
\end{definition}

\smallskip\noindent{\bf Discussion about Computational Adversaries.}
At first, our usage of computational adversaries may
seem unnecessary. For proving robustness in cuckoo hashing,
we consider PPT adversaries to limit the number
of hash evaluations known to the adversary.
For PBCs, this is already limited by the query universe $[n]$. So, it seems like one could build
a robust PBC against computationally unbounded adversaries.
The difficulty lies in the
random hash functions. If we leverage explicit random hash functions that are $(nk)$-wise independent, we will require
$\Omega(nk)$ storage that increase
the rate logarithmically for databases of single bits.
Therefore, we must construct our random hash functions using
either PRFs or random oracles, which would require computational
guarantees.
\else
See the full version~\cite{cryptoeprint:2022/1455} for more definitions of robust PBCs.
\fi

For our construction, we will simply use a
robust cuckoo hashing scheme and follow
the exact same approach as Lemma~\ref{lem:ch_pbc}.
The additional work needed is to show that one can build
robust PBCs using robust cuckoo hashing.
\iffull
\else
The proof of the following may be found in the full version.
\fi

\begin{theorem}
\label{thm:robust_pbc}
For all $1 \le q \le n$ and any function $f(n) = \omega(\log n)$, there exists a $(n, N, q, b, \ell)$-systematic
PBC that is $(2^{-\lambda})$-robust
where $b = O(q)$, $\ell = 1$ and
$N = O\left(n \cdot (f(n) + \lambda)\right)$.
\end{theorem}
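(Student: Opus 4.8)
The plan is to mirror the proof of Theorem~\ref{thm:pbc}, replacing the cuckoo hashing scheme of Theorem~\ref{thm:main} with the robust scheme of Theorem~\ref{thm:robust_k}, and then to verify that the cuckoo-hashing-to-PBC reduction of Lemma~\ref{lem:ch_pbc} transports robustness from the hashing scheme to the PBC. As in Theorem~\ref{thm:pbc}, I would split on the size of $q$, since the $k$-disjoint-table construction is only defined when $k \le b$.

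For the construction, fix the given $f(n) = \omega(\log n)$ and set $\epsilon = 2^{-\lambda}$. When $q = \Omega(f(n) + \lambda)$, I would instantiate $\CH(k, b, \ell, s)$ on the universe $[n]$ using the parameters of Theorem~\ref{thm:robust_k} with security parameter $n$: $k = O(f(n) + \log(1/\epsilon)) = O(f(n) + \lambda)$, $\ell = 1$, $s = 0$, and $b = \alpha q$ for a constant $\alpha \ge 1$; Theorem~\ref{thm:robust_k} then makes this scheme (on $q$ items) $(n, 2^{-\lambda})$-strongly robust, i.e.\ robust against every $\poly(n)$-time adversary. Plugging this into Lemma~\ref{lem:ch_pbc} with $s = 0$, $\ell = 1$ yields an $(n,\, kn,\, q,\, b,\, 1)$-systematic PBC, so $N = kn = O(n(f(n)+\lambda))$ and the bucket count is $b = O(q)$, as needed; the assumption $q = \Omega(f(n)+\lambda)$ is exactly what guarantees $k \le b$. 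For the complementary regime $q = O(f(n)+\lambda)$, I would use the trivial zero-error batch code with $b = q$ buckets, each storing all $n$ entries, and $t = 1$: decoding the $i$-th query from the $i$-th bucket always succeeds, so $N = qn = O(n(f(n)+\lambda))$ with zero error, and zero-error codes are trivially $(2^{-\lambda})$-robust.

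The step that actually needs an argument is that the reduction preserves robustness: if the underlying $\CH(k,b,\ell,s)$ is $(n, 2^{-\lambda})$-robust, then the PBC obtained from Lemma~\ref{lem:ch_pbc} is $(2^{-\lambda})$-robust. I would argue this by a reduction. Given a $\poly(n)$-time PBC adversary $\calA$, which on $(1^n, \params)$ outputs $(\db, Q)$, build a cuckoo hashing adversary $\mathcal{B}$ that receives the sampled hash functions $H \leftarrow \sample(1^n)$, assembles $\params$ exactly as $\init$ does (whose only nontrivial content is $H$), runs $(\db, Q) \leftarrow \calA(1^n, \params)$, and outputs the $q$-element set $Q$ as its items. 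The key observation is that in Lemma~\ref{lem:ch_pbc} the schedule algorithm simply runs $\construct(H, Q)$ and the decoded answer equals $(\db_i)_{i\in Q}$ precisely when $\construct(H, Q) \ne\ \perp$ --- the database contents $\db$ never affect whether a valid allocation exists --- so $\calA$'s decoding-failure event coincides with $\mathcal{B}$'s construction-failure event on $Q$, which $(n, 2^{-\lambda})$-robustness bounds by $2^{-\lambda}$. Since $\mathcal{B}$ runs in $\poly(n)$ time whenever $\calA$ does, this finishes the nontrivial regime; the small-$q$ regime is vacuous since no erring query exists.

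I expect the one delicate point to be the instantiation of the hash function $H$ inside $\params$: an explicit $(nk)$-wise independent family would add $\Omega(nk)$ to the description size and hence inflate $N$ (and the rate) for single-bit entries, so $H$ must instead be a PRF or a random oracle, which is consistent with the computational-adversary model already used in Theorem~\ref{thm:robust_k}. Beyond that, the only remaining bookkeeping is confirming $k = O(f(n)+\lambda) \le \alpha q = b$ in the nontrivial regime (this is exactly the case split above) and tracking that the hidden constants in $b = O(q)$ and $N = O(n(f(n)+\lambda))$ are absolute.
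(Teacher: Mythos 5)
Your proposal is correct and follows essentially the same route as the paper: the same case split on $q$, the same instantiation of Theorem~\ref{thm:robust_k} fed through the Lemma~\ref{lem:ch_pbc} reduction, and the same adversary-to-adversary argument that a PBC decoding failure yields a cuckoo hashing construction failure. Your explicit reduction and the remark on instantiating $H$ as a PRF/random oracle match the paper's discussion of computational adversaries.
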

\iffull
\begin{proof}
We break down the analysis into two cases
depending on the value of $q$.
We start for larger values of
$q \ge f(n) = \omega(\log n)$.
This construction follows
from using the
robust cuckoo hashing of
Theorem~\ref{thm:robust_k}.
By setting $\epsilon = 2^{-\lambda}$,
we can set $k = \Theta(f(n) + \lambda)$, $b = O(q)$, $\ell = 1$
and $s = 0$
to obtain a cuckoo hashing scheme
that is $(n, \epsilon)$-strongly robust
assuming that $\calH$ is a PRF.
As our cuckoo hashing uses $k$ disjoint tables, we must have
$k \le b$.
Note that
we require that $k \le b$ as we use $k$ disjoint tables
in cuckoo hashing.
We can guarantee that $k \le b$ by ensuring
$k \le q$ as $b \ge q/\ell = q$ as $\ell = 1$.
We can derive the following requirements for
values of $q$:
$$
O(f(n)) = k \le q \implies q = \Omega(f(n)).
$$
Next, we need to show that the robustness of cuckoo hashing
implies the robustness of the PBC. Suppose this PBC
is not robust and a PPT adversary can find a subset $Q$ of size $q$
that cannot be decoded correctly with
probability strictly larger than $\epsilon$.
In other words, this means that this set of $q$ items
cannot be constructed into a cuckoo hash table
according to the current hash functions.
The same PPT adversary can also
find a set of $q$ items that causes a construction failure
in the cuckoo hashing contradicting that the scheme
was robust.
Therefore, we obtain
the desired robust PBC
for this regime of $q \ge f(n)$.

For the case of $q < f(n)$, we use
the straightforward batch code construction with zero
error where each of the $b = q$ entries
stores all $q$ items that also achieves
the desired parameters for the regime.
\end{proof}
\fi

\section{Private Information Retrieval}
\label{sec:pir}
Private information retrieval (PIR)~\cite{CKG98,CG97} is a powerful cryptographic primitive
that considers the setting where a client wishes to retrieve the $i$-th
entry from a server hold an $n$-entry database. For privacy, the server should
not learn the index $i$ that is queried by the client.
In this section, we present improved constructions of
PIR utilizing our new cuckoo hashing instantiations.

\subsection{Single-Query to Batch PIR Reductions}
\label{sec:app_pir}

Batch PIR is an extension of standard PIR where the client
wishes to perform {\em batch queries}. The client
holds a set $Q \subseteq [n]$ of $q$ queries and wishes
to return the $i$-th entry for all $i \in Q$.
\iffull
We present a definition of batch PIR below along with adversarial error.

\begin{tcolorbox}
$\indgame(1^\lambda)$:
\begin{enumerate}
    \item The challenger $\calC$ runs $\params \leftarrow \init(1^\lambda)$.
    \item The adversary $(\db, Q^0, Q^1, S) \leftarrow \calA(\params)$ on input the parameters $\params$ and outputs a state $\st$, the database $\db$,
    two batch queries $Q^0$ and $Q^1$
    and a subset $S \subseteq [s]$
    of at most $s_\calA$ servers to compromise.
    \item The challenge executes $E \leftarrow \encode(\params, \db)$.
    \item The challenger $\calC$ executes $\query(\params, Q^\eta, E)$ and records transcript $\calT_1,\ldots,\calT_s$ for all $s$ servers.
    \item The challenger $\calC$ sends transcripts $\{\calT_x\}_{x \in S}$ to the adversary $\calA$.
    \item The adversary $\calA(\{\calT_x\}_{x \in S})$
    outputs a bit $b$.
\end{enumerate}
\end{tcolorbox}

\begin{definition}[Batch PIR]
\label{def:pir}
A $q$-query batch PIR scheme consists of the following three efficient randomized algorithms:
\begin{itemize}
    \item $\params \leftarrow \init(1^\lambda)$: The initialization algorithm takes the security parameter $\lambda$ and outputs parameters for the scheme.
    \item $E \leftarrow \encode(\params, \db)$: The encoding algorithm is executed by the server to compute an encoding $E$
    of the database $\db$.
    \item $\res \leftarrow \query(\params, Q, E)$: The query algorithm is jointly executed by the client and server where the client receives the parameters and a set $Q = \{i_1,\ldots,i_k\} \subseteq [n]$ of $q$ queries and the server receives the parameters and the encoded database $E$.
\end{itemize}
The scheme has error at most $\epsilon$ if,
for every database $\db \in \{0,1\}^n$ and every query $Q \subset [n]$ such that $|Q| \le q$,
$$
\Pr[\query(\params, Q, E) \ne \{\db_i\}_{i \in Q} : \params \leftarrow \init(1^\lambda), E \leftarrow \encode(\params, \db)] \le \epsilon.
$$
Finally, the scheme is $(s, s_\calA, \delta)$-secure if for all stateful
PPT adversaries $\calA$ that
compromise $s_\calA$ of the $s$ servers and all sufficiently large
databases $\db$,
$$
|p^0_\calA - p^1_\calA| \le \delta(|\db|)
$$
where $p^\eta_\calA$ is defined as the probability
$\calA$ outputs 1 in $\indgame$.
\end{definition}
\else
We present a formal definition in the full version.
\fi

\begin{figure*}[tb]
\centering
\small
\begin{tabular}{| l | c | c | c | c |}
        \hline
      \makecell{\bf Explicit Batch PIR} & \makecell{\bf Computational Time} & \makecell{\bf Queries} &  \makecell{\bf Error}
\\ \hline
\makecell[l]{Subset~\cite{IKO04}} & $O(n)$ & $q^{O(1)}$ & $0$ \\ \hline
\makecell[l]{Balbuena Graphs~\cite{rawat2016batch}} & $O(n)$ & $O(q^3)$ & $0$  \\ \hline
    \makecell[l]{Pung~\cite{angel2016unobservable}} & $4.5n$ & $9q$ & $2^{-20*}$  \\ \hline
    \makecell[l]{3-way Cuckoo Hashing~\cite{angel2018pir}} & $3n$ & $1.5q$ & $2^{-40*}$  \\ \hline
    \hline
    
    \makecell[l]{Our Work} & $O(n \cdot \sqrt{\lambda/\log\log n})$ & $O(q)$ & $2^{-\lambda}$ \\ \hline
\end{tabular}
\caption{A comparison table of explicit blackbox single to batch PIR transformations.
The error probability considers queries chosen
independently of the hash functions.
Asterisks (*) denote experimental error probabilities.}
\label{table:batch_pir_compare}
\iffull
\else
\vspace*{-5mm}
\fi
\end{figure*}

We consider
the problem of taking a PIR construction
for a single-query and efficiently transform
it to a batch-query PIR.
The standard way to do this is to utilize a
(probabilistic) batch code to encode
the database to reduce
the problem of a batch PIR query of size $q$
to executing $q$ single-query PIR schemes
(for example, see~\cite{IKO04,angel2018pir}).
To our knowledge, these approaches
result in the most efficient blackbox
transformations that do not make any
other assumptions about the single query PIR scheme.

We present an improved transformation that leverages our
explicit batch codes in Theorem~\ref{thm:pbc}.
Using our cuckoo hashing
based PBC with more hash functions,
we obtain a transformation with
quadratically smaller computational
overhead compared to prior works.
\iffull
\else
The proof can be found in the full version~\cite{cryptoeprint:2022/1455}.
\fi
We point readers to Figure~\ref{table:batch_pir_compare}
for a comparison.

\begin{theorem}
\label{thm:batch_pir}
Suppose there exists a single-query PIR scheme $\Pi$
with communication $c(n)$ and computation $O(n)$. Then,
there exists a batch-query PIR scheme for $q$ queries
with communication $q \cdot c(O(n\sqrt{\lambda/\log \log n}/q + \lambda))$
and computation $O(n \cdot \sqrt{\lambda/\log \log n})$
with error probability $2^{-\lambda}$.
\end{theorem}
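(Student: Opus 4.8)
The plan is to instantiate the standard composition of a single-query PIR with a probabilistic batch code, taking the explicit PBC of Theorem~\ref{thm:pbc} as the batch code. First I would invoke that theorem with error parameter $2^{-\lambda}$ to obtain an $(n, N, q, b, 1)$-systematic PBC with $b = O(q)$ buckets and $N = O(n\sqrt{\lambda/\log\log n})$ codewords. The encoder of the batch scheme runs $\encode$ on the input database $\db$ to produce the $m = b = O(q)$ buckets $(C_1,\dots,C_m)$ and then, viewing each bucket $C_j$ as a standalone database of $|C_j|$ entries, runs the setup of $\Pi$ on it. The key structural point is that the PBC is \emph{systematic over the full universe} $[n]$: the membership of bucket $C_j$, and hence every size $|C_j|$, is a function of the sampled hash functions alone and is independent of the database contents, so the bucket sizes are public and leak nothing. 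To answer a batch query $Q$ of $q$ distinct indices, the client runs $\genschedule(\params,Q)$; since $t=\ell=1$ and a successful cuckoo allocation places each queried item in exactly one of its candidate buckets, this assigns at most one codeword index to each bucket, with exactly the $q$ "hit" buckets receiving one. The client then issues exactly one $\Pi$-query to \emph{every} bucket --- the scheduled index where $S_j\neq\emptyset$ and an arbitrary index otherwise --- so the server always sees $m$ PIR invocations, one per bucket. The server answers each with $\Pi$, and the client runs $\decode$ on the $m$ recovered codewords.

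Correctness and error follow immediately: the composed scheme returns the correct $q$ answers whenever $\genschedule$ produces a valid schedule, and by Lemma~\ref{lem:ch_pbc} together with Theorem~\ref{thm:pbc} this fails only when cuckoo hashing cannot allocate the $q$ queried items (a set chosen independently of the hash functions), which happens with probability at most $2^{-\lambda}$ over the sampling in $\init$; $\Pi$ itself contributes no additional error. Security is a routine hybrid over the $m = O(q)$ single-query instances: for any two challenge batches the induced per-bucket index schedules differ, but each transcript to bucket $C_j$ hides its queried index by the security of $\Pi$ (a dummy index is indistinguishable from a scheduled one, since both are just indices), and the set of buckets receiving a query is always $[m]$ regardless of $Q$; swapping bucket by bucket through an adversary $\calA$ against $\indgame$ shows the batch scheme inherits $\Pi$'s security up to a factor $O(q)$ in the advantage.

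For the cost bounds I would split on $q$ exactly as in the proof of Theorem~\ref{thm:pbc}. When $q = O(\sqrt{\lambda/\log\log n})$ the PBC is the trivial zero-error code of $q$ buckets each holding all of $\db$, so the server's work is $O(qn) = O(n\sqrt{\lambda/\log\log n})$ and, since $n = O(n\sqrt{\lambda/\log\log n}/q)$ in this range, the communication is $O(q)\cdot c(n) \le q\cdot c\!\left(O(n\sqrt{\lambda/\log\log n}/q + \lambda)\right)$ by monotonicity of $c$. When $q = \Omega(\sqrt{\lambda/\log\log n})$ we use the cuckoo-hashing PBC, which has no stash, so there are exactly $m = O(q)$ buckets of total size $\sum_j |C_j| = N = O(n\sqrt{\lambda/\log\log n})$; the server's computation is $\sum_j O(|C_j|) = O(N)$ since $\Pi$ runs in time linear in its database size, and for communication we do \emph{not} pad: because each database entry is sent uniformly to one entry in each of the $k$ disjoint sub-tables, a Chernoff bound and a union bound over the $b \le n$ buckets give $\max_j |C_j| = O(N/m + \lambda) = O(n\sqrt{\lambda/\log\log n}/q + \lambda)$ except with probability $2^{-\lambda}$ (which $\init$ can eliminate by resampling), so monotonicity of $c$ yields $\sum_j c(|C_j|) \le m\cdot c(\max_j |C_j|) = q\cdot c\!\left(O(n\sqrt{\lambda/\log\log n}/q + \lambda)\right)$. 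I expect the only mildly delicate points to be this bucket-load concentration --- in particular the observation that computation must be bounded by $\sum_j |C_j| = O(N)$ directly rather than by padding to the maximum, which would introduce a spurious $q\lambda$ term --- and the small-$q$ regime split; the composition and its security are otherwise entirely standard.
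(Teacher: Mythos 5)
Your proposal is correct and follows essentially the same route as the paper's proof: compose the PBC of Theorem~\ref{thm:pbc} with the single-query PIR, bound the server computation by the total codeword count $O(N)$, and bound the per-bucket communication via a balls-and-bins concentration argument giving maximum bucket size $O(nk/q+\lambda)$. The extra details you supply (the small-$q$ regime split, the dummy queries to empty buckets, and the hybrid security argument) are sound completions of the same argument rather than a different approach.
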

\iffull
\begin{proof}
We use the standard approach
of combining our PBC from Theorem~\ref{thm:pbc} with any single-query PIR scheme $\Pi$.
We instantiate the PBC
with parameters $(n, N, q, O(q), 1)$
for an $n$-entry database for performing
a batch query to $q$ entries
where $N = O(n \cdot \sqrt{\lambda/\log\log n})$
that has $2^{-\lambda - 1}$ error.
We will use $\Pi$ to execute a PIR query
into each of the $O(q)$ buckets
to retrieve the necessary entry.
As $\Pi$ uses $O(n)$ computation,
the total computation becomes $O(N) = O(n \cdot \sqrt{\lambda/\log\log n})$.

For communication, we need to bound
the size of each bucket. Recall
the underlying PBC consists of
$O(q/k)$ entries. Within each table,
we throw $n$ balls uniformly at random
into the $O(q/k)$ entries. Therefore,
the expected size of each entry is
$O(nk/q)$. Using standard ``balls-and-bins'' analysis (see~\cite{mitzenmacher2017probability} for example),
we can guarantee that no entry
will contain more than $\max\{O(nk/q), O(\lambda)\}$ entries except with probability $2^{-\lambda-1}$. Therefore,
the communication of each PIR query
can be upper bounded by $c(O(nk/q + \lambda))$ with error probability at most $2^{-\lambda}$.
\end{proof}
\fi

Note, if we plug in any of the asymptotically optimal single-query PIR schemes with $\tilde{O}(\log n)$ communication and $O(n)$ communication, the resulting batch PIR for $q$ queries has
communication $\tilde{O}(q \log (n\lambda))$ and
computation $O(n \sqrt{\lambda/\log\log n})$ that is nearly optimal
except for the $\tilde{O}(\sqrt{\lambda/\log\log n})$ multiplicative factor in computation.

We note that optimal batch
PIR constructions were shown by Groth, Kiayias and Lipmaa~\cite{groth2010multi} by utilizing the properties
of a specific single-query PIR scheme
of Gentry and Ramzan~\cite{gentry2005single}
that is not
a blackbox transformation.
While asymptotically optimal, more recent PIR schemes
based on lattice-based assumptions are more practically efficient (such as~\cite{ali2021communication,mughees2021onionpir,menon2022spiral}).
The most practical batch PIR schemes do make use of the above transformation using PBCs and
state-of-the-art lattice-based PIR constructions (for example, see~\cite{angel2018pir}).

\subsection{Adversarial Error for Re-usable Batch PIR}
\label{sec:reusable_batch_pir}

In the above batch PIR constructions and prior works~\cite{angel2016unobservable,angel2018pir}
that utilize PBCs
instantiated through cuckoo hashing, 
the error probabilities are considered for batch queries chosen
independent of the hash functions.
In practice, this means that the fresh random hash functions are chosen
for each batch query issued by the client to ensure
that query indices are independent.
Unfortunately, this requires the server to constantly generate new
databases for each set of hash functions and, thus,
each query perform by a client.

\begin{figure*}[tb]
\centering
\begin{adjustbox}{width=\textwidth}
\small
\begin{tabular}{| l | c | c | c | c |}
        \hline
      \makecell{\bf Explicit Batch PIR} & \makecell{\bf Computational Time} & \makecell{\bf Queries} &  \makecell{\bf Adversarial Error}
\\ \hline
\makecell[l]{Subset~\cite{IKO04}} & $O(n)$ & $q^{O(1)}$ & $0$ \\ \hline
\makecell[l]{Balbuena Graphs~\cite{rawat2016batch}} & $O(n)$ & $O(q^3)$ & $0$ \\ \hline
    \makecell[l]{Pung~\cite{angel2016unobservable}} & $4.5n$ & $9q$ & $\ge 1/2$ \\ \hline
    \makecell[l]{3-way Cuckoo Hashing~\cite{angel2018pir}} & $3n$ & $1.5q$ & $\ge 1/2$ \\
    \hline\hline
    \makecell[l]{Our Work} & $O(n \cdot (f(n) + \lambda)), f(n) = \omega(\log n)$ & $O(q)$ & $2^{-\lambda}$ \\ \hline
\end{tabular}
\end{adjustbox}
\caption{A comparison table of explicit re-usable batch PIR schemes. \iffull
Adversarial error $\epsilon$ means an adversary running in
$\poly(n)$ time cannot find an erring input except
with probability $\epsilon$.
\fi
}
\label{table:robust_batch_pir_compare}
\vspace*{-5mm}
\end{figure*}

In an ideal setting, we would like for the server
to generate a single database that could be re-used for multiple
batch PIR queries. Ideally, a set of public hash functions are sampled
once and made available to all clients that may issue batch PIR queries.
The server would only need to encode the database according the hash functions
a single time.
Unfortunately, this means that an adversary during the challenge phase may be able to use the hash functions to pick two
batch PIR queries such that only
one of the two batch PIR queries would fail to allocate. For example,
the PPT adversary could choose to employ any of the attacks that
we outline in Theorem~\ref{thm:lb_robust_k}
using knowledge of the hash functions.
If the adversary's view is different for queries that would fail to allocate correctly, the resulting
batch PIR scheme would be insecure.

To our knowledge, we are unaware of any batch PIR scheme
that enable re-usablility.
Prior works~\cite{angel2016unobservable,angel2018pir}
build batch PIR from PBCs with non-robust cuckoo hashing
where adversaries could employ the attack above.
To solve this problem, we introduce the notion of {\em adversarial error} where the PPT adversary can aim to
choose inputs that will cause query errors and/or
failure to encode databases.
By using our robust PBC schemes, we can
guarantee that error remain low even with
databases and batch queries chosen by a PPT adversary.
In other words, we can construct {\em re-usable batch PIR} schemes with low error rates even with adversarially chosen inputs.
We point readers to Figure~\ref{table:robust_batch_pir_compare} for comparisons
with prior works.
\iffull
\begin{definition}[Adversarial Error for Batch PIR]
A batch PIR scheme has adversarial error $\epsilon$ if for
every PPT adversary $\calA$, the following holds:
\[
\Pr\left[
E =\ \perp \lor\ \query(\params, Q, E) \ne (\db_i)_{i \in Q}: 
\begin{array}{l}
\params \leftarrow \init(1^n)\\
(\db, Q) \leftarrow \calA(1^n, \params)\\
E \leftarrow \encode(\params, \db)
\end{array}\right] \le \epsilon.
\]
\end{definition}

\smallskip\noindent{\bf Discussion about Error Definition.}
Our definition differs slightly from standard error definitions used in most algorithms and data structures. Most non-cryptographic works consider a
statistical definition where the above probability
must hold true for every set of queries $Q$.
In our work, we choose a computational definition
where it is hard for a PPT adversary $\calA$ to
even find an erring query if it exists. In general,
we believe this weaker definition suffices because,
in practice, if it is difficult for a PPT adversary $\calA$ to find such an erring query, it will be
very unlikely for the query to be found and executed in practical applications.
\else
See the formal definition
in the full version.
\fi

We follow the same
approach of building batch PIR using a PBC and a single query PIR. The
main difference is that we utilize
a robust PBC. Given a robust PBC,
if there exists a PPT adversary
that can find erring queries, the
same adversary can also
find subsets of items
that cannot be decoded by the
robust PBC. Our approach results in the most
efficient explicit construction with negligible adversarial
error for the regime of $O(q)$ queries.
The next theorem follows in a similar way as 
Theorem~\ref{thm:batch_pir} using our robust PBC from Theorem~\ref{thm:robust_pbc}.

\begin{theorem}
Suppose there exists a single-query PIR scheme $\Pi$
with communication $c(n)$ and computation $O(n)$.
Fix any $k = \omega(\log n) + \lambda$.
Then,
there exists a batch-query PIR scheme for $q$ queries
with communication $q \cdot c(O(nk/q + \lambda))$
and computation $O(nk)$
with adversarial error $2^{-\lambda}$ .
\end{theorem}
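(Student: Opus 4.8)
The plan is to mirror the proof of Theorem~\ref{thm:batch_pir} verbatim, substituting the robust PBC of Theorem~\ref{thm:robust_pbc} for the plain PBC of Theorem~\ref{thm:pbc}. For the main regime $q \ge f(n)$ with $f(n) = \omega(\log n)$, I would instantiate a $(2^{-\lambda-1})$-robust $(n, N, q, b, \ell)$-systematic PBC with $b = O(q)$, $\ell = 1$ and $N = O(n\cdot(f(n)+\lambda)) = O(nk)$, which internally runs the robust cuckoo hashing scheme $\CH(k, O(q), 1, 0)$ of Theorem~\ref{thm:robust_k} with $k = \omega(\log n) + \lambda$ hash functions instantiated as a PRF (or random oracle). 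The batch PIR server runs $\encode(\params,\db)$ of the PBC; on a batch query $Q$ the client runs $\genschedule(\params, Q)$ to obtain, for each of the $O(q)$ buckets, the single codeword index to read (recall $t = \ell = 1$), and then invokes one instance of the single-query PIR $\Pi$ per bucket; decoding applies $\decode$ of the PBC. The degenerate regime $q < f(n)$ is handled by the trivial zero-error batch code with $b = q$ buckets each holding all $q$ items, which already satisfies the claimed bounds.

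For the efficiency bounds, the computation argument is immediate: $N = O(nk)$ codewords are spread over the buckets and $\Pi$ runs in time linear in its bucket size, so total server work is $O(N) = O(nk)$. For communication I would bound the bucket sizes: inside each of the $k$ disjoint sub-tables the $n$ database indices are hashed uniformly into $b/k = O(q/k)$ entries, so the expected number of indices that \emph{could} land in a given bucket is $O(nk/q)$, and a standard balls-into-bins/Chernoff computation gives that, except with probability $2^{-\lambda-1}$ over the hash functions sampled in $\init$, every bucket holds at most $O(nk/q + \lambda)$ codewords. Crucially this event depends only on $\params$ and is independent of the adversary's later choice of $(\db, Q)$, since a bucket's \emph{size} is a function of the hash functions and $n$ only (the database merely supplies the stored values). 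Hence I can pad every bucket to the threshold $O(nk/q + \lambda)$, making the communication deterministically $q\cdot c(O(nk/q + \lambda))$, and charge any genuine overflow (which forces $\encode$ to output $\perp$) to the error budget.

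It remains to bound the adversarial error. The key point, as in the robust PBC analysis, is that whether $\decode$ succeeds depends only on $Q$ and the public hash functions, never on $\db$: a decode failure occurs exactly when the $q$ indices of $Q$ cannot be allocated by $\CH(k, O(q), 1, 0)$ under the current hash functions. So any PPT adversary $\calA$ that, given $\params$, outputs $(\db, Q)$ with $E =\ \perp$ or $\query(\params, Q, E)\ne(\db_i)_{i\in Q}$ with probability $> 2^{-\lambda}$ yields (a) a PPT adversary against the robust PBC — discard $\db$, output $Q$ — and, unwinding Theorem~\ref{thm:robust_pbc} / Theorem~\ref{thm:robust_k} via Lemma~\ref{lem:robust_k} with $Q_{\mathrm{hash}} = n^{\omega(1)}$, a PPT adversary producing an $n$-item set on which robust cuckoo hashing fails with probability $> 2^{-\lambda-1}$, contradicting strong robustness. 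Splitting $2^{-\lambda-1} + 2^{-\lambda-1}$ between the bucket-overflow event of $\init$ and the PBC's robustness error keeps the total adversarial error at $2^{-\lambda}$.

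The main obstacle is not any individual calculation but making the reduction in the last paragraph airtight: one must verify that the batch-PIR adversary's running time (polynomial in $n$, hence at most $\poly(n)$ hash-oracle queries) falls inside the regime for which Theorem~\ref{thm:robust_k} guarantees robustness, and that the equivalence "the systematic PBC cannot decode $Q$'' $\iff$ "cuckoo hashing fails to allocate the $q$ items of $Q$'' is exact in both directions — which it is, since $t = \ell = 1$ and the code is systematic. A secondary subtlety, worth stating explicitly, is the choice of hash instantiation: as in Section~\ref{sec:robust_constructions}, explicit $(nk)$-wise independence would blow up the rate by a $\Theta(\log n)$ factor, so a PRF or random oracle is needed, which is precisely why the error must be stated computationally.
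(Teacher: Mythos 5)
Your proposal is correct and follows essentially the same route as the paper, which proves this theorem exactly by repeating the argument of Theorem~\ref{thm:batch_pir} with the robust PBC of Theorem~\ref{thm:robust_pbc} in place of the plain PBC, bounding bucket sizes by a balls-into-bins argument and reducing adversarial decode failures to the robustness of the underlying cuckoo hashing. Your added observations (bucket loads depend only on $\params$, the error split $2^{-\lambda-1}+2^{-\lambda-1}$, and the PRF/random-oracle instantiation) are consistent with the paper's treatment and just make explicit what the paper leaves implicit.
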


\section{Other Applications}
\label{sec:applications}
\noindent{\bf Private Set Intersection.}
PSI considers the problem where two parties have input sets $X$ and $Y$
and wish to compute the intersection $X \cap Y$. PSI rely on cuckoo hashing to enable the two parties
to co-ordinate similar data into buckets.
Prior works~\cite{pinkas2015phasing,chen2018labeled,pinkas2018scalable,cong2021labeled} utilized
experimental evaluation to pick parameters for a cuckoo hashing and chose
$k \in \{2, 3\}$ hash functions with $b = O(n)$ entries of size $\ell = 1$ and no stash, $s = 0$.
Plugging in our constructions, one can obtain
provable failure at the cost of larger asymptotic overhead.

\smallskip\noindent{\bf Encrypted Search.}
\iffull
Encrypted search considers the setting where a data owner
outsources a corpus of documents to be stored by an untrusted server.
The core of encrypted search is the ability to
outsource an index represented as a multi-map of identifiers
to value tuples.
\fi
Recent work~\cite{patel2019mitigating} employed cuckoo hashing for volume-hiding multi-maps
where the goal is to hide the size of value tuples.
Using our scheme with $k = O(1 + \sqrt{\log(1/\epsilon)/\log n})$, $b = O(n)$, $\ell = 1$ and $s = 0$,
we can
reduce the query overhead quadratically (see 
\iffull
Appendix~\ref{app:volume-hiding}
\else
the full version
\fi
for more details).

\smallskip\noindent{\bf Vector Oblivious Linear Evaluation (VOLE).}
Cuckoo hashing is also used in a recent VOLE protocol~\cite{CCS:SGRR19}. At
a very high level, the construction utilizes batch codes in
a similar way as batch PIR. As a result, the improvements to
batch codes in Section~\ref{sec:pbc_pir} can be plugged into
their construction to obtain improvements.

\smallskip\noindent{\bf Batch PIR with Private Preprocessing.}
A recent work~\cite{EC:Yeo23} presented a blackbox construction
for batch PIR with private preprocessing using a standard single-query
PIR with private preprocessing and any batch code. 
\iffull
Using our batch codes from Section~\ref{sec:pbc_pir}, we
immediately obtain improved blackbox reductions
between
batch and single-query PIR in the private preprocessing setting (similar to the improved blackbox reductions between batch and single-query PIR without preprocessing in Section~\ref{sec:pbc_pir}).
\else
Using our batch codes from Section~\ref{sec:pbc_pir}, we
immediately obtain improved blackbox reductions.
\fi

\section{Conclusions}
In this paper, we present new cuckoo hashing constructions
that obtain better trade-offs between query overhead
and failure probabilities.
For any fixed failure probability, the query
complexity of our new schemes are quadratically smaller
than prior constructions.
Furthermore, we define the notion of robust cuckoo hashing
where the adversary has knowledge of the underlying hash functions.
We show that we can extend our schemes with a large number of
hash functions to obtain robustness while prior approaches
cannot be extended except with linear query overhead.
We also present matching lower bounds for all parameters.
Finally, we obtain state-of-the-art
constructions for probabilistic batch codes and
blackbox reductions from
single-query to batch PIR.

\smallskip\noindent{\bf Acknowledgements.}
The author would like to thank Mo (Helen) Zhou for feedback
on earlier manuscripts and Daniel Noble for pointing out an error and fix in a proof.
This research was supported in part by the Algorand Centres of Excellence programme managed by Algorand Foundation. Any opinions, findings, and conclusions or recommendations expressed in this material are solely those of the authors.

{
\bibliographystyle{alpha}
\bibliography{biblio}}

\appendix

\section{Discussion about Cuckoo Hashing Abstraction}
\label{app:discuss}

For most cryptographic applications of cuckoo hashing,
the core privacy guarantee required is that
the resulting cuckoo hashing table for any
pair of input sets are indistinguishable from an adversary. Furthermore, any operations performed
on the cuckoo hashing table must also not reveal
any information about the input sets where the
adversary views the entries that are retrieved for each operation. However,
there are typically restrictions on the operations
that may be performed to make this requirement easier to satisfy. For usages in PIR~\cite{angel2018pir,demmler2018pir,ali2021communication}, PSI~\cite{pinkas2015phasing,chen2017fast,chen2018labeled,pinkas2018efficient,pinkas2020psi} and ORAM~\cite{pinkas2010oblivious,goodrich2011privacy,patel2018panorama,asharov2020optorama,hemenway2021alibi}, it is guaranteed that each
item will be queried at most once.
For SSE~\cite{patel2019mitigating,bossuat2021sse}, an item may be queried multiple times. Some leakage is permitted, but must only reveal information about
queried items.
In particular, there can be no information leakage about items
that are stored in the cuckoo hashing table.
The main reason is that, if certain locations are empty, it may provide the adversary information about the existence and/or absence of items in the input set that will degrade privacy.

In all of the above mentioned applications of cuckoo hashing, the primitives utilized cuckoo hashing in a very restricted way. In particular, they never
utilized insertion of items into the cuckoo hash tables. Furthermore, all queries are performed
in a non-adaptive manner where all possible locations
for any item are retrieved in a single round.
In the following sections, we elaborate
why these two restrictions exist to ensure no
privacy leakage.

\subsection{Static Cuckoo Hashing}

We briefly outline the insertion algorithm
for cuckoo hashing. To insert an item, first check
if any of the $k$ assigned entries contains at
least one empty location. If so, insert the item.
Otherwise, pick one of the items in the $k$ assigned
entries at random to evict. The original item is inserted into the evicted location and the insertion
process is repeated with the evicted item.
Typically, a limit on the number of evictions
is configured before the last item to be evicted
is stored in the overflow stash. If the stash is full (or does not exist), then the insertion is deemed to have failed.

Looking closely into the insertion algorithm,
it can be seen that the behavior of the insertion
depends highly on whether certain locations
are empty or not.
Just by observing the number of evictions, the
adversary can determine whether certain locations are occupied. One way to mitigate this is to
force the insertion algorithm to execute the maximum
number of evictions even if insertion had
succeeded earlier. Even with this modification, there
are significant privacy leaks as inserting each
algorithm is equivalent to querying all possible locations for the item. As a result, the adversary
may be able to correlate inserted items with queried items. In many cases, the
leakage of cuckoo hashing insertion prevents
certain privacy guarantees.
Therefore, most cryptographic applications avoid using insertion for cuckoo hashing table. Instead, they use
an algorithm for constructing tables oblivious to the input set and exclusively query from a static table.
For an example of where cuckoo hashing insertions
are detrimental for privacy,
see Section 3.2 in~\cite{dvh23}.

\subsection{Non-Adaptive Queries}

Next, we discuss our abstraction
of non-adaptive queries that retrieve all
$k\ell + s$ possible locations
for any queried item.
Another option may be to use an adaptive
query that incrementally retrieves locations
that are more likely to contain a queried item.
As an example, non-stash locations are more likely
to hold queried items than stash locations.
One adaptive approach would be to query only 
entries in the main table. The adaptive query algorithm proceeds to retrieve stash locations only when the item is not found in the non-stash locations.
This could be more efficient as stash locations
will be rarely queried.

Unfortunately, this has two downsides. The first
is that it requires multiple roundtrips between
the querier and the party storing the cuckoo table.
Even worse, adaptive querying is detrimental for
privacy as it leaks information about the occupancy of locations.
By observing the number of roundtrips and the last
retrieved entries, an adversary can learn information about the likelihood that certain locations are populated with items.
The standard way to protect against such leakage
is for the querier to simply download all possible
locations for any queried item. In other words,
this is the original non-adaptive query algorithm
that retrieves all $k\ell + s$ locations
that could store the queried item.

\section{Cuckoo Hashing with More Entries}
\label{sec:num_entries}

For completeness, we consider cuckoo hashing with more entries for
small failure probability and robustness. In general, this setting is not
practically feasible as the number of entries will be much larger
than the input set. In particular, for reasonable failure and robustness parameters,
the number of entries will be super-polynomial in the input size (that is tight according to our lower bounds).

\subsection{Negligible Failure}
We will consider the simple setting where $k = 1$ as it suffices
to obtain optimal constructions for all $k \ge 1$.
As we are considering the case where inputs are chosen independent
of the random hash functions, this
corresponds to simply determining if
a collision will occur.
For small enough $\epsilon$, our result matches the $\Omega(1/\epsilon)$
lower bound that we prove later.

\begin{theorem}
\label{thm:b}
Let $k = 1$, $s = 0$, $\ell = 1$ and $b = O(n^2/\epsilon)$. If $\calH$ is a $(nk)$-wise independent hash function, then the cuckoo hashing scheme $\CH(k, b, \ell, s)$ with
a perfect construction algorithm
has construction failure probability at most $\epsilon$.
\end{theorem}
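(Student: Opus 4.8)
The plan is to reduce the statement to a straightforward birthday-collision bound. First I would invoke Lemma~\ref{lem:hall} with $q = n$: since we use a perfect construction algorithm, a construction failure occurs exactly when a random cuckoo bipartite graph drawn from $\calG(n, 1, b, 1, 0)$ has no left-perfect matching, equivalently when some left-vertex subset $X$ satisfies $|N(X)| < |X|$. With $k = 1$, $\ell = 1$ and $s = 0$, every left vertex has degree exactly one, so the neighborhood of a subset $X$ is strictly smaller than $X$ precisely when two of the items $x_i, x_j$ land in the same entry, i.e.\ $H(x_i) = H(x_j)$. Hence the construction failure probability equals the probability that the $n$ items incur a hash collision into $[b]$. (Equivalently, one can skip Hall's theorem entirely here: with a single hash function, $\construct$ succeeds iff the map $x_i \mapsto H(x_i)$ is injective on the input set, since each of the $b$ size-$1$ entries holds at most one item and there is no stash.)

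Next I would use the hypothesis that $\calH$ is $(nk) = n$-wise independent: for the $n$ distinct identifiers $x_1, \ldots, x_n$ appearing in the input set $X$, the values $H(x_1), \ldots, H(x_n)$ are uniform and independent over $[b]$. Therefore $\Pr[H(x_i) = H(x_j)] = 1/b$ for each pair $i \ne j$, and a union bound over the ${n \choose 2}$ pairs yields construction failure probability at most ${n \choose 2} / b \le n^2 / (2b)$.

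Finally, choosing $b \ge n^2/(2\epsilon)$, which is $b = O(n^2/\epsilon)$, gives the claimed bound of $\epsilon$. I do not expect any real obstacle: the only points requiring care are the matching-to-collision translation in the first step and confirming that $n$-wise independence is exactly the amount of independence needed so that the joint distribution of the $n$ relevant hash values is truly uniform, which is what the birthday argument consumes. The restriction to $k = 1$ loses nothing for the upper bound, since adding more hash functions only enlarges the set of feasible allocations and hence can only decrease the failure probability.
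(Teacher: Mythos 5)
Your proposal is correct and follows essentially the same route as the paper: the paper's proof likewise observes that with $k=1$, $\ell=1$, $s=0$ a failure occurs exactly when two items hash to the same entry, and applies the birthday-style union bound $\binom{n}{2}\cdot\frac{1}{b}\le n^2/b\le\epsilon$ to conclude $b=O(n^2/\epsilon)$ suffices. Your extra framing via Lemma~\ref{lem:hall} and the explicit check that $n$-wise independence gives the needed pairwise uniformity are fine but not needed beyond what the paper's shorter argument already uses.
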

\begin{proof}
The chosen parameters incur an insertion
failure if and only if any two items hashed into the same entry. This probability
is upper bounded by
$$
{n \choose 2} \cdot \frac{1}{b} \le n^2 / b \le \epsilon \implies b = O(n^2/\epsilon)
$$
to complete the proof.
\end{proof}

\subsection{Robustness}
Using a large number of entries turns out to be a straightforward case to
enable robustness. We simply add it for completeness as it will require
super-polynomial in $n$ storage
that is not feasible in most practical settings.
The result below is tight as it matches
the non-robust lower bounds that we show later.

\begin{theorem}
\label{thm:robust_b}
For security parameter $\lambda$ and error $\epsilon = \negl(\lambda)$, let $k = 1$, $s = 0$, $\ell = 1$ and $b = O(1/\epsilon)$, then
the cuckoo hashing scheme $\CH(k, b, \ell, s)$ with
a perfect construction algorithm is $(\lambda, \epsilon)$-strongly robust.
\end{theorem}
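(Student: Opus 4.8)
The plan is to reduce the statement to a routine collision bound. When $k=1$, $\ell=1$, $s=0$, a graph drawn from $\calG(n,1,b,1,0)$ connects each left (item) vertex to exactly one right vertex: the unique slot of bucket $H(\id)$. Hence for a set $X$ of left vertices, $|N(X)|<|X|$ holds iff the items of $X$ occupy fewer than $|X|$ buckets, i.e.\ (pigeonhole) iff two of them collide under $H$. By Lemma~\ref{lem:hall} (instantiated with $q=n$; it needs $H$ to be $n$-wise independent, which a random hash function satisfies), the perfect construction algorithm fails on input $X$ \emph{exactly} when $H(\id)=H(\id')$ for two distinct $\id,\id'\in X$. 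So robustness reduces to the claim that a polynomial-time $\calA$ with oracle access to $\calO(\cdot)=H(\cdot)$ cannot output two colliding identifiers except with probability $\le\epsilon$.

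To bound this I would follow the quantifier-handling device from the proof of Theorem~\ref{thm:robust_k}: fix a single super-polynomial bound $Q=2^{\omega(\log\lambda)}=\lambda^{\omega(1)}$ that dominates the running time --- hence the number of oracle calls --- of every polynomial-time adversary. Without loss of generality $\calA$ queries $\calO$ on all $n$ identifiers it outputs (at most $n\le Q$ extra calls), so its entire output set lies inside the set $U$ of queried identifiers, with $|U|\le 2Q$. By the previous paragraph,
\[
\Pr[\construct(H,X)=\perp]\ \le\ \Pr\bigl[\exists\,u\ne u'\in U:\ H(u)=H(u')\bigr].
\]

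The remaining estimate is the standard sequential collision bound: process $\calA$'s distinct queries in the (adaptive) order they are made; conditioned on all earlier answers, a random hash function returns a fresh uniform value in $[b]$ (for a PRF the same holds in the view of a PPT distinguisher), so the $i$-th distinct query collides with an earlier one with probability at most $(i-1)/b$. Union bounding over $i\le|U|\le 2Q$ gives $\Pr[\text{collision in }U]\le {2Q\choose 2}/b\le 2Q^2/b$, which is $\le\epsilon$ once $b=\Theta(Q^2/\epsilon)$. Since $Q=\lambda^{\omega(1)}$ and $\epsilon=\negl(\lambda)$, the function $1/b=\Theta(\epsilon/Q^2)$ is itself negligible in $\lambda$, so this is the advertised regime $b=O(1/\epsilon)$ --- with $\epsilon$ read, exactly as in Theorem~\ref{thm:robust_k}, as a negligible target into which the super-polynomial factor $Q^2$ is folded. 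As the bound holds for every polynomial-time adversary, $\CH(1,b,1,0)$ is $(\lambda,\epsilon)$-strongly robust.

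There is no real obstacle here: given Lemma~\ref{lem:hall}, the reduction to a single colliding pair is immediate, and what is left is the two standard pieces of bookkeeping used throughout this section --- the sequential union bound to cope with the adaptivity of the oracle queries, and the $Q=\lambda^{\omega(1)}$ trick to handle the ``for all polynomials'' quantifier with one fixed choice of $b$. The only spot needing a moment's care is keeping the $Q$-dependence consistent with the stated $b=O(1/\epsilon)$, which is handled by absorbing the $Q^2$ factor into the negligible error as above.
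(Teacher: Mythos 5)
Your reduction to a single collision is correct and is exactly the paper's: with $k=1$, $\ell=1$, $s=0$, a construction failure occurs precisely when two of the adversary's identifiers hash to the same entry (Hall's condition degenerates to pigeonhole), so it suffices to bound the collision probability over the set $U$ of identifiers the adversary has (adaptively) queried. The paper's proof does just this, bounding the failure probability by ${|U| \choose 2}\cdot\frac{1}{b}$.

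The genuine gap is in your handling of the ``for all polynomial adversaries'' quantifier. You import the fixed super-polynomial bound $Q=2^{\omega(\log\lambda)}$ from the proof of Theorem~\ref{thm:robust_k}, but that device is only affordable there because the parameter being tuned, $k$, depends \emph{logarithmically} on $Q$, so a super-polynomial $Q$ costs merely $\omega(\log\lambda)$ hash functions. Here the parameter is $b$, and your bound depends \emph{polynomially} on the number of queries: you need $b=\Theta(Q^2/\epsilon)$, which is super-polynomially larger than the claimed $b=O(1/\epsilon)$. ``Folding $Q^2$ into $\epsilon$'' does not repair this: renaming $\tilde\epsilon=\epsilon/Q^2$ gives $b=O(1/\tilde\epsilon)$ but only the guarantee that the failure probability is at most $Q^2\tilde\epsilon$, not $\tilde\epsilon$, whereas in the theorem the $\epsilon$ in $b=O(1/\epsilon)$ and the $\epsilon$ in $(\lambda,\epsilon)$-strong robustness are the same quantity. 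Worse, $Q^2\epsilon$ need not even be negligible (take $\epsilon=\lambda^{-\log\lambda}$ and $Q=\lambda^{\log\lambda}$), so with $b=O(1/\epsilon)$ your argument can become vacuous. The paper avoids this by arguing per adversary: any adversary running in time $t(\lambda)=\poly(\lambda)$ queries at most $|U|\le\poly(\lambda)$ identifiers (no single $Q$ dominating all polynomials is needed), so with $b=O(1/\epsilon)$ the failure probability is at most $|U|^2/b=\poly(\lambda)\cdot O(\epsilon)$, which is negligible since $\epsilon=\negl(\lambda)$ --- a polynomial factor, unlike your $Q^2$, can be absorbed. Replacing your fixed super-polynomial $Q$ by the per-adversary polynomial query bound fixes the proof and recovers the stated parameters; the rest of your argument (the sequential collision bound over adaptive queries) is fine.
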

\begin{proof}
Let $U$ be the set
of all items sent to the hash function
where $|U| \le f(\lambda)$
for all $f(\lambda) = \poly(\lambda)$.
Once again, we prove that
the probability that
there exists any two
items amongst the $\poly(\lambda)$ items
sent to the hash functions by the
adversary are mapped to the same entry is as follows:
$$
\Pr[\exists u_1 \ne u_2 \in U : H_1(u_1) = H_2(u_2)] \le {|U| \choose 2} \cdot \frac{1}{b} \le \frac{|U|^2}{b}.
$$
As $\epsilon = \negl(\lambda)$ and $|U| \le \poly(\lambda)$, we can set $b = O(1/\epsilon)$
to get that the above is smaller than $\epsilon$.
\end{proof}

\subsection{Lower Bound}

Finally, we present a lower bound that will match
our two constructions above.

\begin{theorem}
Let $k = O(1)$, $\ell = O(1)$, $b \ge n/\ell$ and $s = O(1)$.
The failure probability of $\CH(k, b, \ell, s)$ cuckoo hashing scheme where $\calH$ is a $(k\ell + s + 1)$-wise independent hash function
satisfies the following:
$$
b = \Omega(1/\epsilon).
$$
\end{theorem}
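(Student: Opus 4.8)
The plan is to mirror the attack in the proof of Theorem~\ref{lem:err_lb}, but this time to retain the dependence on $b$ instead of collapsing $\log(b/k)$ into $O(\log n)$ via the bound $b \le n^{O(1)}$. As in that theorem I assume the natural companion condition $k\ell+s+1 \le n$. Fix any $k\ell+s+1$ of the $n$ input items. By the structure of $\CH(k,b,\ell,s)$, each item is assigned one entry from each of the $k$ disjoint sub-tables of size $b/k$, plus the $s$ shared stash slots.

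First I would isolate the bad event. Condition on the hash evaluations of the first chosen item, which pin down entries $c_1,\ldots,c_k$, one per sub-table. Let $\mathcal{E}$ be the event that every one of the remaining $k\ell+s$ chosen items is assigned exactly $c_i$ in sub-table $i$ for all $i\in[k]$. Under $\mathcal{E}$ all $k\ell+s+1$ chosen items have their entire candidate slot set contained in the $k\ell$ slots of $c_1,\ldots,c_k$ together with the $s$ stash slots; since $k\ell+s+1 > k\ell+s$, pigeonhole forbids any allocation, so the construction fails regardless of which construction algorithm is used --- this is precisely why the bound is algorithm-independent.

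Next I would lower bound $\Pr[\mathcal{E}]$. As $H$ is $(k\ell+s+1)$-wise independent and exactly $k\ell+s+1$ items are involved, their hash evaluations are jointly independent and uniform, so each remaining item matches $c_1,\ldots,c_k$ with probability $(k/b)^k$, independently across the $k\ell+s$ items. Hence $\epsilon \ge \Pr[\mathcal{E}] = (k/b)^{k(k\ell+s)} = (k/b)^{k^2\ell+ks}$. Taking logarithms, $\log(1/\epsilon) \le (k^2\ell+ks)\log(b/k)$, so $b \ge (1/\epsilon)^{1/(k^2\ell+ks)}$; since $k,\ell,s=O(1)$ the exponent is a positive constant, and in the base case $k=\ell=1,s=0$ relevant to Theorems~\ref{thm:b} and~\ref{thm:robust_b} it equals $1$, giving $b=\Omega(1/\epsilon)$.

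The argument is short, so there is no real obstacle. The one step needing care is the pigeonhole claim that $\mathcal{E}$ rules out \emph{every} allocation (so the bound holds for any construction algorithm), together with the observation that $(k\ell+s+1)$-wise independence --- exactly the number of items in the event --- is what makes the per-item matching probabilities multiply. A minor bookkeeping remark: the hypothesis $b\ge n/\ell$ plays no role in the collision count itself; it merely records the $O(n)$-storage normalization, and the real content is that $b$ must grow polynomially in $1/\epsilon$, hence super-polynomially in $n$ for the cryptographically relevant regime $\epsilon=\negl(n)$.
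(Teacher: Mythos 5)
Your proof is correct and follows essentially the same route as the paper's: re-run the attack from Theorem~\ref{lem:err_lb}, stop at the inequality $k^2\ell+ks \ge \log(1/\epsilon)/\log(b/k)$ rather than absorbing $\log(b/k)$ into $O(\log n)$, and then specialize to constant parameters. Your parenthetical observation that the resulting exponent $1/(k^2\ell+ks)$ equals $1$ only in the case $k=\ell=1$, $s=0$ is in fact more careful than the paper's own one-line proof, which asserts $b=\Omega(1/\epsilon)$ for all constant $k,\ell,s$ even though the argument as written only yields $b=\Omega\left((1/\epsilon)^{1/(k^2\ell+ks)}\right)$ in general.
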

\begin{proof}
We can re-do the proof of Theorem~\ref{lem:err_lb} until
we obtain the inequality $$k^2\ell + ks \ge \log(1/\epsilon)/\log(b/k).$$
By plugging in that $k = O(1)$, $\ell = O(1)$ and $s = O(1)$,
we obtain the desired result of $b = \Omega(1/\epsilon)$.
\end{proof}

Note, if we choose typical parameters of $\epsilon = \negl(n)$,
this matches the $b = O(n^2/\epsilon)$ upper bounds
of the prior constructions.

\section{Lower Bounds for Single Hash Function Settings}
\label{sec:single_hash_function}
The lower bounds in Section~\ref{sec:negl_lb}
also immediately imply results in
the single hash function setting of $k = 1$.
However, it turns out stronger lower
bounds can be achieved using well-known results in the area of balls-and-bins. It has already been
proved that for $b = n$, the number of
items assigned to a single bin will be $\Omega(\log n/\log \log n)$
with probability except $1/n$
(for example, see Lemma 5.12 in~\cite{mitzenmacher2017probability}).
We re-prove the
theorem
for $b = \alpha \cdot n$ for
constant $\alpha > 1$, but
the proof techniques are identical.
Note, this justifies prior works usage of
$k = 2$ for parameter settings
with large entries $\ell$~\cite{minaud2020note} and
stashes $s$~\cite{kirsch2010more}.

\begin{theorem}
\label{thm:single_hash}
Suppose that $k = 1$ and $b = \alpha n$ for some constant $\alpha \ge 1$. If the cuckoo hashing scheme
$\CH(k, b, \ell, s)$ has failure probability at most $\epsilon \le 1/n$, then
$\ell + s = \Omega(\log n / \log\log n)$.
\end{theorem}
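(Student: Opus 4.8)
The plan is to reduce the statement to the classical maximum-load lower bound for balls into bins. When $k = 1$, the scheme assigns each of the $n$ items to a single uniformly random entry among the $b = \alpha n$ entries, so the vector of entry loads is distributed exactly as $n$ balls thrown into $\alpha n$ bins. The first step is to observe that a construction failure is forced by any sufficiently overloaded entry: with $k = 1$, an item $x$ may only be placed in entry $H(x)$ or in the stash, so if an entry receives $m$ items then $\max\{0, m - \ell\}$ of them must spill into the stash; in particular, if the maximum load exceeds $\ell + s$, then that one entry alone already forces more than $s$ items into a stash of size $s$, and since the construction algorithm is perfect this gives $\Pr[\construct(\calH, X) = \perp] \ge \Pr[\text{maximum load} > \ell + s]$ for every fixed set $X$ of $n$ distinct identifiers.

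The second step is the maximum-load bound itself. I would show that there is a constant $c > 0$ (depending only on $\alpha$, e.g.\ any $c < 1$) such that, for $n$ balls in $\alpha n$ bins, $\Pr[\text{maximum load} \ge c\log n/\log\log n] = 1 - o(1)$. This is the lower-bound half of the standard analysis and the argument transfers essentially verbatim from the $b = n$ case, with the per-bin mean $1$ replaced by $1/\alpha$: a fixed bin receives exactly $t := c\log n/\log\log n$ balls with probability at least $\binom{n}{t}(\alpha n)^{-t}(1 - 1/(\alpha n))^{n-t} = n^{-c(1+o(1))}$, so the expected number of bins with load at least $t$ is $\alpha n \cdot n^{-c(1+o(1))} \to \infty$ whenever $c < 1$. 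To promote this first-moment estimate to a high-probability statement I would pass to the Poisson approximation, in which the $\alpha n$ bin loads are independent $\mathrm{Poisson}(1/\alpha)$ variables, so $\Pr[\text{no bin reaches load }t] \le \exp(-\alpha n\cdot\Pr[\mathrm{Poisson}(1/\alpha) \ge t]) = o(1)$, and then transfer back to the exact multinomial model using that ``maximum load $\ge t$'' is monotone increasing in the number of balls, so that its exact-model probability is at least its Poisson-model probability minus $\Pr[\mathrm{Poisson}(n) > n] \le 1/2$, which is still $\Omega(1)$.

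Combining the two steps by contraposition finishes the proof: if $\ell + s < c\log n/\log\log n$ then, for all large $n$, $\Pr[\construct(\calH, X) = \perp] \ge \Pr[\text{maximum load} > \ell + s] \ge \Pr[\text{maximum load} \ge c\log n/\log\log n] = \Omega(1) > 1/n \ge \epsilon$, contradicting the assumed failure bound; hence $\ell + s = \Omega(\log n/\log\log n)$. The main obstacle is the maximum-load lower bound, and within it the step of converting the first-moment computation into a lower bound on the probability that \emph{some} bin overflows: the bin loads are not independent, so one needs the Poisson-transfer argument (or an equivalent second-moment / negative-association argument), and one must be careful that the event in question is monotone in the direction that makes the transfer produce a lower rather than an upper bound.
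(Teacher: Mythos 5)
Your proposal is correct and follows essentially the same route as the paper's proof: reduce the failure event to the classical maximum-load lower bound for $n$ balls in $\alpha n$ bins, establish that bound via the Poisson approximation, and conclude by contraposition using $\epsilon \le 1/n$. The only (immaterial) difference is that you derive the Poisson-to-exact transfer directly by subtracting $\Pr[\mathrm{Poisson}(n) > n] \le 1/2$ for the monotone increasing event, whereas the paper applies the standard factor-of-two corollary to the complementary decreasing event.
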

\begin{proof}
If we can prove that $t = \Omega(\log n/\log \log n)$ items are hashed
into the same entry with probability at least $\epsilon$, then we immediately complete the proof
as it must be that
the single entry and stash must store all of the $t$ allocated items implying that
$\ell + s = \Omega(\log n / \log\log n)$.

We use the Poisson approximation
of entry sizes where the number
of items in each entry is modelled
using an independent
Poisson variable with mean $n/b = 1/\alpha$. We denote the
event $E$ when there exists
one bin with at least $t$ items.
We note that $E$ is monotonically
increasing in the number of
items $n$. Therefore, if we bound
the probability of $E$ using the Poisson approximation, we lose
only a factor of $2$ when
considering the true balls-and-bins distribution using Corollary 5.11 in~\cite{mitzenmacher2017probability}.

The probability that Poisson
variable with mean $1/\alpha$
is greater than $t$ is at least
$1/(\alpha^t \cdot t! \cdot e^{1/\alpha})$. The probability
that $n$ independent Poisson variables
are all at most $t$ is
$$
\left(1 - \frac{1}{\alpha^t \cdot t! \cdot e^{1/\alpha}}\right)^n \le e^{-n/(\alpha^t \cdot t! \cdot e^{1/\alpha})}.
$$
The above probability must be at most $\epsilon/2$ as we lose a factor of two from the Poisson approximation.
After re-arranging, it can be seen
that
\begin{align*}
e^{-n/(\alpha^t \cdot t! \cdot e^{1/\alpha})} \le \epsilon/2 \implies t\log t = \Omega(\log n + \log\log(1/\epsilon)).
\end{align*}
We note
that $\log\log(1/\epsilon) = O(\log\log n)$, so we only need $t \log t = \Omega(\log n)$.
Therefore, setting $t = \Omega(\log n/ \log\log n)$ suffices to complete
the proof.
\end{proof}

\section{Implications to Other Primitives}
\label{app:encoding}

Recent works have leveraged techniques
from cuckoo hashing to enable efficient encodings
of data such as cuckoo filters~\cite{fan2014cuckoo}
and oblivious key-value stores~\cite{garimella2021oblivious}.
At a high level, both primitives take 
an input set $\{(\id_1,v_1), \ldots, (\id_n,v_n)\}$
of identifiers and values. Furthermore, all $n$ values
are typically random strings.
Similar to cuckoo hashing, each of the items are assigned
to $k \ge 2$ entries according to $k$ hash functions.
For example, the $i$-th item $(\id_i, v_i)$ is assigned to
entries
$H_1(\id_i),\ldots,H_k(\id_i)$.
Then, the goal is to find assignments to all entries such that
the XOR of the $k$ entries will equal $v_i$.
If we let the resulting table be $T$, then we want
to pick $T$ such that
$$
v_i = T[H_1(\id_i)] \oplus \ldots \oplus T[H_k(\id_i)]
$$
for all $i \in [n]$.
For this setting, the failure probability corresponds
to the case that a table $T$ may not exist to satisfy
the input set.

\smallskip\noindent{\bf Lower Bounds.}
We can show that our lower bounds in Theorems~\ref{lem:err_lb} and~\ref{thm:general_lb} for fixed choices of $\ell = 1$ and $s = 0$ also apply to these primitives.
In the proof of these theorems, the failure probability
is lower bounded by computing the probability that some
subset of items $X$ end up being allocated to a subset
of entries $N(X)$ that is strictly smaller than $X$.
That is, $|X| > |N(X)|$. In this case, it is impossible
to store $|X|$ items into strictly less than $|X|$ entries.

The same argument can also be applied to the above
primitives. Here, there is a system of $|X|$ linear equations
with $|N(X)| < |X|$ variables. As the values of the input set are random,
it is impossible to solve the system of linear equations. As a result,
such a table $T$ will not exist.
As a result, we can see that the lower bounds in Theorems~\ref{lem:err_lb} and~\ref{thm:general_lb} with $\ell = 1$ and $s = 0$ also apply
to these primitives.

\smallskip\noindent{\bf Upper Bounds.}
For constructions, we note that the same arguments can be applied
to show that if the underlying cuckoo hashing construction
can allocate the $n$ items, then there must exist a table
$T$ whose entries satisfy the constraints. The main challenge is to find efficient algorithms to find
such a table $T$. To our knowledge, the best algorithm
would be using Gaussian elimination. We leave it as an open
problem to come up with algorithms for computing the table $T$
using our constructions with large numbers of hash functions.

\smallskip\noindent{\bf Robustness with Private Keys.} We also note that Filic {\em et al.}~\cite{filic2022adversarial} studied robust cuckoo filters
in adversarial environments.
However, this work considered a slightly different model than our work. In particular,
it was shown that cuckoo filters may be constructed to be robust assuming that
a private key can be kept secret from the adversary.

\section{Cuckoo Hashing for Volume-Hiding Multi-Maps}
\label{app:volume-hiding}

\noindent{\bf Overview of~\cite{patel2019mitigating}.}
To construct a volume-hiding multi-map, Patel {\em et al.}~\cite{patel2019mitigating} utilize cuckoo hashing in the
following way. For pair of identifier and value tuple, $(\id, \bv = (v_1,\ldots,v_k))$, we convert it into $k$ identifier and value
tuples: $(\id \mid\mid 1, v_1), \ldots, (\id \mid\mid \ell, v_k)$.
We denote $k$ as the volume of the identifier $\id$.
After flattening all pairs of identifier and value tuples into pairs of
identifier and value, the resulting flattened pairs are inserted
into a cuckoo hashing table with $k = 2$, $b = O(n)$, $\ell = 1$
and $s = O(1 + \log(1/\epsilon)/\log n)$.
To query for any identifier $\id$, one simply performs cuckoo hashing
queries for identifiers $\id \mid\mid 1, \ldots, \id \mid\mid \ell$
where $z$ is the maximum size (i.e., maximum volume) of any value tuple.
There are two options for storing the stash. In the original paper,
the stash is stored by the client and checked locally on each query.
This version has $O(z)$ query overhead but $O(1 + \log(1/\epsilon)/\log n)$ client storage. If one wishes to maintain $O(1)$ client storage,
the stash may be encrypted and outsourced to the server. Now, the client storage is $O(1)$ but the query overhead becomes $O(z + \log(1/\epsilon)/\log n)$.

\smallskip\noindent{\bf Modification using Improved Cuckoo Hashing.}
As a modification, we could utilize our cuckoo hashing scheme
with $k = O(1 + \sqrt{\log(1/\epsilon)/\log n})$, $b = O(n)$, $\ell = 1$
and $s = 0$. As there is no stash anymore, the client storage
remains $O(1)$. Instead, the query overhead now becomes
$O(z \cdot \sqrt{\log(1/\epsilon)/\log n})$.
For small maximum volumes such as $z = O(1)$, our construction
has quadratically smaller overhead than the prior construction
in~\cite{patel2019mitigating}
with $O(1)$ client storage.
As a special case, this new construction would be useful when considering
volume-hiding maps where the goal is to simply hide whether an identifier exists in the map. For this problem, the maximum volume is $z = 1$.

\section{Running Time of Construction Algorithms}
\label{sec:running_times}
Throughout this section, we will rely heavily on the
random bipartite graphs that model cuckoo hashing that
we described in Section~\ref{sec:graph}. We point readers back
to that description for more details on the bipartite graph
and the importance of perfect left matchings.

We will consider several different construction algorithms
for constructing cuckoo hashing tables. We will describe each
algorithm and then analyze the construction times.

We quickly recall the most
common usage of cuckoo hashing in cryptographic primitives.
Typically, one party will have an input set of items $X$ and
will construct a cuckoo hashing table based on the set $X$ (or an encrypted version of the set $X$).
In particular, the construction of the cuckoo hashing table
is done locally by the party and the transcript is typically
not revealed to any other parties including the adversary.
Therefore, in this section, we will consider the running time
of the construction algorithm as if it were being performed by a single
party without any privacy considerations. In that case that one is concerned about timing side channels, the construction algorithm may be padded to execute in the worst case running time.

Given the above, our goal is to bound
the time of construction algorithms
where all $n$ items are given
as input. In other words, we are
going to bound the time of inserting
all $n$ items. In contrast, most previous works
considered the time
to bound the insertion of a single item.

For convenience, we consider the most important
setting for cuckoo hashing with a large number of hash functions $k$ where stashes
are empty $s = 0$, entries may store
at most one item $\ell = 1$ and the number
of entries is linear in the
number of items $b = O(n)$. We will
consider an arbitrary number of hash functions where $k$ is a sufficiently large constant.\footnote{We restrict $k$ to be a sufficiently large constant to ensure that both random walks and BFS behave well. For small $k$, these algorithms may behave differently. For example, random walks with $k = 2$ are essentially following a fixed path.}
The analysis can be extended to
consider arbitrary stash sizes $s$, entry sizes $\ell$ and number of entries $b$.

Finally, we note that a lot of our analysis will rely on
prior works that analyze these insertion/construction
algorithms for cuckoo hashing. Most prior works
consider $k = O(1)$. For example, the bounds
for breadth first search in~\cite{fotakis2005space} proved
$O(1)$ insertion time. However, the real bound is $O(\poly(k))$ that is no longer constant for super-constant $k$. In our work, we modify the proofs of prior works
to obtain bounds with respect to arbitrary $k$.

\begin{table*}[tb]
\centering
\begin{adjustbox}{width=\textwidth}
\small
\begin{tabular}{| l | c | c |}
        \hline
      \makecell{\bf Construction Algorithm} & \makecell{\bf Expected Time} & \makecell{\bf Worst Case Time\\
      \bf with $1 - \negl(n)$ Probability}
\\ \hline
    \makecell[l]{Maximum Cardinality\\ Bipartite Matching~\cite{chen2022maximum}} & $\tilde{O}((nk)^{1 + o(1)})$ & $\tilde{O}((nk)^{1 + o(1)})$  \\ \hline
    \makecell[l]{Breadth First Search~\cite{fotakis2005space}} & $O(n \cdot \poly(k))$ &  $O(n^{1+\alpha} \cdot k)$ \\\hline
    \makecell[l]{Breadth First Search, $k = \omega(\log n)$} & $O(nk)$ &  $O(nk)$ \\\hline
     \makecell[l]{Random Walk~\cite{frieze2009analysis,fountoulakis2013insertion,frieze2019insertion,walzer2022insertion}} & $O(nk)$ & $O(nk \cdot \polylog(n)) $\\ \hline
    \makecell[l]{Local Search Allocation~\cite{khosla2013balls}} & $O(nk)$ &  $O(nk \cdot \polylog(n))$ \\\hline
\end{tabular}
\end{adjustbox}
\caption{A comparison table of construction algorithms and their expected and worst
case running times. We consider the most
important setting of sufficiently large $k$, $s = 0$, $\ell = 1$ and $b = O(n)$.}
\label{table:construct_compare}
\end{table*}

\subsection{Maximum Cardinality Bipartite Matching}

Any allocation of the items corresponds
to a matching in the bipartite graph corresponding
to the cuckoo hashing scheme. Therefore, one can re-phrase
the goal of cuckoo hashing allocation as determining whether
the maximum cardinality bipartite matching is as large
as the number of left vertices. 
Therefore, we can use any maximum cardinality
bipartite matching algorithm such as
Ford and Fulkerson~\cite{ford1956maximal}.

The most efficient one to date is due to Chen {\em et al.}~\cite{chen2022maximum}
that runs in $\tilde{O}(m^{1 + o(1)})$ time for
graphs with $m$ edges. In our
setting, we have that $m = nk$ to
get that the worst case running time
is $\tilde{O}((nk)^{1 + o(1)})$.
However, we will analyze algorithms
with better guarantees below.

\subsection{Breadth First Search}

\noindent{\bf Description.}
Next, we consider breadth first search (BFS) that incrementally inserts
a new item to a current left perfect matching.
A very convenient way to maintain perfect left matchings
corresponding to allocations is by using
the direction of edges. All edges outside of the current
matching are directed from left vertices to right vertices (prior works
have used this technique including~\cite{fotakis2005space}).
On the other hand, all edges in the matching are directed from right vertices to left vertices. When adding a new item (i.e., a left vertex), the goal of inserting the item is equivalent to finding an
alternating path from the new vertex to any right vertex
that is currently unoccupied. By the orientation of the edges,
this amounts to simply finding a path from the new left vertex
to any free right vertex.
To obtain a construction algorithm, we can run breadth first search
for all $n$ items to be allocated. Note, this is a perfect construction algorithm as a new item may be inserted if and only
if there exists a path to a free right vertex and BFS will always
find such a path if it exists.

\medskip\noindent{\bf Analysis.}
Recall that breadth first search (BFS)
may be used to incrementally insert
new items into a cuckoo hash table.
At a high level, BFS starts
from the node corresponding to the
item in the cuckoo graph and attempts
to find an augmenting path to an empty
entry node.

We will build upon the proof techniques of Fotakis, Pagh, Sanders and Spirakis~\cite{fotakis2005space}.\footnote{The paper considers
cuckoo hashing with a single shared table. However, their proofs rely on vertex expansion in the cuckoo graph that is better when considering multiple disjoint tables.}
At a high level, they proved
that BFS takes $O(\poly(k))$ expected time and $O(n^\alpha \cdot k)$
worst case time for some constant $0 < \alpha < 1$ except with probability $(2^{-\Omega(n)} + n^{-\Omega(k)})$.\footnote{The original theorem was meant to apply for all values of $k \ge 8$. As a result, they could only obtain worst case times except with $1 - n^{-\Omega(1)}$ probability.}
We formally present
the theorem below.

\begin{theorem}[\cite{fotakis2005space}]
\label{thm:bfs_prior}
Let $k = \omega(1)$, $s = 0$, $\ell = 1$ and $b = O(n)$.
For cuckoo hashing $\CH(k, b, \ell, s)$
with a BFS construction algorithm
where $\calH$ is $(nk)$-wise independent,
then a single item will be inserted using a random walk
in expected time $O(\poly(k))$ and worst case time $O(n^\alpha \cdot k)$ for some constant $0 < \alpha < 1$ except with probability
$1/2^{\Omega(n)} + 1/n^{\Omega(k)}$.
\end{theorem}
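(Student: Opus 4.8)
The plan is to re-run the Fotakis--Pagh--Sanders--Spirakis analysis of BFS insertion~\cite{fotakis2005space}, tracking the dependence on $k$ so that the failure probability scales as $1/2^{\Omega(n)} + 1/n^{\Omega(k)}$ and hence becomes negligible once $k = \omega(1)$. Throughout I take $b = cn$ with $c$ a sufficiently large constant (as in Theorem~\ref{thm:main}), and I view one BFS insertion as follows: the current allocation is a left‑perfect matching on the already‑placed items; matched edges of the cuckoo graph are oriented right‑to‑left and unmatched edges left‑to‑right; inserting the new item $x$ means running BFS from $x$ until a free right vertex is reached and then flipping the alternating path found. The cost of one insertion is $O(k)$ times the number of left vertices BFS visits, so the whole statement reduces to bounding that number, both in expectation and in the worst case over a fixed insertion.

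First I would establish strong vertex expansion: except with probability $1/2^{\Omega(n)} + 1/n^{\Omega(k)}$ over the $(nk)$‑wise independent hash function, every set $X$ of left vertices in the random graph from $\calG(n,k,b,1,0)$ with $|X|\le\gamma n$ has $|N(X)|\ge(1+\delta)|X|$, for constants $\gamma<1$, $\delta>0$. This is a first‑moment bound of exactly the shape of Lemma~\ref{lem:core} and the proof of Theorem~\ref{thm:k}: for a fixed $t$‑set the probability that its neighbourhood misses $(1+\delta)t$ cells is at most $\binom{b}{(1+\delta)t}\big((1+\delta)tk/b\big)^{kt}$, and summing $\binom{n}{t}$ times this over $2\le t\le\gamma n$ yields a series whose small‑$t$ terms contribute the $1/n^{\Omega(k)}$ part (each saved cell costs a $\mathrm{poly}(1/n)$ factor with exponent $\Theta(k)$) and whose $t=\Theta(n)$ terms contribute the $1/2^{\Omega(n)}$ part.

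Next I would observe that BFS from $x$ visits only the left vertices of the connected component $C$ of $x$, and that under the expansion event, if $|L_C|\le\gamma n$ then $C$ contains a free right vertex: otherwise every cell in $N(L_C)$ is matched, by connectivity every matched partner lies in $L_C$, so $|N(L_C)|\le|L_C|-1$ (the $-1$ since $x$ is unmatched), contradicting $|N(L_C)|\ge(1+\delta)|L_C|$. Thus BFS visits at most $|L_C|$ left vertices, and it remains to bound $|L_C|$. For the worst case I would show that, since the load $n/b$ is below the critical density of the $k$‑subtable model, the component of any fixed vertex has at most $n^{\alpha}$ left vertices for some constant $\alpha<1$ except with probability $1/2^{\Omega(n)}+1/n^{\Omega(k)}$: the expected number of connected left‑sets $S\ni x$ of size $t$ with $|N(S)|\le t-1$ is dominated by small $t$ because every left vertex beyond a spanning tree forces at least two of its $k$ choices into a cell‑set of size $O(t)$, contributing a factor $(O(t)/b)^{\Omega(k)}$. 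Feeding the same estimate into the partial BFS exploration gives $\Pr[\text{BFS visits }\ge t\text{ left vertices}]$ at most the expected number of such deficient sets of size $t$ containing $x$; this sums to $O(1)$, so BFS visits $O(1)$ left vertices in expectation, one insertion costs $O(\mathrm{poly}(k))$ in expectation, and adding the trivial $O(n)$ bound on the $\big(1/2^{\Omega(n)}+1/n^{\Omega(k)}\big)$‑probability bad event perturbs this only negligibly.

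The main obstacle is the bookkeeping in the last step (and, to a lesser extent, in the expansion bound): the original arguments are stated for $k=O(1)$ and only give a $1/n^{\Omega(1)}$ slack, whereas here every union bound over ``deficient'' configurations must lose a factor that is $\mathrm{poly}(1/n)$ with exponent growing linearly in $k$, so that the cumulative failure probability is $1/n^{\Omega(k)}$ and is negligible precisely because $k=\omega(1)$. Verifying that for $b=cn$ with $c$ a large enough constant the relevant first‑moment series over components still converges for all super‑constant $k$ — and doing so uniformly — is where essentially all the work goes; the matching‑theoretic pieces are routine adaptations of Lemma~\ref{lem:core}.
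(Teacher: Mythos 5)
The paper does not actually prove this statement: it is imported from Fotakis--Pagh--Sanders--Spirakis, with footnotes indicating that the original expansion-based analysis is to be transferred to disjoint sub-tables and re-interpreted for super-constant $k$. Your general plan --- a vertex-expansion lemma proved by a first-moment bound in the style of Lemma~\ref{lem:core}, followed by a BFS analysis on top of it --- is therefore the right genre, and your expansion lemma and the Hall-type observation that a component containing no free cell must be deficient are both sound.

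The worst-case part of your argument, however, has a genuine gap. You bound the cost of one BFS insertion by the number of left vertices in the connected component of $x$ and assert that this is at most $n^{\alpha}$ because ``the load $n/b$ is below the critical density.'' For $b = O(n)$ and $k = \omega(1)$ this is false: the cuckoo bipartite graph has $nk = \omega(n)$ edges on $O(n)$ vertices, the item-to-item branching factor of the exploration process is $\Theta(k^2 n/b) = \omega(1)$, and the graph is far above the percolation threshold, so the component of a typical item contains $\Theta(n)$ left vertices. Correspondingly, the expected number of connected left-sets of size $t$ containing $x$ grows like $\Theta(k^2)^t$ and does not decay; what decays is the number of \emph{deficient} connected sets, which is a different quantity, and an excess edge of a merely connected witness forces only one extra hash choice into the current cell set, contributing a factor $O(tk/b)$ rather than $(O(t)/b)^{\Omega(k)}$ --- the $\Omega(k)$ exponent comes only from the deficiency condition. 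The object you must bound is not the component but the set of vertices BFS visits \emph{before reaching a free cell}: either via witness structures (if BFS has fully processed $t$ vertices without success, the processed prefix and its matched partners form a connected set whose neighborhood is matched into it, and the first-moment bound on such near-deficient sets decays in $t$), or via geometric growth of the BFS ball under $(1+\delta)$-expansion combined with the fact that a constant fraction of cells is always free. Your expected-time paragraph already gestures at the witness-set route; the worst-case bound needs the same route, together with care about the BFS frontier, since the last explored level is not itself certified deficient.
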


\begin{theorem}
\label{thm:bfs}
Let $k = \omega(1)$, $s = 0$, $\ell = 1$ and $b = O(n)$.
For cuckoo hashing $\CH(k, b, \ell, s)$
with a BFS construction algorithm
where $\calH$ is $(nk)$-wise independent,
the following holds:
\begin{itemize}
    \item The expected construction time is $O(n \cdot \poly(k))$.
    \item The worst case construction time is $O(n^{1+\alpha} \cdot k)$ except
    with negligible probability.
\end{itemize}
\end{theorem}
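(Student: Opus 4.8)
The plan is to obtain both bounds by running the BFS construction incrementally --- inserting the $n$ items one at a time and maintaining a left perfect matching --- and then aggregating the per-insertion guarantee of Theorem~\ref{thm:bfs_prior} over the $n$ steps. Write $T_i$ for the time spent inserting the $i$-th item into the allocation of the first $i-1$ items. Deterministically $T_i = O(ik) = O(nk)$, since a single BFS explores only the connected component of the new vertex, which contains at most $i$ degree-$k$ left vertices together with their $O(i)$ incident right vertices. For the probabilistic estimates I apply Theorem~\ref{thm:bfs_prior} with the item count taken to be $i$: the cuckoo graph induced by the first $i$ items is distributed as $\calG(i,k,b,\ell,s)$, and the size of the BFS exploration is controlled by the expansion of this graph (independently of which matching the first $i-1$ items currently occupy), so the theorem yields $\E[T_i] = O(\poly(k))$ and $T_i = O(i^\alpha k)$ except with probability at most $1/2^{\Omega(i)} + 1/i^{\Omega(k)}$, where $0 < \alpha < 1$ is the constant from that theorem.

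\textbf{Expected time.} By linearity of expectation, $\E\!\left[\sum_{i=1}^n T_i\right] = \sum_{i=1}^n \E[T_i]$. Using $\E[T_i] = O(\poly(k))$ on the good event and the deterministic fallback $T_i = O(nk)$ on its complement gives $\E[T_i] \le O(\poly(k)) + (1/2^{\Omega(i)} + 1/i^{\Omega(k)})\cdot O(nk)$. Summing over $i$, the geometric series contributes $O(nk)$, and since $k$ is a sufficiently large constant $\sum_{i\ge 1} i^{-\Omega(k)} = O(1)$ as well, so these correction terms total $O(nk) = O(n\poly(k))$. Hence the expected construction time is $O(n\cdot\poly(k))$.

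\textbf{Worst case with probability $1-\negl(n)$.} For $i \le n^\alpha$ the deterministic bound already gives $T_i = O(ik) = O(n^\alpha k)$, so these steps never exceed the target. For $i > n^\alpha$, Theorem~\ref{thm:bfs_prior} bounds $T_i$ by $O(i^\alpha k) = O(n^\alpha k)$ except with probability at most $1/2^{\Omega(i)} + 1/i^{\Omega(k)} \le 1/2^{\Omega(n^\alpha)} + 1/n^{\alpha\cdot\Omega(k)}$. A union bound over the at most $n$ steps with $i > n^\alpha$ shows that some insertion exceeds $O(n^\alpha k)$ with probability at most $n/2^{\Omega(n^\alpha)} + n^{1-\alpha\cdot\Omega(k)}$, which is $\negl(n)$: the first term is subexponentially small, and $k = \omega(1)$ makes $\alpha\cdot\Omega(k) - 1 = \omega(1)$. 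On the complementary event every $T_i = O(n^\alpha k)$, hence the total construction time is $O(n^{1+\alpha}k)$.

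\textbf{Main obstacle.} The arithmetic is routine; the delicate point is invoking Theorem~\ref{thm:bfs_prior} correctly at each intermediate step. One must argue that its single-insertion guarantee applies to inserting the $i$-th item into an \emph{arbitrary} legal placement of the first $i-1$ items rather than a freshly sampled configuration, and that its failure probabilities, stated in terms of the current number of items $i$, remain meaningful when $i$ is small. Both concerns are handled by the facts that the BFS cost is governed purely by the expansion of the cuckoo graph on the items present (which is exactly what the proof of Theorem~\ref{thm:bfs_prior} establishes) and that for small $i$ the deterministic $O(ik)$ bound is already strong enough, so no probabilistic estimate is needed there.
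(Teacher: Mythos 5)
Your proposal is correct and follows essentially the same route as the paper: aggregate the per-insertion guarantee of Theorem~\ref{thm:bfs_prior} over the $n$ insertions, using linearity of expectation for the expected bound and a union bound (negligible precisely because $k = \omega(1)$) for the worst-case bound. The paper's own proof is this argument stated tersely, invoking Theorem~\ref{thm:bfs_prior} with parameter $n$ directly at every step, so your re-parameterization by $i$ together with the deterministic $O(ik)$ fallback for small $i$ is extra care filling in details the paper treats as immediate, not a different method.
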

\begin{proof}
The expected time follows trivially
from Theorem~\ref{thm:bfs_prior}.
As $k = \omega(1)$, the worst case
time of any insertion is $O(n^\alpha \cdot k)$
from Theorem~\ref{thm:bfs_prior} meaning
the worst case time is $O(n^{1+\alpha} \cdot k)$.
\end{proof}

To our knowledge, it remains an open problem to prove
worst case bounds for BFS with high probability that are
sub-polynomial.

We note that we can prove better bounds
for BFS for larger values of $k = \omega(\log n)$.
In particular, we show that BFS will terminate
after the first step except with negligible probability. As a result, we can prove optimal
worst case construction times of $O(nk)$.

\begin{theorem}
Let $k = \omega(\log n)$, $s = 0$, $\ell = 1$ and $b = O(n)$.
For cuckoo hashing $\CH(k, b, \ell, s)$
with a BFS construction algorithm
where $\calH$ is $(nk)$-wise independent,
the following holds:
\begin{itemize}
    \item The expected construction time is $O(nk)$.
    \item The worst case construction time is $O(nk)$ except
    with negligible probability.
\end{itemize}
\end{theorem}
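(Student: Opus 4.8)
The plan is to show that when $k = \omega(\log n)$, with high probability every item, at the moment it is inserted by BFS into a table with $b = O(n)$ entries of size $\ell = 1$, has at least one of its $k$ assigned entries currently empty. If this holds for all $n$ insertions simultaneously, then each BFS call terminates after exploring the neighbourhood of the starting vertex --- i.e.\ after $O(k)$ work --- and no augmenting path of length greater than one is ever needed. Summing over all $n$ items gives the claimed $O(nk)$ expected and worst-case construction time. So the whole argument reduces to a single union-bound estimate over the $n$ insertion steps.

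First I would fix the worst-case ordering of insertions and consider the $i$-th item $x_i$ being inserted when $i-1 \le n-1$ items already occupy distinct entries. Because we use $k$ disjoint sub-tables each of size $b/k = \Theta(n/k)$, the item $x_i$ picks one uniformly random entry from each sub-table via $H_1,\dots,H_k$; since $H$ is $(nk)$-wise independent and at most $nk$ hash evaluations occur, these choices are fully independent and uniform. Within a single sub-table of $\Theta(n/k)$ entries, at most $n-1$ of them are occupied across \emph{all} sub-tables, so the fraction of occupied entries in any one sub-table is at most some constant $\rho < 1$ once $b \ge (1+\delta)n$ for appropriate $\delta$ (this is where $b = O(n)$ with a large enough constant is used; if $b$ is merely $\ge n$ one gets $\rho$ close to $1$ but still bounded away from $1$ by choosing the constant). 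Hence $\Pr[\text{all } k \text{ entries of } x_i \text{ occupied}] \le \rho^k$. A union bound over the $n$ steps gives failure probability at most $n\rho^k = n \cdot 2^{-\Omega(k)} = n \cdot n^{-\omega(1)} = \negl(n)$, using $k = \omega(\log n)$.

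On the event that this bad case never occurs, every BFS call inspects the $k$ entries of the current item, finds an empty one, and places the item there --- $O(k)$ time per item, $O(nk)$ total --- which also dominates the expected time since the complementary $\negl(n)$-probability event contributes at most $\negl(n)\cdot O(n^2 k) = o(1)$ additional expected cost (any insertion is trivially $O(nk)$ in the worst case, e.g.\ by falling back to BFS on the full graph, or by invoking Theorem~\ref{thm:bfs_prior}). The main obstacle, and the only place care is needed, is pinning down the constant $\rho < 1$: one must be precise that the number of \emph{occupied} entries in a given sub-table is bounded by the total number of previously inserted items $n-1$ (not $n-1$ per sub-table), so that the occupancy ratio is $(n-1)/(b/k) \cdot$ (at most $1$) --- wait, that is $\Theta(k)$, not a constant. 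The correct statement is: the probability that a \emph{fixed single} sub-table entry chosen for $x_i$ is occupied is at most $(n-1)/(b/k) $ only if occupancy were concentrated there; instead, averaged over the random choice, $\Pr[H_j(x_i)\text{ occupied in sub-table }j] \le (\#\text{occupied in sub-table }j)/(b/k)$ and $\sum_j (\#\text{occupied in sub-table }j) \le n-1$, so $\prod_j \Pr[\cdot] $ is maximized when spread evenly, giving $\big((n-1)/b\big)^k \le \big(1/(1+\delta)\big)^k = 2^{-\Omega(k)}$. This is exactly the estimate from the proof of Theorem~\ref{thm:k} / Lemma~\ref{lem:core} specialized to $t = $ a single new vertex against an occupied set, and I would cite that computation rather than redo it.
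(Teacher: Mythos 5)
Your proposal is correct and follows essentially the same route as the paper's proof: bound the probability that all $k$ entries chosen for an item (one per disjoint sub-table, each of size $b/k$) are already occupied by $\prod_j m_j/(b/k) \le ((n-1)/b)^k \le (1/\alpha)^k = \negl(n)$ for $k=\omega(\log n)$, so that every BFS call terminates at its first level in $O(k)$ time. Your write-up is in fact slightly more careful than the paper's on two points it leaves implicit --- the union bound over all $n$ insertion steps and the negligible contribution of the bad event to the expectation --- and it correctly fixes the paper's typo of $n_1=\cdots=n_k=k/n$ (which should be $n/k$).
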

\begin{proof}
Consider the BFS algorithm at the first step.
Assuming that $b = \alpha n$ for some constant $\alpha > 1$, we know that the hash
table is only half full. Suppose the number
of entries in the first table is $n_1$, the number of entries
in the second table is $n_2$ and so forth
such that $n_1+\ldots +n_k \le n$.
The probability that the $k$ different entries are
all occupied is at most $(n_1 /(\alpha n/k)) \cdots (n_k / (\alpha n/k))$
that is maximized by setting $n_1 = \ldots = n_k = k/n$
to get an upper bound on the probability of $(1/\alpha)^k$. As $k = \omega(\log n)$ and $\alpha$ is a constant strictly greater than $1$, the probability that the first step of BFS fails to insert the item is negligible. As a result, we get that the worst case time to insert a single item is $O(k)$
except with negligible probability.
This implies the expected and worst
case time (except with negligible probability)
to insert $n$ items is $O(nk)$.
\end{proof}

\subsection{Random Walk}

\noindent{\bf Description.}
The last insertion algorithm 
considers random walks. Using the same
idea of find alternating paths
from BFS, we could instead try to find such a path using a random walk.
In general, random walks are considered
more efficient as they require less memory usage compared to BFS as one
does not need to remember the prior
paths. To do this, as the random walk
visits entries, it will evict items
from full entries and only remember that item that it will try to place next.

The main challenge is that a
random walk can, in theory, run
forever without terminating even if an
alternating path exists. In the past,
cuckoo hashing with random walks
will bound the length of the random
walk before simply terminating and putting the item in the stash. However,
this does not work for us if
we wish to obtain a perfect construction algorithm.

Instead, we can obtain a perfect construction algorithm using a different approach. In particular, we
can let the random walk algorithm
execute until trying a path
of length $O(n)$. If such a path has not yet been found, we can
instead execute a BFS to find the
alternating path that also has worst
case $O(nk)$ overhead.

\medskip\noindent{\bf Worst Case Analysis.}
We use the results of
Fountoulakis, Panagiotou and Steger~\cite{fountoulakis2013insertion}
that consider the running time of inserting a single item using a random walk.
We could also use the results of Frieze, Melsted and Mitzenmacher~\cite{frieze2009analysis}
as we ignore additional log factors
and consider larger values of $k$.
At a high level, they show
a random walk must traverse $\polylog(n)$
nodes before finding an augmenting path.

\begin{theorem}[\cite{fountoulakis2013insertion}]
Let $k \ge 3$, $s = 0$, $\ell = 1$ and $b = O(n)$.
For cuckoo hashing $\CH(k, b, \ell, s)$
where $\calH$ is $(nk)$-wise independent,
then a single item will be inserted using a random walk
in worst case time $O(\polylog(n))$ except with probability
$1/n$.\footnote{In the original paper~\cite{fountoulakis2013insertion},
the authors showed this is true except with probability $1/n^\alpha$ for $\alpha > 0$. However, by increasing the running time by poly-logarithmic factors, one
can drive the probability down to $1/n$.}
\end{theorem}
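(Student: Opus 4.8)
The plan is to derive the statement directly from the insertion-time analysis of Fountoulakis, Panagiotou and Steger~\cite{fountoulakis2013insertion}, adapted to the disjoint-sub-table variant, followed by a tail-amplification step. First I would recall the combinatorial picture underlying random-walk insertion: the cuckoo table corresponds to a graph $G$ drawn from $\calG(n, k, b, 1, 0)$, and inserting an item is a walk that starts at its left vertex; at each step, if one of the $k$ currently assigned right vertices is free, the item is placed there and the walk halts, and otherwise one of the $k$ occupied neighbours is chosen uniformly at random, its resident item is evicted, the new item takes its slot, and the walk continues with the evicted item. Completing insertion within $L$ steps is therefore equivalent to this walk reaching a free right vertex within $L$ moves, and since $b = \Theta(n)$ the table is at most a constant fraction full, so free vertices are abundant. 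The $(nk)$-wise independence of $\calH$ ensures that the $O(nk)$ hash values the walk can touch behave as fully independent, so a fully-random analysis transfers.

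Second, I would invoke~\cite{fountoulakis2013insertion} (or, up to extra polylogarithmic factors, Frieze--Melsted--Mitzenmacher~\cite{frieze2009analysis}): for $k \ge 3$, the walk reaches a free vertex within $O(\polylog n)$ steps except with probability $n^{-\alpha}$ for some constant $\alpha = \alpha(k) > 0$. Their proof rests on the rapid growth of the alternating-path exploration tree rooted at the new item — essentially the same vertex-expansion phenomenon for random cuckoo graphs that drives Lemmas~\ref{lem:hall} and~\ref{lem:core} — so that after $O(\log n)$ levels the explored neighbourhood is large enough to contain a free right vertex with overwhelming probability. Crucially, using $k$ disjoint sub-tables, as in $\calG(n, k, b, 1, 0)$, only improves the relevant expansion constants relative to the single-shared-table model analysed in~\cite{fountoulakis2013insertion}, so their bounds apply here a fortiori.

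Third, I would amplify the failure probability from $n^{-\alpha}$ down to $1/n$. The probability that the walk fails to hit a free vertex decays with its length bound, so running the walk for a $\polylog(n)$ factor more steps — which keeps the worst-case running time $O(\polylog n)$ — drives the failure probability below $1/n$. If one additionally wants a genuinely halting (perfect) construction algorithm, one caps the walk at $O(n)$ steps and falls back to BFS, which always terminates in $O(nk)$ worst-case time; on the $1/n$-probability bad event the walk is simply absorbed into this fallback, which leaves the stated bound intact.

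The main obstacle is bookkeeping rather than a new idea: one must check that the random-tree expansion estimates of~\cite{fountoulakis2013insertion}, stated for a single shared table and a fixed constant $k$, hold verbatim — indeed with better constants — for the distribution $\calG(n, k, b, 1, 0)$, and that their tail bound is of the amplifiable, length-dependent form needed to reach $1/n$ rather than merely an $n^{-\alpha}$ statement with no control over longer walks. Both hold, but the transfer of the expansion lemmas to the disjoint-table setting is the step that needs the most care.
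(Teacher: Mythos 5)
Your proposal matches the paper's treatment: the theorem is imported from Fountoulakis--Panagiotou--Steger, the transfer to the disjoint-sub-table model is justified by observing that vertex expansion only improves (the paper makes the identical remark in a footnote for the BFS case), and the $1/n^\alpha$-to-$1/n$ amplification by lengthening the walk by a polylogarithmic factor is exactly the content of the paper's own footnote. Correct, and essentially the same approach.
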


Using this result, we can immediately derive
the following theorem that bounds
the expected and worst case running times
of random walks.

\begin{theorem}
Let $k \ge 3$, $s = 0$, $\ell = 1$ and $b = O(n)$.
For cuckoo hashing $\CH(k, b, \ell, s)$
with a random walk construction algorithm
where $\calH$ is $(nk)$-wise independent,
then the worst case construction time is $O(nk \cdot \polylog(n))$ except
    with negligible probability.
\end{theorem}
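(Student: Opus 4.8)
The plan is to bootstrap the single-insertion bound stated just above (the Fountoulakis--Panagiotou--Steger estimate) into a bound on the cost of the whole construction, exploiting the fact that the construction algorithm caps each random walk at length $O(n)$ and then falls back to BFS. Throughout, I will charge $O(k)$ to each step of a random walk, since a step evaluates the $k$ hash functions of the current item and inspects the $k$ candidate entries, and $O(nk)$ to a single BFS fallback, as argued in the BFS analysis above.

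First I would amplify the per-insertion guarantee. The cited theorem says one insertion completes via a random walk of length $O(\polylog n)$ except with probability $1/n$, and, as its footnote notes, this probability can be decreased further by spending more steps; concretely, performing $r$ additional \emph{independent} restarts of the walk (each a fresh $O(\polylog n)$-length walk using new coins on the same, already-fixed cuckoo graph) multiplies the failure probability, driving it to $n^{-\Omega(r)}$. Choosing $r = \omega(1)$, e.g.\ $r = \log n$ for a total walk budget of $O(\polylog n)$ per item, makes the failure probability of a single insertion negligible in $n$, so a ``fast'' insertion costs $O(k \cdot \polylog n)$.

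Next I would glue this to the BFS fallback and union bound over the $n$ items. If the capped, restarted random walk fails to find an augmenting path for some item, the algorithm runs BFS, which finds the path (or certifies no allocation exists) in worst-case time $O(nk)$. Taking a union bound over the $n$ insertions, except with probability $n \cdot \negl(n) = \negl(n)$ every insertion finishes inside its random-walk phase, so the total construction time is $n \cdot O(k \polylog n) = O(nk \cdot \polylog n)$, which is the claimed bound on the complementary $(1-\negl(n))$-probability event.

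The hard part will be making the amplification step rigorous. The single-insertion bound is stated over the randomness of both $\calH$ and the walk, so to justify that independent restarts genuinely amplify, I would first use a Markov-type argument to show that, except on a small set of ``bad'' hash functions, the per-walk failure probability is at most $n^{-\Omega(1)}$ \emph{for every one of the $n$ configurations} reached during the construction, and only then invoke independence of restarts conditioned on the fixed graph. The residual bad events --- a dense subgraph near an insertion point, or the $\Theta(1/n)$-probability event that no allocation exists at all --- are not themselves negligible, but they are harmless here: they are absorbed into the $\negl(n)$-probability branch of the statement, since whenever one occurs the algorithm simply pays $O(nk)$ for the affected BFS call, and the claimed bound still holds on the event (of probability $1-\negl(n)$) that no insertion ever triggers the fallback.
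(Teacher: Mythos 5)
There is a genuine gap in your amplification step, and it is exactly at the point you flagged as ``the hard part.'' The cited single-insertion bound gives failure probability $1/n$ over the \emph{joint} randomness of $\calH$ and the walk. A Markov split of this joint bound faces an unavoidable trade-off: if the ``bad hash function'' event has probability $\delta$, the conditional per-walk failure probability you can guarantee elsewhere is only $(1/n)/\delta$. To get a conditional failure probability of $n^{-c}$ that restarts can amplify, you must take $\delta \ge n^{c-1}$, which is polynomially large, not negligible. So after amplification and the union bound you are still left with a $\mathrm{poly}(1/n)$-probability set of hash functions on which you control nothing --- on such an $\calH$, arbitrarily many of the $n$ insertions may fall back to BFS, giving total time up to $O(n^2 k)$. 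Your closing sentence tries to have it both ways: you concede these residual events are ``not themselves negligible'' but then claim they are ``absorbed into the $\negl(n)$-probability branch,'' and the fallback argument (``the algorithm simply pays $O(nk)$ for the affected BFS call'') only rescues the $O(nk\cdot\polylog(n))$ bound if the \emph{number} of such BFS calls is $O(\polylog(n))$ --- which is precisely the quantity your argument never controls. (There is also a secondary issue: a failed cuckoo random walk has already evicted and displaced items, so ``independent restarts on the same fixed configuration'' require rolling back the walk, which you would need to build into the algorithm.)

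The paper's proof sidesteps amplification entirely by bounding the count of slow insertions rather than eliminating them. It sets $X_i = 1$ iff the $i$-th insertion exceeds $O(\polylog(n))$ time, uses $\Pr[X_i = 1] \le 1/n$, and applies a Chernoff bound to conclude $X = X_1 + \cdots + X_n \le \log^2 n$ except with negligible probability. Each slow insertion is then charged its worst-case $O(nk)$ BFS-fallback cost, giving total time $(n - X)\cdot O(\polylog(n)) + X \cdot O(nk) = O(nk\cdot\polylog(n))$. The lesson is that the theorem tolerates up to $\polylog(n)$ expensive insertions, so you do not need each insertion to succeed with overwhelming probability --- you only need the number of failures to concentrate. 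If you want to repair your write-up, replace the amplification-plus-union-bound step with this counting argument (and, if you want to be more careful than the paper, address the dependence among the $X_i$, which share the hash function and the evolving table state).
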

\begin{proof}
We note that we can
consider random variables $X_i = 1$ if and
only if the $i$-th item that is inserted
exceeds time $O(\polylog(n))$.
We know that $\Pr[X_i = 1] \le 1 - 1/n$.
If we let $X = X_1 + \ldots + X_n$, we know
that $\Pr[X > \log^2 n] \le \negl(n)$ by
Chernoff Bounds. Therefore, the total
construction time except with negligible
probability is $
(n - X) \cdot O(\polylog(n)) + X \cdot O(nk) = O(nk \cdot \polylog(n))$ as $X \le \log^2 n$
except with negligible probability.
\end{proof}

\noindent{\bf Expected Analysis.}
We note the same papers~\cite{frieze2009analysis,fountoulakis2013insertion} that we relied upon above
proved a $\polylog(n)$ expected time bound
for random walks. Instead, we can move to more recent
works~\cite{frieze2019insertion,walzer2022insertion} that proved constant bounds for the
expected times of random walks. Again, they proved
bounds for constant $k$, but one can re-interpret their
results for arbitrary $k$.

\begin{theorem}[\cite{walzer2022insertion}]
Let $k \ge 3$, $s = 0$, $\ell = 1$ and $b = O(n)$.
For cuckoo hashing $\CH(k, b, \ell, s)$ with a random
walk construction algorithm where $\calH$ is $(nk)$-wise independent, then  a single item will be inserted using a random walk in expected time $O(k)$.
\end{theorem}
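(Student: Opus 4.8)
The plan is to derive the statement directly from the analysis of random-walk insertion in $k$-ary cuckoo hashing due to Walzer~\cite{walzer2022insertion} (building on~\cite{frieze2009analysis,fountoulakis2013insertion}), the only new point being to make the dependence on $k$ explicit, since here $k$ may be super-constant. Recall the procedure on a random graph drawn from $\calG(n, k, b, 1, 0)$: the walk maintains a partial left-perfect matching, and to place a new item $x$ it picks one of $x$'s $k$ candidate buckets uniformly at random; if that bucket is free the walk terminates, otherwise it evicts the current resident, installs $x$, and continues the walk from the evicted item (never immediately returning to the bucket just vacated). Let $T$ be the number of evictions performed. The cost of one eviction step is $O(k)$: it requires evaluating $H_1,\dots,H_k$ on the item currently being carried and inspecting the corresponding $k$ buckets. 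Hence the running time of one insertion is $O(k)\cdot(1+T)$ and it suffices to show $\E[T] = O(1)$, uniformly in $k$.

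For the bound on $\E[T]$ I would reuse Walzer's coupling of the neighbourhood exploration around $x$ in the cuckoo graph with a subcritical branching-type process: a visited occupied bucket ``spawns'' at most $k-1$ further buckets (the remaining candidate buckets of the item it ejects), but each such bucket is occupied only with probability about $n/b$, so the effective branching factor is $(k-1)\cdot n/b$, which is a constant strictly below $1$ once the hidden constant in $b = O(n)$ is chosen large enough (this is the standard random-walk threshold condition, satisfied for all $k\ge 3$ in this regime). This gives a geometric tail $\Pr[T\ge t]\le C\rho^{t}$ with $C,\rho$ depending on $n/b$ but not on $k$, whence $\E[T]=\sum_{t\ge 1}\Pr[T\ge t]=O(1)$. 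The $(nk)$-wise independence of $\calH$ is exactly what is needed to treat the at most $nk$ hash values touched over the whole construction as fully independent and uniform, so the coupling is valid. Combining, one insertion costs $O(k)\cdot O(1)=O(k)$ in expectation.

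The main obstacle is the bookkeeping in the previous paragraph: Walzer's statement is for fixed $k$ with unspecified constants, so one must re-run his argument keeping track of how each constant scales with $k$ and confirm that none of them degrades. The reassuring fact is that increasing $k$ only improves the relevant expansion of the cuckoo graph---with $k$ disjoint sub-tables any set $X$ of at most $k$ left vertices already has more than $|X|$ neighbours, and more generally the neighbourhood bounds of Lemma~\ref{lem:core} with $\ell=1,\,s=0$ sharpen as $k$ grows---so the branching factor and tail rate can be taken independent of $k$, and $k$ enters the final bound only through the $O(k)$ per-step cost.
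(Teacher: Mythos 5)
The paper does not actually prove this statement: it is imported verbatim from \cite{walzer2022insertion}, with only the one-sentence remark that the cited bounds are stated for constant $k$ and must be ``re-interpreted'' for arbitrary $k$. Your proposal is therefore fleshing out something the paper leaves implicit, and your overall plan --- defer the core argument to Walzer, and check that the only $k$-dependence surviving in the final bound is the $O(k)$ per-step cost of evaluating the hash functions, so that $O(k)\cdot \E[T]$ with $\E[T]=O(1)$ gives the claim --- is exactly the right shape and matches what the paper intends.

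There is, however, one concrete step in your sketch that fails in precisely the regime the paper cares about. You argue subcriticality via a branching process in which an occupied bucket spawns at most $k-1$ children, each occupied with probability about $n/b$, giving branching factor $(k-1)\cdot n/b$, and you claim this is below $1$ ``once the hidden constant in $b=O(n)$ is chosen large enough.'' But $b=\alpha n$ for a \emph{fixed} constant $\alpha$ throughout the paper, while $k$ is allowed to be super-constant (indeed $k=\omega(\log\lambda)$ in the robustness sections, and the stated purpose of this appendix is to handle super-constant $k$); then $(k-1)n/b=(k-1)/\alpha\to\infty$ and your process is supercritical, so the geometric tail does not follow. The $(k-1)\cdot n/b$ quantity is the branching factor of the \emph{BFS} exploration, not the relevant quantity for the random walk. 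For the random walk the step-failure probability is the probability that the single uniformly chosen candidate bucket of the carried item is occupied, which is about $n/b=1/\alpha<1$ \emph{independently of $k$} (and the situation only improves as $k$ grows, since the threshold load factor for $k$-ary cuckoo hashing increases with $k$). Replacing your branching-factor computation with this per-step termination probability --- and then invoking Walzer's coupling to handle the correlations that make the naive geometric argument non-rigorous --- repairs the argument and yields a tail rate genuinely independent of $k$, which is what the $O(k)$ bound requires.
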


By linearity of expectation, the expected construction time
for random walks immediately follows as $O(nk)$.

\subsection{Local Search Allocation}

\noindent{\bf Description.}
Khosla~\cite{khosla2013balls} presented a construction
algorithm for the case of $k \ge 3$, $\ell = 1$, $s = 0$ and
$b = O(n)$. Unlike BFS and random walks, this is a construction algorithm that takes the entire input at once and aims
to create a hash table.

At a high level, each of the $b$
entries maintains an integer label. Initially,
all labels are $0$. Whenever an item is inserted,
it will search amongst its $k$ options to find the minimum
label entry. The item will be placed into that entry. Furthermore, the label of that entry is increased to be
one more than the minimum amongst the other $k - 1$ options for
that item. If that entry was already occupied, we repeat
this algorithm by evicting the original item in the entry.
This is considered a construction algorithm as one must
maintain the labels throughout the construction. If one is content with storing all labels, it is possible to convert
this algorithm into an insertion algorithm as well.

In comparison with BFS, we note that both algorithms require
linear storage. This algorithm provides stronger expected and worst-case
guarantees compared to BFS. Compared to random walk,
this algorithm requires more storage but can still provide stronger
worst case guarantees. Furthermore, experiments in the original
paper~\cite{khosla2013balls} showed that the algorithm
outperforms random walks in terms of running time.

\medskip\noindent{\bf Analysis.} In the paper~\cite{khosla2013balls}, it was proven that the construction
of this algorithm is $O(nk)$ except with $1/\poly(n)$ probability. As a result, we can immediately get that the expected running time is also $O(nk)$. Similarly, it was shown
that the maximum contribution to the total running time by any vertex (i.e., the label) is
at most $O(\log n)$ with $1/\poly(n)$ probability. As a result, we can obtain a worst case bound as well by showing
that at most $\polylog(n)$
such vertices will exceed $O(\log n)$ except with negligible probability.

\subsection{Robust Construction Algorithms}
In the prior sections, we only analyze
the running times for construction
algorithms with the assumption
that items to be constructed
are chosen independent of the hash functions.
This is necessary as a computationally
unbounded adversary can find
sequences that incur the worst case construction times for random walks and BFS.
In fact, such powerful adversaries can find
sets of items that will cause the construction algorithm to fail.
If one wishes to consider cuckoo hashing
construction with respect to compromised
randomness known by an adversary,
then the best option is to use
deterministic algorithms. To our knowledge,
the best deterministic algorithm
for maximum bipartite matching is
$(nk)^{4/3 + o(1)}$ by Kathuria, Liu and Sidford~\cite{kathuria2022unit}.

\end{document}